\DeclarePairedDelimiter{\ceil}{\lceil}{\rceil}
\definecolor{codegreen}{rgb}{0,0.6,0}
\definecolor{codegray}{rgb}{0.5,0.5,0.5}
\definecolor{codepurple}{rgb}{0.58,0,0.82}
\definecolor{backcolour}{rgb}{0.95,0.95,0.92}
\lstdefinestyle{mystyle}{
    backgroundcolor=\color{white},   
    commentstyle=\color{codegreen},
    keywordstyle=\color{magenta},
    numberstyle=\tiny\color{codegray},
    stringstyle=\color{codepurple},
    basicstyle=\ttfamily\footnotesize,
    breakatwhitespace=false,         
    breaklines=true,                 
    captionpos=b,                    
    keepspaces=true,                 
    numbers=left,                    
    numbersep=5pt,                  
    showspaces=false,                
    showstringspaces=false,
    showtabs=false,                  
    tabsize=2
}
\theoremstyle{definition}
\DeclareRobustCommand{\bbone}{\text{\usefont{U}{bbold}{m}{n}1}}
\renewcommand{\S}{\mathcal{S}}
\def\mc {\mathcal}
\newcommand{\prob}{\mathbb{P}}
\newcommand{\eigval}[2]{\lambda_{#1}(#2)}
\newcommand{\floor}[1]{\ensuremath\lfloor #1\rfloor}
\DeclarePairedDelimiterX\bk[2]{\langle}{\rangle}{#1 \delimsize\vert #2}
\DeclarePairedDelimiterX\kb[2]{\vert }{\vert }{#1 \delimsize\rangle\langle#2}
\newcommand*{\Zero}{\textbf{0}}
\newcommand{\blockdiagonal}[3]{
\begin{bmatrix}
    #1      & \Zero     & \cdots    & \Zero     \\
    \Zero   & #2        & \cdots    & \Zero     \\
    \vdots  & \vdots    & \ddots    & \vdots    \\
    \Zero   & \Zero     & \cdots    & #3        \\
\end{bmatrix}
}
\newcommand{\Ei}[1]{\mathrm{Ei}\left(#1\right)}
\def\={\;=\;} \def\+{\,+\,}
\newcommand{\E}[1]{\underset{U \sim #1}{\mathbb{E}}}
\newcommand{\err}[1]{{\color{black}#1}}
\newtheorem{theorem}{Theorem}
\newtheorem{lemma}[theorem]{Lemma}
\newtheorem{corollary}[theorem]{Corollary}
\newtheorem{example}{Example}
\newtheorem{fact}{Fact}
\newtheorem{Conjecture}{Conjecture}
\apptocmd\appendix{%
  \addcontentsline{toc}{chapter}{Appendix}%
  \counterwithin{equation}{section}%
  \counterwithin{figure}{section}%
  \counterwithin{table}{section}%
}{}{}
\begin{document}

\title{A random matrix model for random approximate $t$-designs }
\author{Piotr Dulian$^1$ and Adam Sawicki $^{1}$}

\date{%
    $^1$ Center for Theoretical Physics, Polish Academy of Sciences,\\ Al. Lotnik\'ow 32/46, 02-668 Warszawa, Poland,
}
\maketitle

\begin{abstract}
For a Haar random set $\mathcal{S}\subset U(d)$ of quantum gates we consider the uniform measure $\nu_\mc{S}$ whose support is given by $\mathcal{S}$. The measure $\nu_\mc{S}$ can be regarded as a $\delta(\nu_\mc{S},t)$-approximate $t$-design, $t\in\mathbb{Z}_+$. We propose a random matrix model that aims to describe the probability distribution of $\delta(\nu_\mathcal{S},t)$ for any $t$. Our model is given by a block diagonal matrix whose blocks are independent, given by Gaussian or Ginibre ensembles, and their number, size and type is determined by $t$. We prove that, the operator norm of this matrix, $\delta({t})$, is the random variable to which $\sqrt{|\mathcal{S}|}\delta(\nu_\mc{S},t)$ converges in distribution when the number of elements in $\mc{S}$ grows to infinity. Moreover, we characterize our model giving explicit bounds on the tail probabilities $\mathbb{P}(\delta(t)>2+\epsilon)$, for any $\epsilon>0$. We also show that our model satisfies the so-called spectral gap conjecture, i.e. we prove that with the probability $1$ there is $t\in\mathbb{Z}_+$ such that $\sup_{k\in\mathbb{Z}_{+}}\delta(k)=\delta(t)$. Numerical simulations give convincing evidence that the proposed model is actually almost exact for any cardinality of $\mc{S}$. The heuristic explanation of this phenomenon, that we provide, leads us to conjecture that the tail probabilities $\mathbb{P}(\sqrt{\mathcal{S}}\delta(\nu_\mathcal{S},t)>2+\epsilon)$ are bounded from above by the tail probabilities $\mathbb{P}(\delta(t)>2+\epsilon)$ of our random matrix model. In particular our conjecture implies that a Haar random set $\mathcal{S}\subset U(d)$ satisfies the spectral gap conjecture with the probability $1$.
\end{abstract}

\section{Introduction and Main Results}
Approximate $t$-design are ensembles of unitaries that (approximately) recover Haar averages of polynomials in entries of unitaries up to the order $t$. Their relation to the notion of epsilon nets was recently given in \cite{nets}. Although there are methods of constructing exact $t$- designs \cite{Yoshifumi2},  their implementation on near term quantum devices is fraught with inevitable noise and errors that effectively change them  into approximate $t$-designs. Approximate $t$-designs find numerous applications throughout quantum information, including randomized benchmarking \cite{Gambetta2014}, efficient estimation of properties of quantum states \cite{EffLearning2020}, decoupling \cite{Decoupl2013}, information transmission \cite{InfTransmission2009}, quantum state discrimination \cite{StateDiscrimination2005}, criteria for universality of quantum gates \cite{Sawicki22} and complexity growth \cite{Suskind2018,ChaosDesign2017,MO1,jonas2}. Recently, there was a lot of interest in efficient implementations of pseudo-random  unitaries. First, it was shown in \cite{HArrowHasstings2008,HH09} that Haar random sets of gates from $U(d)$ form rather good $t$-designs for $t=O(d^{\frac{1}{6}}/\log(d))$. Moreover, in \cite{BHH2016} it was shown that random circuits built form Haar-random 2-qubit gates acting (according to the specified layout) on $N$-qubit systems of the  depth polynomial in $N$ form approximate $t$-designs. These results were later improved in \cite{Harrow2018} and \cite{Haferkamp2022randomquantum,PhysRevA.104.022417,Yoshifumi1}, where even faster convergence in $n$ was proved. Moreover, the authors of \cite{Qhomeopathy2020} showed that random circuits constructed from Clifford gates and a small number of non-Clifford gates can be used to efficiently generate approximate $t$-designs (see also \cite{OLIVIERO2021127721,Leone2021quantumchaosis} where the authors show that in practice one needs an extensive number of non-Clifford resources to reproduce features of $t$-designs).

In this paper we propose a random matrix model that describes approximate $t$-designs constructed from Haar random sets of gates. In order to explain our main results we need to first introduce the main object of the paper. Let  $\{\mathcal{S},\nu_{\mathcal{S}}\}$ be an ensemble of quantum gates, where $\mathcal{S}$ is a finite subset of $U(d)$ and $\nu_{\mathcal{S}}$ is the uniform measure with the support given by $\mathcal{S}$. In order to simplify notation we will denote the cardinality of the set $\mc{S}$ by $|\mc{S}|$ or just by $\mc{S}$. We define the \emph{moment operator} associated with any measure $\nu$ on $U(d)$ by:
\begin{gather}\label{def:intro_moment-t}
T_{\nu,t}\coloneqq\int_{U(d)} d\nu(U) U^{t, t},\,\,\mathrm{where}\,
U^{t,t}=U^{\otimes t}\otimes \bar{U}^{\otimes t}.
\end{gather}
An ensemble for which 
\begin{gather}
\delta(\nu_\S,t):=\left\|T_{\nu_\S,t}-T_{\mu,t}\right\|=\delta<1,
\end{gather}
is called {\it $\delta$-approximate $t$-design}, where $\|\cdot\|$ is the operator norm and  $\mu$ is the Haar measure on $U(d)$ with normalization $\mu(U(d))=1$. One can also consider 
\begin{gather}
\delta(\nu_\S):=\mathrm{sup}_{t}\delta(\nu_{\mathcal{S}},t)\in[0,1],
\end{gather}
which, by the Peter-Weyl theorem, is equal to the operator norm of the moment operator 
\begin{gather}\label{def:intro_moment}
   (T_{\nu_\S}f)(U)=\int d\nu_\S(V) f(V^{-1}U),
\end{gather}
acting on functions $f\in L^2(U(d))$ that have vanishing mean value, $\int_{U(d)} d\mu(U) f(U)=0$. A long standing conjecture, known as the spectral gap conjecture, states that for any $\mc{S}$ that is universal $1$ is not an accumulation point of the spectrum of $T_{\nu_\S}$. In other words one has $1-\delta({\nu_{\mc{S}}})>0$. So far this was proven rigorously when the matrices from $\mc{S}$ have algebraic entries  \cite{Bourgain2011, naud2016}.

In this paper we address an important question of how good $t$-designs are random sets of gates. In particular we consider two natural gate-sets $\mc{S}$:
\begin{itemize}
    \item $\mathcal{S}=\{U_1,\ldots,U_n\}$, where $U_k$'s are independent and Haar random unitaries from $U(d)$. Such $\mc{S}$ will be called {\it Haar random  gate-set}
    \item $\mathcal{S}=\{U_1,\ldots,U_n\}\cup\{U_1^{-1},\ldots,U_n^{-1}\}$, where $U_k$'s are independent and Haar random unitaries from $U(d)$. Such $\mc{S}$ will be called {\it symmetric Haar random  gate-set}.
\end{itemize}

\noindent Clearly, when $\mc{S}$ is a (symmetric) Haar random gate-set $\delta(\nu_\mc{S},t)$ is a random variable. Our goal is to describe probability distribution of  this random varible and in particular to find concrete explicit bounds on the tail probability $\mathbb{P}(\delta(\nu_\S,t)>\alpha)$, for $\alpha>0$. Some residual results of this sort can be found in \cite{HArrowHasstings2008}. The bounds obtained there, however, depend on undetermined and potentially large constants and what is more important they can be applied only when $t=O(d^{\frac{1}{6}}/\log(d))$. Moreover, bounds on the tail probability $\mathbb{P}(\delta(\nu_\S,t)>\alpha)$, for $\alpha>0$ have been recently rigorously derived in \cite{concentration} using matrix concentrations inequalities \cite{tropp2015}. In this paper we present a new approach to the problem which identifies moment operators as elements of a well known Random Matrix Ensembles. Our results  show much faster decay of the tail probabilities of $\delta(\nu_\S,t)$ than any previous results.

First, in Section \ref{sec:moments}, using representation theory of the unitary group we provide more useful, for our purposes, formula for $\delta(\nu_\mc{S},t)$. This relies on the fact that $T_{\nu_\mc{S},t}$ can be brought to a block diagonal form, where blocks are labeled by sequences of $d$ non-increasing integers $\lambda\in \Lambda_t$ that correspond to irreducible representations $\pi_\lambda:U(d)\rightarrow GL(d_\lambda)$ appearing in the decomposition $U^{t,t}\simeq \bbone^{\oplus m_0}\oplus\bigoplus_{\lambda\in\Lambda_t}\left (\pi_{\lambda}\right)^{\oplus m_\lambda}$. More precisely in order to determine $\delta(\nu_{\mc{S}},t)$ we consider a block diagonal operator
\begin{gather*}
    \sum_{U \in \mathcal{S}}\nu_\mc{S}(U) \blockdiagonal{\pi_{\lambda_1}(U)}{\pi_{\lambda_2}(U)}{\pi_{\lambda_k}(U)} = \blockdiagonal{T_{\nu_\mc{S},\lambda_1}}{T_{\nu_\mc{S},\lambda_2}}{T_{\nu_\mc{S},\lambda_k}},
\end{gather*}
where $\lambda\in \Lambda_t^{\mathrm{ess}}$ and $\Lambda_t$, $\Lambda_t^\mathrm{ess}$, are defined in Section \ref{sec:moments}. Denoting $\delta(\nu_{\mathcal{S}},\lambda):=\|T_{\nu_\mc{S},\lambda}\|$ we can write
\begin{gather}\label{def:intro_moment-lambda}
\delta(\nu_\S,t)=\mathrm{sup}_{\lambda\in\Lambda_t^{\mathrm{ess}}}\delta(\nu_{\mathcal{S}},\lambda),\,\,\,\delta(\nu_\mc{S}):=\mathrm{sup}_{t}\delta(\nu_{\mathcal{S}},t).
\end{gather}



\noindent We note that $T_{\nu_{\mathcal{S}},\lambda}$ is either (1) a real (symmetric) random matrix, when $\pi_\lambda$ is a real representation and $\mc{S}$ is a (symmetric) Haar random gate-set, or (2) A complex (hermitian) random matrix, when $\pi_\lambda$ is a complex representation and $\mc{S}$ is a (symmetric) Haar random gate-set. Using multidimensional Central Limit Theorem and the orthogonality relations for irreducible representations of the unitary group that we describe in Section \ref{sec:tools} we show in Sections \ref{sec:asym-proof} and \ref{sec:sym-proof} our first main result:
\begin{theorem}\label{intro_thm1}
When $|\mc{S}|\rightarrow\infty$:
\begin{enumerate}
    \item $\sqrt{|\mc{S}|}T_{\nu_\mc{S},\lambda}$ converges in distribution to a random matrix $T_\lambda$ from: (1) the standard Real (Complex) Ginibre Ensemble when $\mc{S}$ is Haar random gate-set and $\pi_\lambda$ is a real (complex) representation, (2) the standard Gaussian Orthogonal (Unitary) Ensemble when  $\mc{S}$ is symmetric Haar random gate-set and $\pi_\lambda$ is a real (complex) representation.
    \item Operator $ {\color{black}\sqrt{|\mc S|}}\bigoplus_{\lambda\in\Lambda_t^{\mathrm{ess}}}T_{\nu_{\mathcal{S}},\lambda}$  converges in distribution to a block diagonal matrix 
\begin{gather}\label{intro-model1} T_t=\bigoplus_{\lambda\in\Lambda_t^{\mathrm{ess}}}T_{\lambda}.
\end{gather}
with blocks $T_\lambda$ belonging to the ensembles listed above. Moreover $T_\lambda$'s with distinct $\lambda$'s are independent. 
\end{enumerate}
\end{theorem}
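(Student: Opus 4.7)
The plan is to apply the multidimensional Central Limit Theorem (CLT) to $T_{\nu_\mathcal{S},\lambda}$ viewed as a normalized sum of i.i.d.\ matrix-valued variables, and to identify both the covariance structure of the Gaussian limit and the independence across blocks using Schur orthogonality for the unitary group.

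First I would write
\begin{equation*}
T_{\nu_\mathcal{S},\lambda} \;=\; \frac{1}{|\mathcal{S}|} \sum_{k} X_k^{(\lambda)}
\end{equation*}
with $X_k^{(\lambda)} = \pi_\lambda(U_k)$ in the Haar-random case and $X_k^{(\lambda)} = \tfrac{1}{2}\bigl(\pi_\lambda(U_k) + \pi_\lambda(U_k)^{-1}\bigr)$ in the symmetric case, treating each pair $\{U_k, U_k^{-1}\}$ as a single independent draw. Because $\lambda \in \Lambda_t^{\mathrm{ess}}$ excludes the trivial representation, Schur orthogonality gives $\mathbb{E}[\pi_\lambda(U)] = 0$, so the $X_k^{(\lambda)}$ are i.i.d., mean-zero and uniformly bounded in operator norm (hence have finite moments of every order). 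The vector-valued CLT applied to the real and imaginary parts of all entries then yields convergence in distribution of $\sqrt{|\mathcal{S}|}\, T_{\nu_\mathcal{S},\lambda}$ to a centered Gaussian matrix $T_\lambda$ whose covariance coincides with the (co)variance of a single $X_1^{(\lambda)}$.

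Next I would read off that covariance from the Schur relations
\begin{equation*}
\mathbb{E}\bigl[\pi_\lambda(U)_{ij}\,\overline{\pi_\lambda(U)_{kl}}\bigr] \;=\; \tfrac{1}{d_\lambda}\,\delta_{ik}\delta_{jl},
\end{equation*}
together with the fact that $\mathbb{E}[\pi_\lambda(U)_{ij}\,\pi_\lambda(U)_{kl}]$ vanishes when $\pi_\lambda$ is not self-conjugate and otherwise equals a rescaled intertwiner entry. A short case analysis then identifies the limit: when $\pi_\lambda$ is complex the entries of $T_\lambda$ are circularly symmetric complex Gaussians with independent real and imaginary parts, giving the standard complex Ginibre ensemble; when $\pi_\lambda$ is self-conjugate with a real structure a unitary change of basis built from the intertwiner $J$ brings $T_\lambda$ into a real matrix with i.i.d.\ real Gaussian entries, i.e.\ the standard real Ginibre ensemble. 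In the symmetric case, $X_k^{(\lambda)}$ is by construction Hermitian (resp.\ real symmetric after the change of basis), and the same covariance computation produces the GUE (resp.\ GOE) limit.

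For the joint statement I would apply the CLT to the concatenation of entries across all $\lambda \in \Lambda_t^{\mathrm{ess}}$ simultaneously. The cross-covariance
\begin{equation*}
\mathbb{E}\bigl[\pi_\lambda(U)_{ij}\,\overline{\pi_\mu(U)_{kl}}\bigr] \;=\; 0
\end{equation*}
for $\lambda \neq \mu$ is another Schur orthogonality relation; because the joint limit is Gaussian, vanishing covariance upgrades to independence of the blocks $T_\lambda$. The main technical obstacle I anticipate is the real-versus-complex case split: for each self-conjugate $\pi_\lambda$ one has to pin down the intertwiner $J$ satisfying $\pi_\lambda(U) = J\,\overline{\pi_\lambda(U)}\,J^{-1}$ and verify that conjugation by an appropriate square root of $J$ both renders $X_k^{(\lambda)}$ effectively real and preserves the normalization, so that the Gaussian limit lands in the standard Ginibre/GOE ensemble rather than a twisted rescaling of it. The remaining steps are essentially bookkeeping, once the orthogonality relations developed in Section~\ref{sec:tools} are in hand.
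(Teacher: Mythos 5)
Your proposal is correct and follows essentially the same route as the paper: normalize the sum of i.i.d.\ matrix-valued terms, apply the multivariate CLT to the (vectorized) real and imaginary parts of the entries, read off the covariance from the Schur orthogonality relations (including the vanishing of $\mathbb{E}[\pi_\lambda(U)_{ij}\pi_\lambda(U)_{kl}]$ for non-self-conjugate $\lambda$), and deduce independence of blocks from vanishing cross-covariances together with joint Gaussianity. The only cosmetic difference is that the paper sidesteps your ``intertwiner $J$'' step by simply working in a basis where a real representation satisfies $\pi_\lambda(U^{-1})=\pi_\lambda(U)^T$ from the outset.
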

Theorem \ref{intro_thm1} gives a nice Random Matrix model which describes properties of approximate $t$-designs in the regime of large $\mc{S}$. Moreover, we can characterize rigorously many important properties of this model. In particular we focus on random variables:
\begin{gather}
    \delta(\lambda):=\|T_\lambda\|=\lim_{|\mc{S}|\rightarrow\infty}\sqrt{|\mc{S}|}\delta(\nu_{\mc{S}},\lambda),\,\,
    \delta(t):=\|T_t\|=\lim_{|\mc{S}|\rightarrow\infty}\sqrt{|\mc{S}|}\delta(\nu_{\mc{S}},t),\\
    \delta:=\lim_{t\rightarrow\infty}\|T_t\|=\lim_{|\mc{S}|\rightarrow\infty}\sqrt{|\mc{S}|}\delta(\nu_{\mc{S}}).
\end{gather}
Making use of concentration inequalities for Gaussian random variables and following the reasoning which allows to bound the norm of matrices from Ginibre and Gaussian ensembles given in \cite{szarek2001Chapter8,SzarekBook} and described in Section \ref{sec:tools} we derive the following tail bounds that constitute our next main result
\begin{theorem}\label{thm:intro_bounds_T_t}
In the (symmetric) Haar random setting we have the following tail bounds for $\delta(t)$: 
\begin{gather}\label{eq:intro_bounds_T_t}
      \mathbb{P}\left(\delta(t)>2+\epsilon\right )\leq\sum_{\lambda\in\Lambda_t^{\mathrm{ess}}}F_{\ast}(d_\lambda,\epsilon)
      \end{gather}
where $\epsilon>0$ and functions $F_\ast$ are defined by: $F_O(d_\lambda,\epsilon)=\frac{1}{2}e^{-\frac{d_\lambda\epsilon^2}{4}}$, 
$F_U(d_\lambda,\epsilon)=F_R(d_\lambda,\epsilon):=\frac{1}{2}e^{-\frac{d_\lambda\epsilon^2}{2}}$, $  F_C(d_\lambda,\epsilon):=\frac{1}{2}e^{-d_\lambda\epsilon^2}$. Moreover, $\ast=R$ ($\ast=O$) when $\lambda$ corresponds to a real representation and $\mc{S}$ is (symmetric) Haar random gate-set and $\ast=C$ ($\ast=U$) when $\lambda$ corresponds to a complex representation and $\mc{S}$ is (symmetric) Haar random gate-set.
\end{theorem}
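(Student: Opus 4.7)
The plan is to exploit the block-diagonal form of $T_t$, reduce via a union bound to bounding each block $\|T_\lambda\|$ separately, and then apply Gaussian concentration to each block individually. The first step is immediate: since $T_t = \bigoplus_{\lambda \in \Lambda_t^{\mathrm{ess}}} T_\lambda$, its operator norm equals $\max_\lambda \|T_\lambda\|$, and hence
\begin{equation*}
\mathbb{P}(\|T_t\| > 2+\epsilon) \leq \sum_{\lambda \in \Lambda_t^{\mathrm{ess}}} \mathbb{P}(\|T_\lambda\| > 2+\epsilon),
\end{equation*}
so it suffices to prove $\mathbb{P}(\|T_\lambda\| > 2+\epsilon) \leq F_{\ast}(d_\lambda,\epsilon)$ for each $\lambda$ with the appropriate choice of $\ast$.

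The second step is Gaussian concentration applied to $f(g) := \|T_\lambda(g)\|_{\mathrm{op}}$, where $g$ is the underlying vector of i.i.d.\ $N(0,1)$ coordinates parameterizing the ensemble (real entries for GOE and real Ginibre, pairs of reals for GUE and complex Ginibre). Using $\|\cdot\|_{\mathrm{op}} \leq \|\cdot\|_F$ and a direct expansion of $\|T_\lambda(g) - T_\lambda(g')\|_F^2$ with the variance scalings fixed by Theorem \ref{intro_thm1}, one reads off that $f$ is Lipschitz with constant
\begin{equation*}
L = \tfrac{1}{\sqrt{d_\lambda}},\quad \tfrac{1}{\sqrt{2 d_\lambda}},\quad \sqrt{\tfrac{2}{d_\lambda}},\quad \tfrac{1}{\sqrt{d_\lambda}}
\end{equation*}
in the real Ginibre, complex Ginibre, GOE, and GUE cases respectively. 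Borell's Gaussian isoperimetric inequality then yields
\begin{equation*}
\mathbb{P}(f > M_\lambda + t) \leq \tfrac{1}{2} e^{-t^2/(2L^2)},
\end{equation*}
where $M_\lambda$ is the median of $\|T_\lambda\|$. Substituting the four values of $L$ and setting $t = \epsilon$ reproduces exactly the functional forms $F_R, F_C, F_O, F_U$ stated in the theorem — provided the median can be replaced by $2$.

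The last input is the classical estimate $M_\lambda \leq 2$, which I would import from the Gordon/Slepian-type comparison inequalities recalled in Section \ref{sec:tools} and in \cite{szarek2001Chapter8, SzarekBook}: one first bounds $\mathbb{E}\|T_\lambda\| \leq 2$ via Gordon's lemma (for the Ginibre cases) or Slepian's lemma (for GOE/GUE) at the chosen normalization, and then converts this into a median bound using the same Gaussian concentration step described above. With that in hand the proof closes by a clean substitution. The main obstacle, in my view, is precisely this uniform-in-$d_\lambda$ median bound $M_\lambda \leq 2$: asymptotically it is a consequence of the semicircular/circular law together with Tracy--Widom control at the edge, but to keep the clean constant $2$ (rather than $2(1+o(1))$) valid at every finite dimension one really needs the exact comparison inequalities referenced above; everything else in the argument is essentially mechanical.
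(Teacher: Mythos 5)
Your overall route is exactly the paper's: a union bound over the independent blocks of $T_t=\bigoplus_{\lambda\in\Lambda_t^{\mathrm{ess}}}T_\lambda$, followed by Lipschitz Gaussian concentration for each $\|T_\lambda\|$ viewed as a function of the underlying standard Gaussian vector, with the Lipschitz constants you list matching the paper's $L_R=1/\sqrt{d_\lambda}$, $L_C=1/\sqrt{2d_\lambda}$, $L_O=\sqrt{2/d_\lambda}$, $L_U=1/\sqrt{d_\lambda}$, and with the nontrivial external input $\mathbb{E}\|T_\lambda\|\le 2$ imported from the comparison inequalities in \cite{szarek2001Chapter8,SzarekBook}. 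Up to that point the argument is the same as Theorem \ref{thm:bounds_T_t} combined with Fact \ref{fact:tail_bounds}.

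The one step that, as stated, would fail is your conversion of the mean bound into the median bound. You propose to pass from $\mathbb{E}\|T_\lambda\|\le 2$ to $M_\lambda\le 2$ ``using the same Gaussian concentration step,'' but concentration around the median only yields $M_\lambda\le \mathbb{E}\|T_\lambda\|+cL_\ast$ for some absolute constant $c$ (integrating the lower tail), i.e. $M_\lambda\le 2+O(d_\lambda^{-1/2})$, which would degrade the clean constant $2$ in $F_\ast$ to $2+O(d_\lambda^{-1/2})$. The correct and essentially free fix --- the one the paper uses as Fact \ref{fact:med_exp} --- is that the operator norm composed with the (affine) de-vectorization map is a \emph{convex} function of the Gaussian vector, and the median of a convex function of a standard Gaussian vector is at most its expectation, with no error term. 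With that substitution your proof closes exactly as the paper's does; no Tracy--Widom or asymptotic edge control is needed anywhere.
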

\noindent Moreover, using Theorem \ref{thm:intro_bounds_T_t} and the Borel-Cantelli Lemmas that we review in Section \ref{sec:tools} we prove in Section \ref{sec:model} that for our Random Matrix model the spectral gap conjecture is fulfilled with the probability $1$. 
\begin{theorem}\label{thm:intro_gap}
    In both Haar random and symmetric Haar random settings with the probability $1$ there exists finite $t\in \mathbb{Z}_{+}$ such that  $\delta=\delta(t)$. 
\end{theorem}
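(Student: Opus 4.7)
The plan is to reduce the statement to a Borel--Cantelli argument applied to the independent family of individual block norms $\delta(\lambda):=\|T_\lambda\|$, the independence being provided by part (2) of Theorem~\ref{intro_thm1}. Since $T_t$ is block diagonal, $\delta(t)=\max_{\lambda\in\Lambda_t^{\mathrm{ess}}}\delta(\lambda)$; combined with $\Lambda_t^{\mathrm{ess}}\subseteq\Lambda_{t+1}^{\mathrm{ess}}$ (the trivial summand of $U\otimes\bar U$ embeds every irrep appearing in $U^{t,t}$ into $U^{t+1,t+1}$), this makes $\delta(t)$ non-decreasing in $t$, so $\delta=\sup_{\lambda\in\Lambda^{\mathrm{ess}}}\delta(\lambda)$ over the countable index set $\Lambda^{\mathrm{ess}}:=\bigcup_t\Lambda_t^{\mathrm{ess}}$. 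The theorem is thus equivalent to showing that this supremum is attained almost surely.

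First I would apply the first Borel--Cantelli lemma to bound $\limsup_\lambda\delta(\lambda)$ from above. Specialising Theorem~\ref{thm:intro_bounds_T_t} to a singleton gives
\[
\mathbb{P}\bigl(\delta(\lambda)>2+\epsilon\bigr)\le F_\ast(d_\lambda,\epsilon)\le \tfrac12\,e^{-c\,d_\lambda\,\epsilon^2},\qquad c\in\{\tfrac14,\tfrac12,1\}.
\]
The main technical step is to verify
\[
\sum_{\lambda\in\Lambda^{\mathrm{ess}}} e^{-c\,d_\lambda\,\epsilon^2}<\infty\qquad\text{for every }\epsilon>0.
\]
This should follow from the Weyl dimension formula for $U(d)$: $d_\lambda$ is a polynomial of degree $\binom{d}{2}$ in the entries of the highest weight, so the number of highest weights with $d_\lambda\le N$ grows only polynomially in $N$, and the series is dominated by $\sum_N N^p e^{-cN\epsilon^2}<\infty$. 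Intersecting the resulting full-measure events over $\epsilon=1/k$, $k\in\mathbb{N}$, then yields $\limsup_\lambda\delta(\lambda)\le 2$ almost surely.

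To rule out the remaining pathology that $\delta=2$ is a supremum but not a maximum, I would apply the second Borel--Cantelli lemma to the independent events $A_\lambda=\{\delta(\lambda)>2\}$. For any $\lambda$ with $d_\lambda\ge 2$, $T_\lambda$ is a nontrivial matrix from one of the four standard ensembles, so $\mathbb{P}(A_\lambda)>0$, and as $d_\lambda\to\infty$ these probabilities converge to a strictly positive Tracy--Widom constant; consequently $\sum_\lambda\mathbb{P}(A_\lambda)=\infty$ and $\delta=\sup_\lambda\delta(\lambda)>2$ almost surely. To conclude, choose $\epsilon=(\delta-2)/2>0$: the set $F_\epsilon:=\{\lambda:\delta(\lambda)>2+\epsilon\}$ is finite by the first step and nonempty since $\delta>2+\epsilon$, so $\delta=\max_{\lambda\in F_\epsilon}\delta(\lambda)=\delta(\lambda^\ast)$ for some $\lambda^\ast$, and any $t_0$ large enough that $F_\epsilon\subseteq\Lambda_{t_0}^{\mathrm{ess}}$ then satisfies $\delta=\delta(t_0)$. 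The single hard point in this program is the representation-theoretic counting underlying the summability in the first step, where the exponential decay in $d_\lambda$ must be balanced against the multiplicity of irreps of $U(d)$ with a given dimension; the second Borel--Cantelli step is essentially routine once standard Tracy--Widom edge concentration is invoked.
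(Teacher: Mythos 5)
Your proposal is correct, and its first half is the same argument the paper gives in Theorem \ref{thm:spectral-gap-conjecture}: fix $\epsilon>0$, bound $\mathbb{P}(\delta(\lambda)>2+\epsilon)$ by the single-block tails of Fact \ref{fact:tail_bounds}, show the sum over $\Lambda^{\mathrm{ess}}$ converges, and apply the first Borel--Cantelli lemma. For the summability the paper uses Lemma \ref{lemma:dim-bound-new} ($d_\lambda\ge 2k$ when $\|\lambda\|_1=2k$) together with the partition count $\alpha_{2k}=O(k^{d-2})$ from Facts \ref{fact:partitions1} and \ref{fact:partitions-properties}; your statement that only polynomially many weights have $d_\lambda\le N$ is the same estimate, but be aware that the mere fact that $d_\lambda$ is a polynomial of degree $\binom{d}{2}$ in $\lambda$ does not by itself give this count --- you need a \emph{lower} bound on $d_\lambda$ in terms of the size of the weight, which is exactly what the paper's lemma supplies. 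Where you genuinely depart from the paper is the endgame. The paper stops at ``only finitely many $\lambda$ exceed $2+\epsilon$, hence $\delta$ is determined at some finite $t$,'' which tacitly assumes the supremum $\delta=\sup_\lambda\delta(\lambda)$ is attained; on the event $\{\delta\le 2\}$ this is not automatic, since one could a priori have $\delta(\lambda)\uparrow\delta$ along a sequence without attainment. Your second Borel--Cantelli step --- using the independence of the blocks and the fact that $\mathbb{P}(\|T_\lambda\|>2)$ stays bounded away from zero as $d_\lambda\to\infty$ to conclude $\delta>2$ almost surely --- closes precisely this gap, and the intersection over $\epsilon=1/k$ then legitimizes taking the random threshold $\epsilon=(\delta-2)/2$. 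So your argument is strictly more complete than the paper's, at the cost of one extra input (positivity of the limiting edge probability, i.e.\ Tracy--Widom-type behaviour of the four ensembles) that the paper nowhere needs; note that this input really is needed for your step, since the Lipschitz concentration of Fact \ref{fact:gauss_concentration} operates at scale $d_\lambda^{-1/2}$ and cannot by itself resolve whether $\|T_\lambda\|$ sits above or below $2$.
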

\noindent
In the same Section we also prove concrete bounds for the tail probability $\mathbb{P}\left(\delta>2+\epsilon\right )$.
\begin{theorem}\label{thm:intro_ineq}
The tail probability of $\delta$ is bounded by:
\begin{gather*}
    \mathbb{P}\left(\delta>2+\epsilon\right )\leq \frac{e^{-\epsilon^2/(2c^2)}}{2(e^{\epsilon^2/c^2}-1)},\;\mathrm{if}\;d=2\;\mathrm{and}\;\epsilon>0,\\
    \mathbb{P}\left( \delta > 2 + \epsilon \right) \leq
    e^{-\frac{d(d+1)}{4c^2} \epsilon^2} \left( e^{2\pi\sqrt{\frac{d+2}{3}}} \frac{60+100\pi\sqrt{3d+6}}{d+2} - b \right) + \\
    +\frac{60}{d^2} \left( 2 + \frac{\sqrt{2\pi}c}{\epsilon} \right) e^{-\frac{3}{4c^2}d(d-1)\epsilon^2 + 2\pi\sqrt{\frac{d}{3}}}, \; \mathrm{if} \; d \ge 2 \; \mathrm{and} \; \epsilon>c \left(\frac{4\pi^2}{3d(d-1)^2}\right)^{1/4},
\end{gather*}
where 
\begin{equation*}
    b = 10 \left[8\pi^2 \Ei{2\sqrt{\frac{2}{3}}\pi} - e^{2\sqrt{\frac{2}{3}}\pi}(3+2\sqrt{6}\pi) \right] \approx 3855.93,
\end{equation*}
and $c=1$ in the Haar random setting and $c=\sqrt{2}$ in the symmetric Haar random setting.
\end{theorem}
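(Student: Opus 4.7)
By Theorem~\ref{thm:intro_gap}, almost surely $\delta = \sup_t \delta(t) = \sup_\lambda \|T_\lambda\|$, the last supremum being taken over all nontrivial irreducible representations $\pi_\lambda$ of $U(d)$ with $\sum_i \lambda_i = 0$, i.e., those appearing in some $U^{t,t}$. A union bound together with a termwise application of Theorem~\ref{thm:intro_bounds_T_t} gives
\[
\mathbb{P}(\delta > 2 + \epsilon) \leq \sum_{\lambda \neq 0} \mathbb{P}(\|T_\lambda\| > 2+\epsilon) \leq \tfrac{1}{2} \sum_{\lambda \neq 0} e^{-d_\lambda \epsilon^2/(2c^2)},
\]
where in each setting I have uniformly replaced the exact tail function $F_\ast(d_\lambda,\epsilon)$ by its weakest variant $\tfrac{1}{2} e^{-d_\lambda \epsilon^2/(2c^2)}$, with $c = 1$ in the Haar random setting and $c = \sqrt{2}$ in the symmetric one. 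The entire problem is thereby reduced to estimating the deterministic series $S(d,\epsilon) := \tfrac{1}{2} \sum_{\lambda \neq 0} e^{-d_\lambda \epsilon^2/(2c^2)}$.

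For $d = 2$, the nontrivial signatures with $\sum_i \lambda_i = 0$ are exactly $\lambda = (k,-k)$, $k \geq 1$, of dimension $d_\lambda = 2k+1$, so $S(2,\epsilon)$ is a geometric series summing to $e^{-\epsilon^2/(2c^2)} / [2(e^{\epsilon^2/c^2} - 1)]$, which is the first stated bound.

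For $d \geq 2$ I parametrize nontrivial signatures by ordered pairs of partitions $(\lambda^+,\lambda^-)$ of common size $n := |\lambda^+| = |\lambda^-| \geq 1$, each with at most $d$ parts. Two dimension lower bounds, both derivable directly from Weyl's formula, drive the argument: the uniform estimate $d_\lambda \geq d(d+1)/2$, saturated by the adjoint $(1,0,\ldots,0,-1)$ when $d = 2$, and a stronger $d_\lambda \geq 3d(d-1)/2$ with additional polynomial growth in $n$ away from the adjoint. The second bound is arranged to dominate the Hardy-Ramanujan partition count $p(n) \leq e^{\pi\sqrt{2n/3}}$, which controls the number of signatures of a given size through $p(n)^2$. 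The series $S(d,\epsilon)$ then splits into (i) a block of low-lying representations producing the factor $e^{-\frac{d(d+1)}{4c^2}\epsilon^2}$ multiplied by the prefactor involving $60$, $100\pi\sqrt{3d+6}$, and the constant $b$, the latter coming from an exact summation of a finite number of small-$n$ contributions (this is where the exponential integral $\mathrm{Ei}$ enters); and (ii) a tail bounded by $\sum_n p(n)^2 e^{-\frac{3d(d-1)}{4c^2}\epsilon^2}$ and evaluated via the standard Gaussian-tail inequality $\int_a^\infty e^{-x^2/(2\sigma^2)}\,dx \leq \sigma\sqrt{\pi/2}\,e^{-a^2/(2\sigma^2)}/a$, producing the second term $\tfrac{60}{d^2}(2 + \sqrt{2\pi}c/\epsilon)\, e^{-\frac{3d(d-1)}{4c^2}\epsilon^2 + 2\pi\sqrt{d/3}}$. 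The hypothesis $\epsilon > c(4\pi^2/(3d(d-1)^2))^{1/4}$ is precisely the threshold at which the effective Gaussian rate $\frac{3d(d-1)}{4c^2}\epsilon^2$ exceeds the Hardy-Ramanujan growth $2\pi\sqrt{d/3}$, ensuring tail convergence.

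The main difficulty lies in the combinatorial bookkeeping for $d \geq 2$: establishing uniformly valid dimension lower bounds over the pair-of-partitions parametrization (with the correct saturation so that $d(d+1)/2$ and $3d(d-1)/2$ appear as exponents), and tracking the explicit numerical constants ($60, 100, 10$, and the $\mathrm{Ei}$-expression in $b$) through the enumeration of small representations. Everything else—the union bound, the termwise application of Theorem~\ref{thm:intro_bounds_T_t}, the geometric sum at $d = 2$, and the Hardy-Ramanujan plus Gaussian-tail argument for the remainder—is essentially routine.
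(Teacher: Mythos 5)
Your outline matches the paper's proof essentially step for step: a union bound over $\Lambda^{\mathrm{ess}}$ with the weakest tail function $\tfrac{1}{2}e^{-d_\lambda\epsilon^2/(2c^2)}$, the geometric series for $d=2$, and for general $d$ the combination of the partition-count bound $\alpha_{2k}\le p(k)^2$ (Hardy--Ramanujan) with the two-regime dimension lower bound of Lemma~\ref{lemma:dim-bound-new}, followed by a small-$k$ block (source of $\mathrm{Ei}$ and $b$) and a Gaussian-tail estimate for the large-$k$ remainder. The only cosmetic discrepancies are that $b$ arises in the paper from a sum-to-integral comparison of $e^{a\sqrt{x}}/x^2$ rather than an exact finite summation, and the hypothesis $\epsilon>c\left(\tfrac{4\pi^2}{3d(d-1)^2}\right)^{1/4}$ is imposed so that the summand is monotone decreasing past $k=d/2$ (validating that sum-to-integral step), not merely to ensure convergence of the tail.
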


As a direct consequence of Theorem \ref{intro_thm1} for any $t\in \mathbb{Z}_+$ and given precision $\eta>0$ there is $N_0=N(d,\eta,t)$ such that for a (symmetric) Haar random gate-set $\mc{S}$ with $|\mc{S}|>N_0$ we have
\begin{equation} \label{eq:intro-ine}
    \mathbb{P}\left( \sqrt{\mathcal{S}} \delta(\nu_\mc{S},t)>2+\epsilon\right )\leq\eta+\mathbb{P}\left(\delta(t)>2+\epsilon\right ).
\end{equation}
\textcolor{black}{In order to understand better how the convergence depends on $|\mathcal{S}|$ and $d_\lambda$ in Section \ref{sec:numerics} we compare $\delta(\lambda)$ and $\delta(t)$ with the numerical simulations  of $\sqrt{\mathcal{S}}\delta({\nu_{\mathcal{S}},\lambda})$ and $\sqrt{\mathcal{S}}\delta({\nu_{\mathcal{S}},t})$ for 
\begin{enumerate}
    \item one qubit $d=2$: $1\leq t\leq 500$ which means irreducible representations of $U(2)$ with the dimension $3 \leq d_\lambda \leq 1001$, 
    \item two qubits $d=4$:  $1\leq t\leq 6$ which means irreducible representations of $U(4)$ with the dimension $15 \leq d_\lambda \leq 5200$.
\end{enumerate} 
Our numerical simulations show that 
\begin{equation} \label{eq:intro-ine-1}
    \mathbb{P}\left( \sqrt{\mathcal{S}} \delta(\nu_\mathcal{S},\lambda)>2+\epsilon\right )\leq\mathbb{P}\left(\delta(\lambda)>2+\epsilon\right ),
\end{equation}
for $3\leq|\mathcal{S}|\leq 26$ (see Figures \ref{fig:conj}, \ref{fig:conj_t} and \ref{fig:conj_t_d2}). A careful analysis of Figures \ref{fig:conj}, \ref{fig:conj_t} and \ref{fig:conj_t_d2} reveals that \ref{eq:intro-ine-1} becomes much sharper for smaller number of generators (for any $d_\lambda$). This suggests that there is another mechanism, not related to central limit theorem, that plays a significant role in \ref{eq:intro-ine-1}. Moreover our calculations shows that in the regime of large block dimensions, that is $d_\lambda \rightarrow \infty$, random variables $\sqrt{|\S|}\delta({\nu_{\mc{S}},\lambda})$ and $\delta(\lambda)$ concentrate around $2\sqrt{\frac{|\mc S | - 1}{|\mc S|}}$ and $2$ respectively (see Figures \ref{fig:averages}, \ref{fig:medians} and \ref{fig:renorm}). Since the former is always smaller than the latter this behaviour implies that for dimensions $d_\lambda$ bigger than those we could check numerically the inequality \eqref{eq:intro-ine-1} is even sharper.
In Section \ref{sec:numerics} we provide extended heuristic explanation of this phenomena that uses spectral properties of $T_{\nu_\S,t}$ and $T_{\nu_\S,\lambda}$ described in Section \ref{sec:spectral_properties}. This heuristic arguments and numerical results lead to the following conjecture:} 
\begin{Conjecture}\label{conjecture}
Let $\epsilon>0$. For any (symmetric) Haar random $\mc{S}\subset U(d)$ with at least $3$ elements ($2$ elements and their inverses) we have: 
\begin{gather}
    \mathbb{P}\left(\delta(\nu_\mc{S},\lambda)>\frac{2+\epsilon}{\sqrt{|\mc{S}|}}\right )\leq F_{\ast}(d_\lambda,\epsilon),
    \end{gather}
where $\ast\in\{O,U,R,C\}$ and the correspondence between $\ast$, $\lambda$ and the type of a gate-set is as in Theorem \ref{thm:intro_bounds_T_t}. Moreover, for any fixed cardinality of $\mc{S}$ there is $\lambda_0$ such that for any $\lambda$ with $d_\lambda>d_{\lambda_0}$ we have 
\begin{gather}
    \mathbb{P}\left(\delta(\nu_\mc{S},\lambda)>\frac{2+\epsilon}{\sqrt{|\mc{S}|}}\right )\leq  \mathbb{P}\left(\delta(\lambda)>2+\epsilon\right ).
    \end{gather}
\end{Conjecture}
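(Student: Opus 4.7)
The plan is to attack both inequalities while keeping $|\mathcal{S}|$ finite, working directly with the operator $T_{\nu_\mathcal{S},\lambda}=\tfrac{1}{|\mathcal{S}|}\sum_{U\in\mathcal{S}}\pi_\lambda(U)$ rather than its Gaussian/Ginibre limit from Theorem~\ref{intro_thm1}. Throughout, the orthogonality relations reviewed in Section~\ref{sec:tools} provide the key input that $\mathbb{E}|\langle w,\pi_\lambda(U)v\rangle|^{2}=1/d_\lambda$, which matches exactly the entrywise variance of the limiting model $T_\lambda$; this is precisely why the exponents $d_\lambda\epsilon^{2}$ appearing in $F_{\ast}$ have a chance of surviving at finite $|\mathcal{S}|$.

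For the first inequality I would replay the $\epsilon$-net argument of Section~\ref{sec:tools}: pick a $1/2$-net $\mathcal{N}$ of the unit sphere in the representation space, so that $\|A\|\le 4\sup_{v,w\in\mathcal{N}}|\langle w,Av\rangle|$, and control the bilinear form
\[
\sqrt{|\mathcal{S}|}\,\langle w,T_{\nu_\mathcal{S},\lambda}v\rangle \;=\; \frac{1}{\sqrt{|\mathcal{S}|}}\sum_{i=1}^{|\mathcal{S}|}\langle w,\pi_\lambda(U_i)v\rangle
\]
as a sum of i.i.d.\ bounded mean-zero random variables. The tail $F_{\ast}(d_\lambda,\epsilon)$ requires a sub-Gaussian proxy of order $1/d_\lambda$; since each summand is only bounded by $1$ in absolute value, Hoeffding alone is too weak, and one must exploit the variance $1/d_\lambda$ via either a Bernstein-type inequality, a Hanson--Wright-style bound fed by Weingarten moment estimates, or, most promisingly, the concentration of measure on $U(d)$ transported through the $1$-Lipschitz map $U\mapsto\langle w,\pi_\lambda(U)v\rangle$. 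A union bound over $|\mathcal{N}|\le 5^{2d_\lambda}$ should then reproduce $F_{\ast}(d_\lambda,\epsilon)$ after tracking constants, handling the real/complex dichotomy exactly as in Theorem~\ref{thm:intro_bounds_T_t}.

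For the second inequality I would make quantitative the heuristic of Section~\ref{sec:numerics} that at finite $|\mathcal{S}|$ the bulk edge of $\sqrt{|\mathcal{S}|}\,T_{\nu_\mathcal{S},\lambda}$ sits near $2\sqrt{(|\mathcal{S}|-1)/|\mathcal{S}|}<2$. The cleanest route is the moment method: expand $\mathbb{E}\tr\bigl(T_{\nu_\mathcal{S},\lambda}^{\ast}T_{\nu_\mathcal{S},\lambda}\bigr)^{p}$ via Weingarten calculus, isolate the leading term as $d_\lambda\to\infty$, and verify that it agrees with the Gaussian/Ginibre moment of the same order up to a factor $((|\mathcal{S}|-1)/|\mathcal{S}|)^{p}$ plus $O(1/d_\lambda)$ corrections. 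Choosing $p\sim\log d_\lambda$ and applying Markov yields a tail bound on $\sqrt{|\mathcal{S}|}\delta(\nu_\mathcal{S},\lambda)$ which, for $d_\lambda$ larger than some threshold $d_{\lambda_{0}}(|\mathcal{S}|)$, lies strictly below $\mathbb{P}(\delta(\lambda)>2+\epsilon)$ thanks to the strict gap $2\sqrt{(|\mathcal{S}|-1)/|\mathcal{S}|}+\epsilon<2+\epsilon$.

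The main obstacle is the first part: upgrading the bounded-summand structure to a sub-Gaussian proxy of the correct order $1/d_\lambda$ \emph{uniformly} in $|\mathcal{S}|\ge 3$ (respectively $|\mathcal{S}|\ge 2$ in the symmetric case). The central-limit intuition underlying Theorem~\ref{intro_thm1} does not help here, since we need constants that do not degrade as $|\mathcal{S}|$ shrinks. The numerical evidence in Figures~\ref{fig:conj}--\ref{fig:conj_t_d2} shows the bound is actually \emph{sharper} for small $|\mathcal{S}|$, strongly hinting that the right argument must genuinely use the bounded/compact geometry of $\pi_\lambda(U(d))$ rather than an asymptotic Gaussian approximation; identifying and executing that argument is the crux of the conjecture.
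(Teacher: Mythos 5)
The statement you are addressing is Conjecture \ref{conjecture}: the paper itself offers no proof of it, only numerical evidence (Figures \ref{fig:conj}--\ref{fig:renorm}) and the heuristic from Sections \ref{sec:spectral_properties} and \ref{sec:numerics} that the singular values of $\sqrt{|\mc{S}|}\,T_{\nu_\mc{S},\lambda}$ follow the rescaled Kesten--McKay law with edge $2\sqrt{(|\mc{S}|-1)/|\mc{S}|}<2$. Your proposal is therefore not being measured against an existing argument; it is a research program, and as such it is well aligned with the paper's own heuristics (in particular your second part is essentially a quantitative version of the paper's edge-location argument). But it is not a proof, and the gaps are genuine, not merely technical.

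For the first inequality, the $\epsilon$-net route cannot deliver the stated bound. A $1/2$-net union bound over $|\mathcal{N}|^2\le 5^{2d_\lambda}$ pairs costs a factor $e^{c\,d_\lambda}$ that must be absorbed by the individual tails; this is how one proves $\|H\|\le C\sqrt{N}$ for Gaussian matrices with some absolute constant $C$ strictly larger than $2$, and it structurally cannot recover the sharp edge constant $2$ appearing in $F_\ast(d_\lambda,\epsilon)$ for small $\epsilon$. (The paper's own Fact \ref{fact:tail_bounds} gets the constant $2$ only because $\mathbb{E}\|H_N\|\le 2$ is known for the four Gaussian/Ginibre ensembles via the moment method or Slepian-type comparisons --- tools that are not available for $T_{\nu_\mc{S},\lambda}$ at finite $|\mc{S}|$.) Moreover, the sub-Gaussian proxy of order $1/d_\lambda$ for the summands $\langle w,\pi_\lambda(U)v\rangle$ does not follow from concentration of measure on $U(d)$: the Lipschitz constant of $U\mapsto\pi_\lambda(U)$ grows with $\|\lambda\|_1$, so that route degrades precisely in the regime of large $\lambda$ where the conjecture is most interesting. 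You correctly identify this as the crux, but naming the obstacle does not close it. For the second inequality, the moment comparison with $p\sim\log d_\lambda$ would require uniform control of the subleading Weingarten terms in $\mathbb{E}\,\mathrm{Tr}\bigl(T_{\nu_\mc{S},\lambda}^{\ast}T_{\nu_\mc{S},\lambda}\bigr)^{p}$ for growing $p$ and fixed small $|\mc{S}|$, which is exactly the hard step the paper sidesteps by working only with fixed $m$ and $t\to\infty$ (equation \eqref{T-measure-mom}). In short: your plan is a sensible sharpening of the paper's heuristic, but both halves rest on unproven quantitative inputs, so the conjecture remains open under your proposal.
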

As a direct consequence of this conjecture we can rewrite Theorem \ref{thm:intro_bounds_T_t} and Theorem \ref{thm:intro_ineq} replacing $\delta(t)$ by $\sqrt{|\mc{S}|}\delta(\nu_{\mc{S}},t)$. Moreover, assuming Conjecture \ref{conjecture} holds we prove in Section \ref{sec:model} that for a (symmetric) Haar random gate-set $\mc{S}$ with the probability $1$ there exists finite $t\in \mathbb{Z}_{+}$ such that $\delta(\nu_{\mc{S}})=\delta(\nu_{\mc{S}},t)$.

Throughout the paper we denote our results by Theorems and Lemmas, and by Facts we denote results taken from the literature.  

\section{Moment operators}\label{sec:moments}
Let  $\{\mathcal{S},\nu_{\mathcal{S}}\}$ be an ensemble of quantum gates, where $\mathcal{S}$ is a finite subset of $U(d)$ and $\nu_{\mathcal{S}}$ is any measure on $\mathcal{S}$. We define the \emph{moment operator} associated with a measure $\nu$ by:
\begin{gather}\label{def:moment-t}
T_{\nu,t}\coloneqq\int_{G_d} d\nu(U) U^{ t,t},\,\,\mathrm{where}\,
U^{t,t}=U^{\otimes t}\otimes \bar{U}^{\otimes t}.
\end{gather}
In the following we will be interested in 
\begin{gather}
\delta(\nu_\S,t):=\left\|T_{\nu_\S,t}-T_{\mu,t}\right\|,
\end{gather}
where $\|\cdot\|$ is the operator norm and  $\mu$ is the Haar measure on $U(d)$ with the normalization $\mu(U(d))=1$. In the remaining part of the paper we will often use the following notation 
\begin{gather}
    \E{\nu} f(U):=\int_{U(d)}d\nu(U)f(U).
\end{gather}

Let us first note that the map $U\mapsto U^{t,t}$ is a representation of the unitary group $U(d)$. This representation is reducible and decomposes into irreducible representations of $U(d)$. Those are usually labeled by sequences of $d$ non-increasing integers called highest weights. To simplify our description we introduce the following notation. For  $\lambda = \left( \lambda_1, ..., \lambda_d \right) \in \mathbb{Z}^d$, that satisfies $\lambda_{k}\geq \lambda_{k+1}$ for any $k\in\{1,\ldots,d-1\}$, we denote 
\begin{itemize}
    \item by $\pi_\lambda$ the corresponding irreducible representation  and by $d_\lambda := \mathrm{dim}\pi_\lambda$ its dimension given by:
    \begin{gather}
    \label{eq:weyl_dimension}
    d_\lambda=\prod_{1\leq i<j\leq d}\frac{\lambda_i-\lambda_j+j-i}{j-i},
\end{gather}
    \item by $l(\lambda) := d$ the {\it length} of $\lambda$, 
    \item by $\Sigma(\lambda) := \sum_{i=1}^{d}\lambda_i$ the sum of the entries of $\lambda$, 
    \item by $\|\lambda\|_1\coloneqq\sum_{k=1}^d|\lambda_k|$ the sum of the absolute values of the entries of $\lambda$,
    \item by $\lambda_+$ the subsequence of positive integers in $\lambda$.
\end{itemize}
It is worth mentioning here that the irreducible representation $\pi_\lambda$ of the unitary group $U(d)$ gives rise to an irreducible representation of the special unitary group $SU(d)$ by the restriction $\pi_\lambda|_{SU(d)}$. In the literature irreducible representations of $SU(d)$ are typically labeled by sequences of integers called highest weights and denoted by $\lambda^s$ or by Young diagrams denoted by $\lambda^Y$. The relations between the highest weight $\lambda$ of $\pi_\lambda$ and the highest weight $\lambda^s$ and the Young diagram $\lambda^Y$ of $\pi_\lambda|_{SU(d)}$ are given by:
\begin{align}\label{eq:standard-young}
 \lambda^Y &= (\lambda_1-\lambda_d, \lambda_2-\lambda_d, ..., \lambda_{d-1}-\lambda_d,0),\\
    \lambda^s &= (\lambda_1-\lambda_2, \lambda_2-\lambda_3, ..., \lambda_{d-1}-\lambda_d,0).
\end{align}
As the last integer in the above formulas is always zero  typically it is omitted. For the further discussion and details see Lemma 5 in our recent work \cite{concentration}. 

\begin{fact}(\cite{Stroomer})\label{fact:moemnt_decomposition}
Irreducible representations that appear in the decomposition of $U\mapsto U^{\otimes t}\otimes \bar{U}^{\otimes t}$ are $\pi_\lambda$ with $l(\lambda)=d$, $\Sigma(\lambda)=0$ and $\Sigma(\lambda_+) \leq t$. That is we have 
\begin{gather}\label{decomposition}
    U^{\otimes t}\otimes \bar{U}^{\otimes t}\simeq \bbone^{\oplus m_0}\oplus \bigoplus_{\lambda\in \Lambda_t}\pi_\lambda(U)^{\oplus m_{\lambda}}\simeq  \left(U\otimes \bar{U}\right)^{\otimes t},
\end{gather}
where 
\begin{gather}
    \label{eq:big-lambda}
    \Lambda_t=\left\{\lambda=(\lambda_1,\ldots,\lambda_d)|\,\lambda\in\mathbb{Z}^d,\lambda\neq 0,\,\forall_k\,\lambda_k\geq\lambda_{k+1},\,\Sigma(\lambda)=0,\,\Sigma(\lambda_{+})\leq t\right\},
\end{gather}
and $\bbone$ stands for the trivial representation and $m_0$ is its  multiplicity and $m_{\lambda}$ is the multiplicity of $\pi_\lambda$.
\end{fact}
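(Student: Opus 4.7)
The plan is to combine Schur--Weyl duality on each tensor factor with a short character-theoretic verification of the two constraints that define $\Lambda_t$. First I would recall Schur--Weyl duality: as a $U(d)\times S_t$ representation, $(\CC^d)^{\otimes t}$ decomposes as $\bigoplus_{\mu\vdash t,\,l(\mu)\leq d}V_{\mu}\otimes W_{\mu}$, where $V_{\mu}=\pi_{\mu}$ is the irreducible $U(d)$-representation with highest weight $\mu$ (padded with zeros to length $d$) and $W_{\mu}$ is the Specht module of $S_t$. Restricting to $U(d)$ this gives $U^{\otimes t}\simeq\bigoplus_{\mu}\pi_{\mu}^{\oplus f_{\mu}}$ with $f_{\mu}=\dim W_{\mu}$. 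For $\bar U^{\otimes t}$, the same argument applied to the contragredient action identifies it with $\bigoplus_{\nu}\pi_{\nu^{\ast}}^{\oplus f_{\nu}}$, where $\nu^{\ast}=(-\nu_d,-\nu_{d-1},\ldots,-\nu_1)$ is the highest weight of the dual of $\pi_{\nu}$. Tensoring yields $U^{\otimes t}\otimes\bar U^{\otimes t}\simeq\bigoplus_{\mu,\nu}(\pi_{\mu}\otimes\pi_{\nu^{\ast}})^{\oplus f_{\mu}f_{\nu}}$, and I would then invoke the Littlewood--Richardson rule (or Koike's rational analogue) to expand each $\pi_{\mu}\otimes\pi_{\nu^{\ast}}$ into irreducibles $\pi_{\lambda}$ with $l(\lambda)\leq d$.

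Next I would verify the two constraints carving out $\Lambda_t$. The condition $\Sigma(\lambda)=0$ follows from a central-character argument: the center $U(1)\cdot\bbone\subset U(d)$ acts on $U\otimes\bar U$ by $e^{i\theta}e^{-i\theta}=1$, hence trivially on $U^{\otimes t}\otimes\bar U^{\otimes t}$, whereas it acts on $\pi_{\lambda}$ by the scalar $e^{i\theta\Sigma(\lambda)}$, so only $\lambda$ with $\Sigma(\lambda)=0$ survive. The bound $\Sigma(\lambda_+)\leq t$ I would extract by restricting characters to the maximal torus: the character of $U^{\otimes t}\otimes\bar U^{\otimes t}$ equals $\bigl|\sum_i x_i\bigr|^{2t}=\bigl(\sum_i x_i\bigr)^{t}\bigl(\sum_i x_i^{-1}\bigr)^{t}$, and any monomial $\prod_i x_i^{a_i}$ in its expansion satisfies $\sum_{i:\,a_i>0}a_i\leq t$ since the positive powers can only arise from the $t$ factors of $\sum_i x_i$. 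Because the highest weight $\lambda$ is in particular a weight of any $\pi_{\lambda}\hookrightarrow U^{\otimes t}\otimes\bar U^{\otimes t}$, this forces $\Sigma(\lambda_+)\leq t$. The second isomorphism with $(U\otimes\bar U)^{\otimes t}$ is immediate by permuting tensor factors.

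The main obstacle is the converse direction, namely showing that \emph{every} $\lambda\in\Lambda_t$ actually appears with strictly positive multiplicity $m_\lambda$. This is where the argument turns genuinely combinatorial: for each admissible $\lambda$ one must exhibit a pair $(\mu,\nu)$ of partitions of $t$, each of length at most $d$, and a Littlewood--Richardson (or rational) tableau which produces $\lambda$ inside $\pi_{\mu}\otimes\pi_{\nu^{\ast}}$. A natural strategy is to take $\mu=\lambda_+$ together with a padding partition and $\nu$ to be the rearranged absolute values of the negative entries of $\lambda$ together with the same padding, and then read off positivity of the corresponding rational LR coefficient via Koike's formula. This existence step is the delicate part and is precisely what the cited work of Stroomer settles; the remainder of the argument is essentially bookkeeping.
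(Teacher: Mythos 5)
The paper does not actually prove this statement: it is labelled a Fact precisely because it is imported from the cited work of Stroomer, so there is no internal proof to compare against, and your outline should be judged on its own terms. The necessity direction of your argument is complete and correct: Schur--Weyl duality, the central-character argument for $\Sigma(\lambda)=0$, and the torus-character computation showing that any exponent vector $a$ occurring in $\bigl(\sum_i x_i\bigr)^t\bigl(\sum_i x_i^{-1}\bigr)^t$ satisfies $\sum_{i:\,a_i>0}a_i\le t$ together pin down exactly the constraints defining $\Lambda_t$. The only incomplete step is the converse, that every $\lambda\in\Lambda_t$ occurs with $m_\lambda\ge 1$ --- which the paper genuinely needs, since otherwise the equality $\delta(\nu_{\mathcal{S}},t)=\sup_{\lambda\in\Lambda_t}\delta(\nu_{\mathcal{S}},\lambda)$ in \eqref{eq:delta_t_sup} would only be an inequality --- and you defer this to Koike's formula as "the delicate part". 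In fact it closes more easily than you suggest: set $k=\Sigma(\lambda_+)\le t$, let $\mu$ be $\lambda_+$ padded with zeros and $\nu$ the partition formed by the absolute values of the negative entries of $\lambda$ in reverse order, so that $|\mu|=|\nu|=k$ and $\mu+\nu^{\ast}=\lambda$ (the supports of $\mu$ and $\nu^{\ast}$ are disjoint because $\lambda$ is non-increasing). Then $\pi_\lambda$ is the Cartan (highest-weight) component of $\pi_\mu\otimes\pi_{\nu^{\ast}}$, which always has multiplicity one in a tensor product of irreducibles, and $\pi_\mu\otimes\pi_{\nu^{\ast}}$ sits inside $U^{\otimes k}\otimes\bar U^{\otimes k}$ by Schur--Weyl; tensoring with the trivial summand of the remaining $(U\otimes\bar U)^{\otimes(t-k)}$ embeds $\pi_\lambda$ into $U^{\otimes t}\otimes\bar U^{\otimes t}$. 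No Littlewood--Richardson positivity analysis is required. With that one supplement your sketch is a self-contained proof of the Fact, which is more than the paper itself provides.
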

The representations occurring in decomposition \eqref{decomposition} are in fact irreducible representation of the projective unitary group, $PU(d)=U(d)/\sim$, where $U\sim V$ iff $U=e^{i\phi} V$. One can show that every irreducible representation of $PU(d)$ arises this way for some, possibly large, $t$ \cite{Dieck}. For $t=1$ decomposition (\ref{decomposition}) is particularly simple and reads $U\otimes \bar{U}\simeq\mathrm{Ad}_U\oplus \bbone$, where $\mathrm{Ad}_U$ is the adjoint representation of $U(d)$ and $PU(d)\simeq\mathrm{Ad}_{U(d)}$\footnote{By $\mathrm{Ad}_U$ we mean the matrix $\mathrm{Ad}_U:\mathfrak{su}(d)\rightarrow \mathfrak{su}(d)$, $\mathrm{Ad}_U(X)=UXU^{-1}.$}. 

We further note that $U\mapsto U^{\otimes t}\otimes \bar{U}^{\otimes t}$ is a real representation and as such can be decomposed into the direct sum of real irreducible representations.  When acting on a complex vector space, an irreducible real representation $\pi(U)$ can be either 
\begin{enumerate}
    \item irreducible on both complex and real vector spaces
    \item irreducible when acting on a real vector space but when acting on a complex vector space a direct sum $\pi(U)=\pi_\lambda(U)\oplus\overline{\pi_\lambda(U)}$, where $\pi_\lambda$ is a complex irreducible representation of $U(d)$ and $\overline{\pi_\lambda}$ is the conjugate representation of $\pi_\lambda$. 
\end{enumerate}
It is also known that $\overline{\pi_\lambda(U)}=\pi_{\lambda^\ast}(U)$, where $\lambda^\ast=-(\lambda_d,\lambda_{d-1},\ldots,\lambda_2,\lambda_1)$. Summing up irreducible representations showing up in the decomposition \eqref{decomposition} are either real or complex representations (there are no quaternion representations). Furthermore for $\lambda\in\Lambda_t$, the representation $\pi_\lambda$ is real iff $\lambda_{i}=-\lambda_{d-i+1}$, for any $i\in\{1,2,\ldots, d\}$ and is complex otherwise (see also Proposition 26.24 \cite{fulton1991representation} for analogous conditions when $\lambda$ is arbitrary). In the following we will denote by $\Lambda_t^r$ a subset of $\lambda$'s in $\Lambda_t$ that correspond to real representations and by $\Lambda_t^c$ a subset of $\lambda$'s in $\Lambda_t$ that correspond to complex representations. For any irreducible representation $\pi_{\lambda}$, $\lambda\in\Lambda_t$ we define
\begin{gather}\label{def:moment-lambda}
    T_{\nu_{\mathcal{S}},\lambda}\coloneqq \E{\nu_{\mathcal{S}}}\pi_\lambda(U),\,\,\,\delta(\nu_{\mathcal{S}},\lambda)\coloneqq  \|T_{\nu_{\mathcal{S}},\lambda}\|.
\end{gather}
One easily see that
\begin{gather}
    \delta(\nu_{\mathcal{S}},\lambda)= \left \|\E{\nu_{\mathcal{S}}}\pi_\lambda(U)\right\|\leq \E{\nu_{\mathcal{S}}}\|\pi_\lambda(U)\|=1
\end{gather}
Thus $\delta_{\nu_\S,\lambda}\in[0,1]$. It follows directly from Fact \ref{fact:moemnt_decomposition} that 
\begin{gather}\label{eq:delta_t}
    T_{\nu_{\mathcal{S}},t}\simeq \bbone^{\oplus m_0}\oplus\bigoplus_{\lambda\in\Lambda_t}\left (T_{\nu_{\mathcal{S}},\lambda}\right)^{\oplus m_\lambda}.
\end{gather}
Using the fact that $\E{\mu}\pi_\lambda(U)=0$ for any nontrivial irreducible representation $\pi_\lambda$ we see that $ T_{\mu,t}=\bbone^{\oplus m_0}$. Thus we have: 
\begin{gather}
    \label{eq:delta_t_sup}
   \delta(\nu_\S,t)= \left\|T_{\nu_{\mathcal{S}},t}-T_{\mu,t} \right\|=\left\|\bigoplus_{\lambda\in\Lambda_t}\left (T_{\nu_{\mathcal{S}},\lambda}\right)^{\oplus m_\lambda}\right\|=\mathrm{sup}_{\lambda\in\Lambda_t}\|T_{\nu_{\mathcal{S}},\lambda}\|=\mathrm{sup}_{\lambda\in\Lambda_t}\delta(\nu_{\mathcal{S}},\lambda).
\end{gather}
Thus $ \delta(\nu_\S,t)\in [0,1]$. An ensemble for which $\delta(\nu_\S,t)=\delta<1$ is called {\it $\delta$-approximate $t$-design}. 
One can also define 
\begin{gather}
\delta(\nu_\S):=\mathrm{sup}_{t}\delta(\nu_{\mathcal{S}},t)\in[0,1],
\end{gather}
which, by the Peter-Weyl theorem, is equal to the operator norm of the moment operator $T_{\nu_\S}$ acting on functions $f\in L^2(PU(d))$ that have vanishing mean value, $\int d\mu(U) f(U)=0$ (see \cite{https://doi.org/10.48550/arxiv.2201.11774} for detailed derivation):
\begin{gather}\label{def:moment}
   (T_{\nu_\S}f)(U)=\int d\nu_\S(V) f(V^{-1}U).
\end{gather}
Finally we note that since $T_{\nu_\mc{S},\lambda^\ast}=\overline {T}_{\nu_\S,\lambda}$ we have $\|T_{\nu_\S,\lambda}\|=\|T_{\nu_\S,\lambda^\ast}\|$. In order to remove this  redundancy we define $\tilde{\Lambda}_t^c$ to be a subset of $\Lambda_t^c$  that for any pair $\lambda,\lambda^\ast\in\Lambda_t^c$ contains either $\lambda$ or $\lambda^\ast$, but not both. Then
\begin{gather}
    \label{eq:def_lam_ess}
    \Lambda_t^{\mathrm{ess}}:=\Lambda_t^r\cup\tilde{\Lambda}_t^c,\\
    \Lambda^{\mathrm{ess}} := \bigcup_{t} \Lambda_t^{\mathrm{ess}}.
\end{gather}
In particular 
\begin{gather}\label{eq:delta_S_t}
\delta(\nu_\S,t)=\mathrm{sup}_{\lambda\in\Lambda_t^{\mathrm{ess}}}\delta(\nu_{\mathcal{S}},\lambda),\\
\delta(\nu_\mathcal{S}) = \sup_t \delta(\nu_\mathcal{S}, t).
\end{gather}
We will also need some properties of partitions. Recall that a partition of a positive integer $k$ is a way of writing $k$ as a sum of positive integers. We will denote by $p_{n}(k)$ the number of partitions of $k$ with exactly $n$ nonzero integers and by $\tilde{p}_{n}(k)$ the number of partitions of $k$ with at most $n$ nonzero integers. Finally by $p(k)$ we will denote the number of all partitions of the integer $k$. We have the following two Facts:
\begin{fact}\label{fact:partitions1}(\cite{concentration})
Assume $\lambda\in\tilde{\Lambda}_t$. Then $\|\lambda\|_1=2k$, where the integer $k$ satisfies $1\leq k\leq t$. Moreover, the number of distinct irreducible representations $\pi_\lambda$ with $\|\lambda\|_1 =2k$ is given by
\begin{gather}\label{smalld}
\alpha_{2k}=\left\{
	    \begin{array}{ll}
		    p(k)^2  & d\geq 2k,\\
		    \sum_{n=1}^{k} p_n(k)\tilde{p}_{d-n}(k)& k+1 \leq d<2k,\\
      \sum_{n=1}^{d-1} p_n(k)\tilde{p}_{d-n}(k)& 2\leq d\leq k.
	    \end{array}
    \right.
\end{gather}
Moreover, for any $d$ we have $\alpha_{2k}\leq p(k)^2$.
\end{fact}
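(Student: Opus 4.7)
The plan is to decompose $\lambda \in \Lambda_t$ into its strictly positive, zero, and strictly negative parts and reduce the counting to pairs of integer partitions. First I would write $\lambda_+ = (\lambda_1, \ldots, \lambda_n)$ for the $n$ strictly positive entries at the top and $\lambda_- = (\lambda_{d-m+1}, \ldots, \lambda_d)$ for the $m$ strictly negative entries at the bottom, with $d - n - m$ zeros in between. The constraint $\Sigma(\lambda) = 0$ gives $\Sigma(\lambda_+) = -\Sigma(\lambda_-)$, so $\|\lambda\|_1 = 2\Sigma(\lambda_+)$ is automatically even; setting $k := \Sigma(\lambda_+)$, the condition $\lambda \neq 0$ forces $n, m \ge 1$ and hence $k \ge 1$, while the defining constraint $\Sigma(\lambda_+) \le t$ yields $k \le t$. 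This handles the first assertion.

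For fixed $k$, the sequence $\lambda_+$ is a weakly decreasing positive integer sequence summing to $k$ with exactly $n$ entries, i.e.\ a partition of $k$ into exactly $n$ parts, contributing $p_n(k)$ choices; similarly $-\lambda_-$ read in reverse is a partition of $k$ into exactly $m$ parts, contributing $p_m(k)$ choices. The intermediate zeros carry no extra data once $n$ and $m$ are fixed, subject only to $n + m \le d$. Therefore
\begin{equation*}
\alpha_{2k} = \sum_{\substack{n, m \ge 1 \\ n + m \le d \\ n, m \le k}} p_n(k)\, p_m(k),
\end{equation*}
where the bounds $n, m \le k$ are automatic since $p_n(k) = 0$ whenever $n > k$.

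The three cases in \eqref{smalld} are then distinguished by which of the constraints $n + m \le d$ or $n, m \le k$ is binding. If $d \ge 2k$, the inequality $n + m \le d$ is implied by $n, m \le k$, so the sum factors as $\bigl(\sum_{n=1}^{k} p_n(k)\bigr)^2 = p(k)^2$. If $k + 1 \le d < 2k$, then $n$ still ranges over $\{1, \ldots, k\}$ but for each such $n$ the variable $m$ is capped at $d - n$, yielding $\sum_{n=1}^{k} p_n(k)\, \tilde{p}_{d-n}(k)$. Finally if $2 \le d \le k$, then $n$ is additionally capped at $d-1$ so that $m \ge 1$ remains feasible, giving $\sum_{n=1}^{d-1} p_n(k)\, \tilde{p}_{d-n}(k)$. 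The uniform bound $\alpha_{2k} \le p(k)^2$ then follows in every regime from $\tilde{p}_{d-n}(k) \le p(k)$ combined with $\sum_{n} p_n(k) \le p(k)$.

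The argument is essentially routine bookkeeping and I do not expect a genuine obstacle. The only care needed is at the boundaries: one must distinguish $p_n(k)$ (exactly $n$ parts) from $\tilde{p}_n(k)$ (at most $n$ parts), enforce $n, m \ge 1$ to exclude $\lambda = 0$, and correctly identify which regime makes $n + m \le d$ vacuous versus binding, so that the three branches of \eqref{smalld} glue together consistently at $d = 2k$ and $d = k+1$.
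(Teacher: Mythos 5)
Your argument is correct and complete: the decomposition of a weakly decreasing, trace-zero $\lambda$ into a partition $\lambda_+$ of $k$ with exactly $n$ parts and a reversed-negated tail giving a partition of $k$ with exactly $m$ parts, subject only to $n+m\le d$, is exactly the right bijection, and your case analysis of when $n+m\le d$ is vacuous versus binding reproduces all three branches of \eqref{smalld} consistently (including the boundary checks at $d=2k$ and $d=k$). Note that the paper itself states this as a Fact imported from \cite{concentration} without reproducing a proof, so there is nothing to compare against; your derivation is the standard one and I see no gap.
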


\begin{fact}\label{fact:partitions-properties}
(\cite{ORUC2016355}) Let $k$ and $n$ be positive integers. Then
\begin{gather}
    \frac{0.0036}{k}e^{\pi\sqrt{\frac{2k}{3}}}\leq p(k)\leq \frac{5.44}{k}e^{\pi\sqrt{\frac{2k}{3}}},\\\nonumber
    p_n(k)\leq \frac{1}{n!(n-1)!}\left(k+\frac{n(n-3)}{4}\right)^{n-1},\,4\leq n\leq k.\\\nonumber
    \end{gather}
\end{fact}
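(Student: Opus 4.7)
The plan is to prove both bounds as classical results from the analytic and enumerative theory of integer partitions. For the asymptotic estimates on $p(k)$, I would start from Euler's generating-function identity
\[
\sum_{k \geq 0} p(k)\, q^k \;=\; \prod_{n \geq 1} \frac{1}{1-q^n},
\]
and apply either the circle / saddle-point method of Hardy--Ramanujan or, for cleaner error control, Rademacher's exact convergent series. This yields the asymptotic $p(k) \sim \frac{1}{4k\sqrt{3}}\,e^{\pi\sqrt{2k/3}}$, and since $\frac{1}{4\sqrt{3}} \approx 0.144$ sits between the stated constants $0.0036$ and $5.44$, the exponential shape is already correct; the remaining task is just to make the two-sided inequality effective for every $k \geq 1$. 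I would do this by bounding the leading Rademacher term on both sides, controlling the rapidly-decaying tail of the series, and checking the finitely many small values of $k$ outside the asymptotic regime by direct calculation.

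For the bound on $p_n(k)$, the natural starting point is the bijection $(a_1 \geq \cdots \geq a_n \geq 1) \mapsto (b_i := a_i + (n-i))$, which sends partitions of $k$ with exactly $n$ parts to strictly decreasing tuples of positive integers with sum $k + \binom{n}{2}$. Hence $p_n(k)$ equals the number of $n$-element subsets of $\mathbb{Z}_{>0}$ summing to $k + \binom{n}{2}$. A crude estimate bounds this by $\frac{1}{n!}$ times the number of ordered positive tuples with that sum, namely $\binom{k + \binom{n}{2} - 1}{n-1}$, combined with $\binom{m}{n-1} \leq m^{n-1}/(n-1)!$, giving
\[
p_n(k) \;\leq\; \frac{\bigl(k + \tbinom{n}{2} - 1\bigr)^{n-1}}{n!\,(n-1)!}.
\]
Since the stated shift $n(n-3)/4$ is strictly smaller than $\binom{n}{2} - 1$ for $n \geq 4$, this naive bound is not enough: I would sharpen it by extracting the coefficient of $q^k$ in the generating function $\frac{q^n}{\prod_{j=1}^n (1-q^j)}$ via partial fractions (whose leading pole at $q=1$ yields the $k^{n-1}/(n!(n-1)!)$ asymptotic), or by exploiting the strictness of the $b_i$ more aggressively through an AM/GM-type inequality on their consecutive differences.

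The main obstacle in both parts is the bookkeeping of explicit constants rather than any conceptual difficulty. For $p(k)$, translating the standard Hardy--Ramanujan asymptotic into one-sided inequalities uniform in $k \geq 1$ requires quantitative control of Rademacher error terms together with finite-case verification. For $p_n(k)$, pushing past the naive strict-partition bijection down to the sharper shift $n(n-3)/4$ is the real technical core, and I expect it to demand careful residue extraction from the partition-into-parts-$\leq n$ generating function with tight control over the subleading polynomial corrections.
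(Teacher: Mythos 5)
There is nothing in the paper to compare your argument against: this statement is labelled a \emph{Fact} precisely because, as the authors announce in the introduction, Facts are results quoted from the literature (here \cite{ORUC2016355}), and no proof is given in the paper. So the only question is whether your plan would actually deliver the two inequalities, and as written it does not. For the two-sided bound on $p(k)$ your route (Rademacher's series with explicit, e.g.\ Lehmer-type, error control plus a finite check of small $k$) is sound in principle and the target constants $0.0036$ and $5.44$ leave a factor of roughly $40$ of slack around the true constant $1/(4\sqrt{3})$, so this part is credible, though much heavier than the elementary generating-function/saddle estimates actually used in the cited source; in any case it remains a programme, not a proof, until the error terms are made quantitative.

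The genuine gap is in the second inequality, and you flag it yourself. The strict-partition bijection gives $p_n(k)\le \frac{1}{n!}\binom{k+\binom{n}{2}-1}{n-1}$, hence a shift of order $\binom{n}{2}-1$ (or $n(n-2)/2$ after AM--GM on the falling factorial), whereas the claim requires the much smaller shift $n(n-3)/4$; the difference is of order $n^2/4$ and cannot be recovered by the crude ordered-tuple count. Your two proposed repairs --- partial-fraction extraction from $q^n/\prod_{j\le n}(1-q^j)$, or ``exploiting strictness more aggressively'' --- are exactly where the theorem lives, and neither is carried out; the partial-fraction expansion gives the correct leading term $k^{n-1}/(n!(n-1)!)$ but controlling all subleading polynomial contributions uniformly in $k$ is the entire content of the result. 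The missing concrete tool is the recurrence $p_n(k)=p_{n-1}(k-1)+p_n(k-n)$, which permits an induction on $n$ in which $p_n(k)=\sum_{j\ge 0}p_{n-1}(k-1-jn)$ is compared with an integral of the inductive polynomial bound; this is how shifts as small as $n(n-3)/4$ are obtained. One further caution: as transcribed in the paper the inequality fails at the boundary case $n=k=4$, where $p_4(4)=1$ but $\frac{1}{4!\,3!}(4+1)^3=125/144<1$, so any proof must either exclude $k=n$ or work from the precise hypotheses of the original theorem rather than the version quoted here.
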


\begin{lemma}\label{lemma:dim-bound-new}
Let $\lambda\in\Lambda_t$ be such that $\|\lambda\|_1=2k$ and $l(\lambda) = d \ge 2$. Then $d_\lambda \geq 2k$. Moreover, 
\begin{gather*}
d_\lambda\geq \left\{
	    \begin{array}{ll}
		    \binom{d+1}{2} & 2k\leq d,\\
		    \binom{d}{2} + (d-1)2k & 2k \geq d.\\
	    \end{array}
    \right.
\end{gather*}
\end{lemma}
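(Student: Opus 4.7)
The plan is to apply the Weyl dimension formula \eqref{eq:weyl_dimension} and exploit that, since $\lambda$ is weakly decreasing, every factor $\frac{\lambda_i-\lambda_j+(j-i)}{j-i}$ is at least $1$, with a factor being strictly greater than $1$ precisely when $\lambda_i>\lambda_j$. The constraints $\Sigma(\lambda)=0$ and $\|\lambda\|_1=2k$ force at least one strictly positive and one strictly negative entry, with $\Sigma(\lambda_+)=-\Sigma(\lambda_-)=k$. Let $d_+$ and $d_-$ denote the numbers of strictly positive and strictly negative entries; then $\lambda_1\ge \lceil k/d_+\rceil$ and $|\lambda_d|\ge\lceil k/d_-\rceil$, and in particular $\lambda_1-\lambda_d\ge 2$ always, and can be as large as $2k$.

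For the universal bound $d_\lambda\ge 2k$, I would consider two ``telescoping'' products in the Weyl formula: the first row $\prod_{j=d_++1}^{d}\frac{\lambda_1-\lambda_j+(j-1)}{j-1}$ and the last column $\prod_{i=1}^{d-d_-}\frac{\lambda_i-\lambda_d+(d-i)}{d-i}$. In almost all terms one uses $\lambda_1-\lambda_j\ge 1$ (respectively $\lambda_i-\lambda_d\ge 1$), which gives telescoping factors $\frac{j}{j-1}$ that collapse to a product of order $d$; the extreme factor $(1,d)$ contributes $\frac{\lambda_1-\lambda_d+(d-1)}{d-1}$, which is at least $\frac{2k/(d-1)+(d-1)}{d-1}$. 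Multiplying these families together yields a lower bound linear in $k$, giving $d_\lambda\ge 2k$.

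For the refined bounds, the argument splits on $2k$ vs.\ $d$. In the sparse case $2k\le d$ at least $d-2k\ge 0$ entries of $\lambda$ vanish, and I would induct on $k$. The base $k=1$ forces $\lambda=(1,0,\dots,0,-1)$ (the adjoint-type weight), for which the telescoping computation gives $d_\lambda=d^2-1\ge \binom{d+1}{2}$ whenever $d\ge 2$. The inductive step follows from a monotonicity argument: passing from $\lambda$ to $\lambda+e_1-e_d$ multiplies every Weyl factor involving index $1$ or $d$ by a quantity $\ge 1$, so $d_\lambda$ cannot decrease. In the dense case $2k\ge d$, essentially every factor is strictly greater than $1$; bounding $\binom{d}{2}$ of them trivially by $1$ in log-scale yields the baseline $\binom{d}{2}$, while the linear-in-$k$ contribution is extracted from the extremal factor $\frac{\lambda_1-\lambda_d+(d-1)}{d-1}$, where $\lambda_1-\lambda_d$ can be as large as $2k$, telescoped through the first row and last column to produce the additive correction $(d-1)\cdot 2k$.

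The main obstacle is the dense case: the multiplicative structure of the Weyl formula must be carefully converted into the additive bound $\binom{d}{2}+(d-1)\cdot 2k$, without double-counting the contributions from the ``first row'' and ``last column'' families that share the corner factor $(1,d)$. Handling this cleanly requires a careful accounting of which factors provide the ``baseline'' $\binom{d}{2}$ and which provide the linear-in-$k$ term, together with a separate monotonicity check that rules out the possibility that an extremal $\lambda$ concentrating all its mass on the first and last rows could violate the bound.
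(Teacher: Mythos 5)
Your route --- attacking the Weyl product directly and telescoping the first-row and last-column factors --- is genuinely different from the paper's, which first reduces everything to multiples of a single fundamental weight via a cited monotonicity result, $d_\lambda \ge d_{h(\lambda)\eta(1)}$ with $h(\lambda)=\lambda_1-\lambda_d$, and then studies the one-variable polynomial $f(k)=d_{(4k/d)\eta(1)}$. The problem is that your telescoping, as described, does not deliver even the first bound $d_\lambda\ge 2k$. The only factor whose numerator you let grow with $k$ is the corner $(1,d)$, contributing roughly $\frac{\lambda_1-\lambda_d+d-1}{d-1}$, while every other first-row/last-column factor is downgraded to $\frac{j}{j-1}$; the telescoped constants multiply to $O(1)$ and the corner is $O(k/d)$, so the product is $O(k/d+d)$, not $\Omega(k)$. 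Concretely, for $\lambda=(1,1,-1,-1)$ in $d=4$ (so $k=2$) your recipe yields $\tfrac32\cdot\tfrac32\cdot\tfrac{2\cdot2+3}{3}=\tfrac{15}{4}<4=2k$, even though $d_\lambda=20$. To get growth in $k$ one must keep $\lambda_1\ge k/d_+$ in \emph{all} the factors $(1,j)$ with $j>d_+$ (and dually in the last column) --- which is exactly what the paper's reduction to $d_{h\eta(1)}=\binom{h+d-1}{d-1}$, a degree-$(d-1)$ polynomial in $h$, packages automatically. Separately, your induction in the sparse case is structurally broken: not every dominant $\lambda$ with $\|\lambda\|_1=2k$ equals $\lambda'+e_1-e_d$ for a dominant $\lambda'$ with $\|\lambda'\|_1=2(k-1)$ (again $(1,1,-1,-1)$ is a counterexample), so the inductive step does not reach all weights.

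The ``main obstacle'' you flag in the dense case cannot be overcome by more careful bookkeeping, because the stated bound is false there. Take $d=3$ and $\lambda=(2,-1,-1)$, so $k=2$ and $2k=4\ge d$: up to a determinant twist this is $\mathrm{Sym}^3\mathbb{C}^3$, with $d_\lambda=10$, whereas $\binom{3}{2}+(d-1)2k=3+8=11$. (The paper's own proof slips at the same point: it applies the tangent-line estimate as $f(k)\ge f(d/2)+f'(d/2)\,k$ instead of $f(k)\ge f(d/2)+f'(d/2)(k-d/2)$; the corrected version gives $d_\lambda\ge\binom{d}{2}+2(d-1)(k-d/2)$, which still implies $d_\lambda\ge 2k$ for $d\ge 2$ but not the displayed refinement.) So before investing in the delicate accounting you describe, the target inequality itself has to be weakened; once it is, the reduction to a single polynomial $f(k)$ is a far more economical vehicle than factor-by-factor telescoping.
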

\begin{proof}
Let  
\begin{gather*}
    \eta(n) := (\underbrace{1, ..., 1}_{n}, \underbrace{0, \ldots, 0 }_{d-n}) \quad \mathrm{for} \quad n=1 , ..., d-1,
\end{gather*}
be the so-called fundamental weights of $SL(d,\mathbb{C})$ \cite{fulton1991representation}. For any $\lambda=(\lambda_1,\ldots,\lambda_d)$ that belongs to  $\Lambda_t$ we define $\lambda^{Y}:=(\lambda_1-\lambda_d,\ldots,\lambda_{d-1}-\lambda_d,0)$. We note that $\lambda^{Y}$ can be viewed as the Young diagram corresponding to the irreducible representation of $SU(d)$, and hence $SL(d,\mathbb{C})$, that arises as the restriction $\pi_\lambda|_{SU(d)}$. One can easily see that $\lambda^Y$ can be written as 
\begin{gather}    
    \lambda^{Y} = \sum_{n=1}^{d-1} \lambda^s_n \eta(n), \quad\, \lambda^s_n:=\lambda_n-\lambda_{n+1}.
\end{gather}
Let us also define
\begin{gather}
h(\lambda) := \sum_{n=1}^{d-1} \lambda^s_n=\lambda_1-\lambda_d.
\end{gather}
Lemmas 2.1 and 2.2 from \cite{goldstein2016} combined with the fact that $d_\lambda=d_{\lambda^Y}$ ensure that
\begin{equation}
\label{eq:dimension_bound}
    d_\lambda \ge \min_{1 \le n \le d-1} d_{h(\lambda) \cdot \eta(n)} = d_{h(\lambda) \cdot \eta(m)},
\end{equation}
where $1 \le m \le d-1$ is such that:
\begin{equation}
\label{eq:minimal_dim}
    \min_{1 \le n \le d-1} d_{\eta(n)} = d_{\eta(m)}.
\end{equation}
\noindent Therefore in order to use \eqref{eq:dimension_bound} we need to find out which  $d_{\eta(n)}$ is the smallest one. For this purpose we use the dimension formula \eqref{eq:weyl_dimension} and get
\begin{gather*}
    d_{\eta(n)} = \prod_{1 \le i < j \le d} \frac{j-i+\eta(n)_i-\eta(n)_j}{j-i}  = \prod_{1 \le i \le n} \prod_{n < j \le d} \frac{j-i+1}{j-i} = \\
    = \prod_{1 \le i \le n} \frac{(d+1-i)!(n-i)!}{(n+1-i)!(d-i)!} = \prod_{1 \le i \le n} \frac{d+1-i}{n+1-i} = \frac{d!}{(d-n)!n!} = \binom{d}{n},
\end{gather*}
which is more or equal to $d$ for $1 \le n \le d-1$. Therefore:
\begin{equation*}
    \min_{1 \le n \le d-1} d_{\eta(n)} \ge d = d_{\eta(1)} = d_{\eta(d-1)},
\end{equation*}
and $m$ from \eqref{eq:dimension_bound} and \eqref{eq:minimal_dim} is equal to $1$ (or $d-1$).
\noindent If $\| \lambda \|_1 = 2k \le d$ then clearly:
\begin{equation}
\label{eq:h_bound1}
    h(\lambda) = \lambda_1-\lambda_d \ge 1-(-1) = 2 \ge \frac{4k}{d},
\end{equation}
and if $2k > d$ then:
\begin{equation*}
    \zeta(n) = (\underbrace{\frac{k}{n}, ..., \frac{k}{n}}_{n}, \underbrace{-\frac{k}{d-n}, ..., -\frac{k}{d-n} }_{d-n}) \quad \mathrm{for} \quad n=1 , ..., d-1,
\end{equation*}
is an element of $\mathbb{R}^d$ with the smallest possible $h(\zeta(n))$ that satisfies $l(\zeta(n)_+)=n$, $\zeta(n)_i \ge \zeta(n)_{i+1}$, $\Sigma(\zeta(n))=0$ and $\| \zeta(n) \|_1 = 2k$. Thus
\begin{equation}
\label{eq:h_bound2}
    h(\lambda) \ge \min_{1\le n \le d-1} h(\zeta(n)) = \min_{1\le n \le d-1} \frac{kd}{n(d-n)} = \frac{kd}{\floor{\frac{d}{2}} \ceil{\frac{d}{2}}} \ge \frac{4k}{d}.
\end{equation}

\noindent Now we can prove our thesis for $2k \le d$. Using \eqref{eq:dimension_bound} and \eqref{eq:h_bound1} we get that:
\begin{gather*}
    d_\lambda \ge d_{h(\lambda) \cdot \eta(1)} \ge d_{2 \cdot \eta(1)} = \prod_{1 \le i < j \le d} \frac{j-i+2(\eta(n)_i-\eta(n)_j)}{j-i} =\\ 
    =\prod_{1 < j \le d} \frac{j-1+2}{j-1} = \frac{(d+1)!}{2!(d-2)!} = \binom{d+1}{2},
\end{gather*}
which is bigger than $d$ when $d \ge 2$. Hence:
\begin{equation*}
    d_\lambda \ge d \ge 2k.
\end{equation*}
In case $2k > d$ we use \eqref{eq:h_bound2} to obtain:
\begin{gather}
\label{eq:def_f}
    d_{h(\lambda) \cdot \eta(1)} = \prod_{1 < j \le d} \frac{j-1+h(\lambda)}{j-1} \ge \prod_{1 < j \le d} \frac{j-1+\frac{4k}{d}}{j-1} = \frac{ \prod_{1 < j \le d} [ d(j-1)+4k ]}{(d-1)!d^{d-1}}=:f(k)
\end{gather}
Function $f$ defined in \eqref{eq:def_f} can be naturally extended to $\mathbb{R}$. Under this extension $f$ is a polynomial of rank $d-1$ with positive coefficients. From \eqref{eq:dimension_bound} we have $d_\lambda \ge f(k)$ so we want to prove that $f(k) \ge 2k$ for any $k \ge \frac{d}{2}$. We first note that at $k = \frac{d}{2}$ this inequality is satisfied:
\begin{equation*}
    f\left(\frac{d}{2}\right) = \binom{d}{2} \ge d.
\end{equation*}
Next we compute the derivative of $f$:
\begin{gather*}
    f'(k) = 4\sum_{i=2}^{d} \frac{ \prod_{\substack{1 < j \le d \\ j \neq i}} [ d(j-1) + 4k ]}{(d-1)!d^{d-1}} = 4f(k) \sum_{i=2}^{d} \frac{1}{d(i-1)+4k} \ge 4f(k) \frac{d-1}{d(d-1)+4k}.
\end{gather*}
Using \eqref{eq:def_f} we get:
\begin{gather*}
    f'(k) \ge \frac{4 \prod_{1 < j \le d-1} [ d(j-1)+4k ]}{(d-2)!d^{d-1}} \ge \frac{4 \prod_{1 < j \le d-1} [ d(j-1)+2d ]}{(d-2)!d^{d-1}} = \frac{4}{d} \binom{d}{2}=2(d-1) > 1,
\end{gather*}
where in the second inequality we used our assumption that $2k > d$. In summary, we have that $f\left(\frac{d}{2}\right) > d$ and that for any $k > \frac{d}{2}$ it holds $f'(k) > 1$. It follows that for $k \ge \frac{d}{2}$:
\begin{equation*}
    f(k) \ge f\left(\frac{d}{2}\right) + f'\left(\frac{d}{2}\right)k \ge \binom{d}{2} + (d-1) 2k > 2k.
\end{equation*}

\end{proof}

\section{Spectral measures and moment operators}\label{sec:spectral_properties}
We say that the measure $\nu_\S$ is symmetric, if for every $U$ in $\S$ the inverse $U^{-1}$ is also in $\S$. For symmetric measures moment operators $T_{\nu_\S}$, $T_{\nu_\S,t}$ and $T_{\nu_\S,\lambda}$ defined by \eqref{def:moment}, \eqref{def:moment-t}  and \eqref{def:moment-lambda} are bounded self-adjoint operators and therefore their spectra are well defined. In this section we are interested in asymptotic properties of these spectra when the size of moment operators grows to infinity.  

Recall that for a self-adjoint $n\times n$ matrix $H_n$ the spectral measure of any interval $[a,b]\subset\mathbb{R}$ is defined by:
\begin{equation*}
    \sigma_{H_n}([a,b]) := \frac{1}{n} (\# \; of \; H_n \; \mathrm{eigenvalues} \;  \mathrm{in} \; [a,b]),
\end{equation*}
For $\eigval{1}{H_n} \le ... \le \eigval{n}{H_n}$ eigenvalues of $H_n$ the $m$-th moment of $\sigma_{H_n}$ is:
\begin{equation*}
    \sigma_{H_n}^{(m)} = \int x^m d\sigma_{H_n}(x) = \frac{1}{n} \sum_{i=1}^n \eigval{i}{H_n}^m = \frac{1}{n} \mathrm{Tr}(H_n^m) = \mathrm{tr}(H_n^m).
\end{equation*}
Recall that a compactly supported measure is determined by its moments. Thus if for a sequence $\{H_n\}_{n=1}^\infty$ of commonly bounded self-adjoint matrices there is measure $\sigma_H$ for which 
\begin{gather}
    \sigma_{H}^{(m)}=\lim_{n\rightarrow \infty}\sigma_{H_n}^{(m)},
\end{gather}
then this measure is unique and 
\begin{gather*}
    \int f(x) d\sigma_{H_n}(x) \xrightarrow{n \rightarrow \infty} \int f(x) d\sigma_{H}(x),
\end{gather*}
where $f$ is any continuous function, that is $\sigma_{H_n}$ converges weakly to $\sigma_{H}$. In case when $H_n$ is a random matrix we define the averaged spectral measure:
\begin{equation*}
    \bar{\sigma}_{H_n}([a,b]) := \mathbb{E}[\sigma_{H_n}([a,b])]
\end{equation*} 
One easily checks that $m$-th moment of $\bar{\sigma}_{H_n}$ is given by
\begin{gather}
    \bar{\sigma}_{H_n}^{(m)}= \int x^m d\bar{\sigma}_{H_n}(x)=\mathbb{E}\mathrm{tr}(H_n^m)=\mathbb{E}\sigma_{H_n}^{(m)}.
\end{gather}
Thus if for a sequence $\{H_n\}_{n=1}^\infty$ of commonly bounded random selfadjoint matrices there is measure $\bar\sigma_H$ for which 
\begin{gather}
    \bar\sigma_{H}^{(m)}=\lim_{n\rightarrow \infty}\bar\sigma_{H_n}^{(m)},
\end{gather}
then this measure is unique and 
\begin{gather*}
    \int f(x) d\bar\sigma_{H_n}(x) \xrightarrow{n \rightarrow \infty} \int f(x) d\bar\sigma_{H}(x),
\end{gather*}
where $f$ is any continuous function, that is $\bar\sigma_{H_n}$ converges weakly to $\bar\sigma_{H}$. 

Let us consider sequence $\{T_{\nu_\S,t}\}_{t=1}^\infty$, where $\S\subset U(d)$ is a finite symmetric set. In order to simplify notation let us denote by $\sigma_{\S,t}$ the spectral measure of $T_{\nu_\S,t}$. Let $\S_k=\{U_1U_2\ldots U_k | U_i\in \S\}$ and $\nu_\S^{\ast k}$ be the $k$-fold convolution of $\nu_\S$. Of course the support of $\nu_\S^{\ast k}$ is given by $\S_k$. We have
\begin{gather}
    \sigma_{\S,t}^{(m)}=\frac{\mathrm{Tr}\left(T^m_{\nu_\S,t}\right)}{d^{2t}}=\frac{1}{d^{2t}}\mathrm{Tr}\left(\int_{U(d)}d\nu_\S^{\ast m}(U) U^{ t,t}\right)= \sum_{U_1,\ldots , U_m \in \S }\nu_\S^{\ast m}(U_1\ldots U_m)\mathrm{tr}\left(\left(U_1\ldots U_m\right)^{t,t}\right).
\end{gather}
Note next that for any $U\in U(d)$ that has eigenvalues $\lambda_j(U)=e^{i\phi_j}$ we have
\begin{gather}
    \mathrm{tr}{U^{t,t}}=|\mathrm{tr}(U)|^{2t}=\left|\frac{e^{\phi_1}+\ldots+e^{i\phi_d}}{d}\right|^{2t}\xrightarrow{t \rightarrow \infty} \left\{
	    \begin{array}{ll}
		    1  & \text{if $U \propto I$ },\\
		    0 & \text{otherwise}.
	    \end{array}
    \right.
\end{gather}
Assuming the group generated by $\S$ is free we get
\begin{gather}
    \mathrm{tr}{(U_1\ldots  U_m )^{t,t}}\xrightarrow{t \rightarrow \infty} \left\{
	    \begin{array}{ll}
		    1  & \text{if $U_1\ldots  U_m =I$ },\\
		    0 & \text{otherwise}.
	    \end{array}
    \right.
\end{gather}
Thus in case when $\nu_\S$ is the uniform measure on $\S$ we have 
\begin{gather}
    \label{T-measure-mom}
    \lim_{t \rightarrow \infty} \sigma_{\S,t}^{(m)} = \frac{1}{|\S|^m} \sum_{\substack{U_1, ... , U_m \in \S \\ U_1 ... U_m = \mathbb{I}}} 1.
\end{gather}
The above summation can be interpreted as the number of walks of length $m$ that begin and end in some vertex $v$ on $|S|$-regular tree. This problem was solved in \cite{kesten} and 
\begin{gather}
     \lim_{t \rightarrow \infty} \sigma_{\S,t}^{(m)}=\left\{
	    \begin{array}{ll}
		    \frac{1}{|\S|^m}\sum_{j=1}^{m/2}{m-j\choose m/2}\frac{j}{m-j}|\S|^j(|\S-1|)^{m/2-j}  & \text{if $m$-even},\label{kesten-moments}\\
		    0 & \text{otherwise}.
	    \end{array}
    \right.
\end{gather}
The author of \cite{kesten} also showed that there is a measure $\sigma_{\S}$ such that $\sigma_{\S}^{(m)}= \lim_{t \rightarrow \infty} \sigma_{\S,t}^{(m)}$. This measure is know as the Kesten-McKay measure \cite{lubotzky,karol1} and is given by:
\begin{gather}
    \label{kesten}
    d\sigma_{\S}(x) = \frac{|\S| \sqrt{\delta_{\mathrm{opt}}^2(\S) - x^2}}{2 \pi (1-x^2)}\mathbf{1}_{[-\delta_{\mathrm{opt}}(\S) ,\delta_{\mathrm{opt}}(\S) ]}dx,
\end{gather}
where $\delta_{\mathrm{opt}}(\S) := \frac{2\sqrt{|\S| - 1}}{|\S|}$. We also note that the $m$-th moment of the spectral measure of $T_{\nu_{\S}}^2$, i.e. $\lim_{t\rightarrow \infty}\frac{\mathrm{Tr}\left(T^{2m}_{\nu_\S,t}\right)}{d^{2t}}$ is given by 
\begin{gather}\label{eq:moments-square}
     \frac{1}{|\S|^{2m}}\sum_{j=1}^{m}{2m-j\choose m}\frac{j}{2m-j}|\S|^j(|\S-1|)^{m-j}.
\end{gather} 
Using the symmetry of measure \eqref{kesten} making the change of variables $y=x^2$ in \eqref{kesten} one can easily see that when $t\rightarrow \infty$ the spectral measure of $T_{\nu_{\S},t}^2$ is given by
\begin{gather}\label{eq:spectral-measure-square}
    \frac{|\S| \sqrt{\delta_{\mathrm{opt}}^2(\S)y - y^2}}{2 \pi (1-y)y}\mathbf{1}_{[0 ,\delta_{\mathrm{opt}}^2(\S) ]}dy.
\end{gather}
One can verify that the $m$-th moment of \eqref{eq:spectral-measure-square}, i.e.
\begin{gather}
    \int_{0}^{\delta^2_{\mathrm{opt}}(\S)}y^{m-1} \frac{|\S| \sqrt{\delta_{\mathrm{opt}}^2(\S)y - y^2}}{2 \pi (1-y)}dy,
\end{gather}
is indeed given by \eqref{eq:moments-square} for every $m\in\mathbb{Z}_{+}$. We are now ready to treat the case when $\mc{S}$ is not symmetric. To this end assume $\S=\{U_1,\ldots,U_n\}$ is such that $\S\cup\S^{-1}$ generates a free group. Repeating similar arguments as for $T_{\nu_\S,t}$, one sees that the $m$-th moment of the spectral measure of $T_{\nu_\S,t}T_{\nu_\S,t}^\ast$ can be calculated as the number of closed paths of the length $2m$, starting and ending at the same vertex of the infinite $|\S|$-regular tree. Thus it is given by the formula \eqref{eq:moments-square}, where $|S|=n$. Therefore the spectral measure of $T_{\nu_\S,t}T_{\nu_\S,t}^\ast$, when $t\rightarrow \infty$ is given by \eqref{eq:spectral-measure-square}. Hence the distribution of singular values of $T_{\nu_\S}$, or in other words the spectral measure of $\left(T_{\nu_\S}T_{\nu_\S}^\ast\right)^{\frac{1}{2}}$ is given by: 
\begin{gather}\label{eq:kesten-final}
     \frac{|\S| \sqrt{\delta_{\mathrm{opt}}^2(\S) - x^2}}{ \pi (1-x^2)}\mathbf{1}_{[0 ,\delta_{\mathrm{opt}}(\S) ]}dx.
\end{gather}
Obviously formula \eqref{eq:kesten-final} gives also the distribution of the singular values of $T_{\nu_\S}$ when $\S$ is symmetric. As a conclusion we get
\begin{fact}
    Let $\S\subset U(d)$ be a finite set and assume that the group generated by $\S\cup\S^{-1}$ is free. Then $\delta(\nu_\S)\in [\delta_{\mathrm{opt}}(\S),1]$.
\end{fact}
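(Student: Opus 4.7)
The plan is to peel off the two inclusions separately. The upper bound $\delta(\nu_\S) \le 1$ is immediate from \eqref{eq:delta_t_sup}: every $\delta(\nu_\S, t)$ lies in $[0,1]$, so their supremum does as well. The interesting content is therefore the lower bound $\delta(\nu_\S) \ge \delta_{\mathrm{opt}}(\S)$, which I will extract directly from the moment identities leading to the Kesten--McKay measure \eqref{kesten} and its singular-value analog \eqref{eq:kesten-final}.

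For the symmetric case I would start by noting that $T_{\nu_\S,t}$ acts as the identity on the $m_0$-dimensional trivial-representation subspace while $T_{\mu,t}$ is precisely the orthogonal projector onto that subspace. Hence $A_t := T_{\nu_\S,t} - T_{\mu,t}$ is self-adjoint, vanishes on the trivial part and coincides with $T_{\nu_\S,t}$ on its complement, so that $\mathrm{Tr}(A_t^{2m}) = \mathrm{Tr}(T_{\nu_\S,t}^{2m}) - m_0$. The elementary inequality $\|A_t\|^{2m} \ge \mathrm{Tr}(A_t^{2m})/(d^{2t} - m_0)$ then gives
\begin{equation*}
    \delta(\nu_\S, t)^{2m} \;\ge\; \frac{\sigma_{\S,t}^{(2m)} - m_0/d^{2t}}{1 - m_0/d^{2t}}.
\end{equation*}
Since $m_0$ grows only polynomially in $t$ while $d^{2t}$ grows exponentially, the ratio $m_0/d^{2t}$ tends to $0$, and combining this with the moment convergence $\sigma_{\S,t}^{(2m)} \to \sigma_\S^{(2m)}$ proved above yields $\liminf_{t \to \infty} \delta(\nu_\S, t)^{2m} \ge \sigma_\S^{(2m)}$. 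Taking the supremum over $t$ gives $\delta(\nu_\S) \ge (\sigma_\S^{(2m)})^{1/(2m)}$ for every $m \in \mathbb{Z}_+$.

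The non-symmetric case is handled identically after replacing $A_t^{2m}$ by $(A_t A_t^\ast)^m$ and invoking the singular-value moment formula \eqref{eq:moments-square} together with the limiting density \eqref{eq:spectral-measure-square}. Both cases then conclude in the same way: since the limiting measure is a compactly supported probability measure with $\delta_{\mathrm{opt}}(\S)$ in its support, the classical identity $\lim_{m \to \infty} (\int x^{2m}\, d\sigma)^{1/(2m)} = \max\{|x| : x \in \mathrm{supp}\,\sigma\}$ gives $\sup_m (\sigma_\S^{(2m)})^{1/(2m)} = \delta_{\mathrm{opt}}(\S)$, which combined with the previous paragraph finishes the lower bound. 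The only step that requires genuine care is the decoupling of the trivial isotypic contribution, i.e.\ verifying the bound $m_0/d^{2t} \to 0$; once that book-keeping is in place, the fact becomes a direct consequence of the moment calculations already performed just above its statement.
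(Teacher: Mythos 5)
Your overall route is the same one the paper intends: the paper offers no separate proof of this Fact, presenting it as an immediate consequence of the Kesten--McKay moment computation carried out just above it, and your write-up is essentially a rigorous version of that implicit argument (moment method, decoupling of the trivial isotypic block, $\lim_{m}(\sigma^{(2m)})^{1/2m}$ equal to the edge of the support). The structure is sound and the conclusion follows.

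One justification is wrong, though the claim it supports is true. You assert that $m_0/d^{2t}\to 0$ because ``$m_0$ grows only polynomially in $t$.'' It does not: $m_0=\int_{U(d)}|\mathrm{Tr}(U)|^{2t}\,d\mu(U)$ grows exponentially --- for $d=2$ it is the Catalan number $C_t\sim 4^t t^{-3/2}/\sqrt{\pi}$, which is comparable to $d^{2t}=4^t$ up to a polynomial factor. The ratio $m_0/d^{2t}=\int_{U(d)}|\mathrm{Tr}(U)/d|^{2t}\,d\mu(U)$ does tend to $0$, but the correct reason is dominated convergence (the integrand is bounded by $1$ and tends to $0$ off the null set of scalar matrices), or alternatively the observation that the trivial block contributes $m_0/d^{2t}$ to every even moment $\sigma_{\S,t}^{(2m)}$, whose limit $\sigma_{\S}^{(2m)}\le\delta_{\mathrm{opt}}(\S)^{2m}$ can be made arbitrarily small by taking $m$ large (when $\delta_{\mathrm{opt}}(\S)<1$). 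With that repair the proof is complete; the rest of the bookkeeping --- $\mathrm{Tr}(A_t^{2m})=\mathrm{Tr}(T_{\nu_\S,t}^{2m})-m_0$, the rank bound, the passage to singular values via $(A_tA_t^{\ast})^m$ in the non-symmetric case, and the identification of the edge of the support of \eqref{kesten} and \eqref{eq:kesten-final} with $\delta_{\mathrm{opt}}(\S)$ --- is correct.
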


Finally we note that similar reasoning can be {\it mutatis mutandis} repeated for $T_{\nu_\S,\lambda}$. Let us denote by $\sigma_{\S,\lambda}$ the spectral measure of $T_{\nu_{\S},\lambda}$. For example when $d=2$ we have  
\begin{gather}
    \sigma_{\S,\lambda}^{(m)}=\frac{1}{d_\lambda}\mathrm{Tr}\left(\int_{U(d)}d\nu_\S^{\ast m}(U) \pi_\lambda(U)\right)=\frac{1}{d_\lambda} \sum_{U_1,\ldots , U_m \in \S }\nu_\S^{\ast m}(U_1\ldots U_m)\chi_\lambda(U_1\ldots U_m),
\end{gather}
where $\chi_\lambda(U)=\mathrm{Tr}(\pi_\lambda(U))$ is the character of the representation $\pi_\lambda$. In case of $d=2$ we have $\lambda=(k,-k)$, $d_\lambda=2k+1$ and
\begin{gather}
    \chi_\lambda(U)=\frac{\sin(2k+1)\theta}{\sin\theta},
\end{gather}
where $\theta$ is such that the spectrum of $U$ is $\{e^{i\theta},e^{-i\theta}\}$. It is clear that when $k\rightarrow \infty$
\begin{gather}
    \lim_{k\rightarrow \infty}\frac{1}{2k+1}\chi_\lambda(U)\xrightarrow{k \rightarrow \infty} \left\{
	    \begin{array}{ll}
		    1  & \text{if $U\propto I$ },\\
		    0 & \text{otherwise}.
	    \end{array}
    \right.
\end{gather}
Therefore, continuing like for $\sigma_{\mc{S},t}$, we obtain that in the limit $k\rightarrow\infty$ the distribution of the singular values of $T_{\nu_\S,\lambda}$ converges weakly to the measure given by \eqref{eq:kesten-final}. Similar, although much more technical reasoning can be also repeated when $d>2$.

\section{Asymptotic behaviour of $\delta(\nu_\mathcal{S},t)$ for large $\mathcal{S}$}
\label{sec:asymp_beh}

In Section $\ref{sec:spectral_properties}$ we showed that for  a universal (symmetric) set $\mc{S}$ such that ($\mc{S}$) $\S\cup\S^{-1}$ generates a free group, when $t\rightarrow \infty$ the singular values of $T_{\nu_\mc{S},t}$ are distributed in the interval $[0,\delta_{\mathrm{opt}}(\mathcal{S})]$ with the density given by \eqref{eq:kesten-final}. We first note that when $|\mc{S}|\rightarrow \infty$ then $\delta_{\mathrm{opt}}(\mathcal{S}) \rightarrow 0$. Thus in order to capture asymptotic properties of $\delta(\nu_\mc{S},t)$ we normalize it by the factor $\sqrt{\mc{S}}$, that is in the following we will consider operator $\sqrt{\mc{S}}T_{\nu_\mc{S},t}$. The distribution of singular values of this normalized operator can be easily obtained from \eqref{eq:kesten-final} by the change of variables $y=\sqrt{\S}x$ and is given by the density:
\begin{gather}\label{normalized-kesten}
   \frac{1}{\pi(1-\frac{y^2}{\mc{S}})}\sqrt{\frac{4\left(\mc{S}-1\right)}{\mc{S}}-y^2}\mathbf{1}_{[0,2\sqrt{\frac{\mc{S}-1}{\mc{S}}}]}.
\end{gather}
When $\S\rightarrow \infty$ one easily shows that \eqref{normalized-kesten} converges to the quarter-circle distribution with the density function
\begin{gather}\label{semicircle}
   \sigma_{QC}(y)=\frac{1}{\pi}\sqrt{4-y^2}\mathbf{1}_{[0,2]}.
\end{gather}
Unfortunately, this limiting behaviour of the spectrum of $T_{\nu_\mc{S},t}$ (or $T_{\nu_\mc{S},\lambda}$) does not fully determine distribution of the norm. For example, it is known that for all properly normalized random matrix $\beta$-ensembles the limiting spectral measure is given by \eqref{semicircle} but the distributions of the largest/smallest eigenvalues or of the norm depend on $\beta$. In the next sections we determine ensembles that correspond to the distribution of the norm of  $\sqrt{\mc{S}}T_{\nu_\mc{S},\lambda}$ for (symmetric) Haar random  gate-sets. In particular we show that they depend on the type of representation $\lambda$ and on the symmetry of $\mc{S}$.

\subsection{Tools}\label{sec:tools}
In this section we explain tools that will be used in the subsequent sections to study the distribution of $\sqrt{\mc{S}}T_{\nu_\mc{S},\lambda}$  when $\mathcal{S}\rightarrow\infty$. 

\paragraph{Orthogonality relations} Let $\pi_{\lambda}$ and $\pi_\eta$ be two irreducible representation of $U(d)$. It is easy to check that for any linear map $\Psi:V_\lambda\rightarrow V_\eta$ the map
\begin{gather}
    \Psi_0=\E{\mu}\left(\pi_\eta(U^{-1}) \Psi \pi_\lambda(U)\right),
\end{gather}
satisfies $\pi_\eta(V^{-1})\Psi_0\pi_\lambda(V)=\Psi_0$, for any $V\in U(d)$. Thus by Schur's lemma $\Psi_0=0$ if $\eta\neq \lambda$ and $\Psi_0=\frac{\mathrm{Tr}\Psi}{d_\lambda}\bbone_{d_\lambda}=\mathrm{tr} \Psi \bbone_{d_\lambda}$ if $\lambda =\eta$. Choosing $[\Psi]_{j^\prime k^\prime}=\delta_{jj^\prime}\delta_{kk^\prime}$ for some fixed $j$ and $k$ we get\footnote{For a matrix $A$ we denote by $[A]_{ij}$ its entry $(i,j)$.} 

\begin{gather}
    \E{\mu}\left(\left[\pi_\eta(U^{-1})\right]_{ij} \left[\pi_\lambda(U)\right]_{kl}\right)=\frac{\delta_{\eta,\lambda}\delta_{il}\delta_{jk}}{d_\lambda}.
\end{gather}
We also know that there is a basis in $V_\lambda$ such that: 
\begin{enumerate}
    \item $\pi_\lambda(U^{-1})=\pi_\lambda(U)^T$, when $\lambda\in\Lambda_t^r$,
    \item $\pi_\lambda(U^{-1})=\pi_\lambda(U)^\dagger$, when $\lambda\in\Lambda_t^c$.
\end{enumerate}
Therefore 
\begin{gather}\label{eq:orthogonality-real}
     \E{\mu}\left(\left[\pi_\lambda(U)\right]_{ji} \left[\pi_\lambda(U)\right]_{kl}\right)=\frac{\delta_{jk}\delta_{il}}{d_\lambda}\,\,\mathrm{if}\,\lambda\in\Lambda_t^r,
\end{gather}   
and
\begin{gather}\label{eq:orthogonality-complex1}
     \E{\mu}\left(\overline{\left[\pi_\lambda(U)\right]}_{ji} \left[\pi_\lambda(U)\right]_{kl}\right)=\frac{\delta_{jk}\delta_{il}}{d_\lambda}\,\,\mathrm{if}\,\lambda\in\Lambda_t^c.
\end{gather}
In addition to \eqref{eq:orthogonality-complex1} we have one more relation for complex representations. If $\lambda\in \Lambda_{t}^c$ then $\pi_{\lambda^\ast}$ is not equivalent to $\pi_{\lambda}$. On the other hand, in a basis in which $\pi_\lambda(U)$ is unitary we have $\pi_{\lambda^\ast}(U)=\overline{\pi_\lambda(U)}$. Thus 
\begin{gather}\label{eq:orthogonality-complex2}
    0=\E{\mu}\left(\left[\pi_{\lambda^\ast}(U^{-1})\right]_{ij} \left[\pi_\lambda(U)\right]_{kl}\right)= \E{\mu}\left(\left[\overline{\pi_{\lambda}(U^{-1})}\right]_{ij} \left[\pi_\lambda(U)\right]_{kl}\right)=\\\nonumber=\E{\mu}\left(\left[\pi_{\lambda}(U)\right]_{ji} \left[\pi_\lambda(U)\right]_{kl}\right),\,\,\mathrm{for}\,\mathrm{all}\, i,j,k,l,\,\mathrm{if}\,\lambda\in\Lambda_t^c.
\end{gather}
Relations \eqref{eq:orthogonality-real}, \eqref{eq:orthogonality-complex1} and \eqref{eq:orthogonality-complex2} will play a central role in the next sections.

\paragraph{Central Limit Theorem} Let us next note that for a (symmetric) Haar random gate-set $\mc{S}$ the entries of $T_{\nu_{\mc{S}},\lambda}$ are given by sums of $n$ independent random variables. Therefore we recall the following variant of the Central Limit Theorem \cite{feller1957introduction}:
\begin{fact}\label{thm:CLT}
Let $\overline{X}=(X_1,X_2,\ldots,X_d)$ be a random vector in $\mathbb{R}^d$ with finite second moments and $\mathbb{E}(X_i)=0$, for every $i$. Let $\mathrm{Cov}(\overline{X})$ be the covariance matrix of $\overline{X}$, i.e. a matrix whose $ij^{\mathrm{th}}$ entry is 
\begin{gather}
    \Sigma_{ij}:=\left[\mathrm{Cov}(X)\right]_{ij}=\mathbb{E}\left(X_iX_j\right).
\end{gather}
Let 
\begin{gather}
\overline{M}_n=\frac{\overline{X}_1+\overline{X}_2+\ldots+\overline{X}_n}{\sqrt{n}},
\end{gather}
be the sum of $n$ IID (Independent Identically Distributed) copies of $\overline{X}$. Then $\overline{M}_n$ converges in distribution to the random vector $\overline{M}$ which is distributed according to the multivariate normal distribution $N(0,\Sigma)$ with the density
\begin{gather}\label{eq:density-normal}
    f(\overline{M})=\frac{1}{\left((2\pi)^d\det{\Sigma}\right)^{1/2}}e^{-\frac{1}{2}\overline{M}^T\Sigma^{-1} \overline{M}}.
\end{gather}
\end{fact}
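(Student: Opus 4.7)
The plan is to combine the one-dimensional Central Limit Theorem with the Cram\'er--Wold device, which reduces convergence in distribution of random vectors to convergence in distribution of all their one-dimensional projections. First I would fix an arbitrary $\theta\in\mathbb{R}^d$ and observe that
\begin{equation*}
    \langle\theta,\overline{M}_n\rangle=\frac{1}{\sqrt{n}}\sum_{k=1}^n\langle\theta,\overline{X}_k\rangle
\end{equation*}
is the normalized sum of $n$ IID copies of the scalar random variable $Y_\theta:=\langle\theta,\overline{X}\rangle$, which has mean zero and, by bilinearity of covariance, variance $\mathrm{Var}(Y_\theta)=\theta^T\Sigma\theta$. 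Finiteness of the second moments of $\overline{X}$ is inherited by $Y_\theta$, so the scalar problem is set up correctly.

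Next I would establish the scalar CLT via characteristic functions. Since $Y_\theta$ has finite variance, its characteristic function admits the Taylor expansion $\phi_{Y_\theta}(s) = 1-\tfrac{1}{2}(\theta^T\Sigma\theta)\, s^2 + o(s^2)$ as $s\to 0$. Using independence and identical distribution of the summands,
\begin{equation*}
    \phi_{\langle\theta,\overline{M}_n\rangle}(s)=\left[\phi_{Y_\theta}\!\left(\frac{s}{\sqrt n}\right)\right]^n=\left[1-\frac{\theta^T\Sigma\theta}{2n}s^2+o\!\left(\frac{1}{n}\right)\right]^n\xrightarrow{n\to\infty}\exp\!\left(-\tfrac{1}{2}\theta^T\Sigma\theta\, s^2\right),
\end{equation*}
which is the characteristic function of $\mathcal{N}(0,\theta^T\Sigma\theta)$. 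L\'evy's continuity theorem then yields convergence in distribution of $\langle\theta,\overline{M}_n\rangle$ to $\mathcal{N}(0,\theta^T\Sigma\theta)$. Combining this with Cram\'er--Wold, the random vectors $\overline{M}_n$ converge in distribution to some $\overline{M}$ every linear functional of which is a centred Gaussian with variance $\theta^T\Sigma\theta$; by the standard characterization, this means $\overline{M}\sim\mathcal{N}(0,\Sigma)$, and when $\Sigma$ is invertible its density is precisely \eqref{eq:density-normal}.

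The main technical obstacle is the rigorous control of the Taylor remainder and the interchange of the limit with the $n$-th power: one must show that $\phi_{Y_\theta}(s/\sqrt n)^n\to e^{-\theta^T\Sigma\theta\, s^2/2}$ uniformly on compact sets of $s$, typically via the elementary estimate $\left|(1+z/n)^n-e^z\right|\le |z|^2 e^{|z|}/n$ for $|z|\le n/2$ applied with $z=-\tfrac{1}{2}(\theta^T\Sigma\theta) s^2+o(1)$. A secondary subtle point is the possibly degenerate case $\det\Sigma=0$: the density formula \eqref{eq:density-normal} is then ill-defined, but the argument above still gives convergence in distribution, with the limiting measure supported on $\mathrm{range}(\Sigma)$ and expressible as a product of a non-degenerate Gaussian on that subspace and a point mass transverse to it.
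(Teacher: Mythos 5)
The paper does not prove this statement: it is labelled a Fact precisely because it is quoted from the literature (Feller), and the authors' convention is that Facts are imported results. Your proof is the standard textbook argument -- reduce to one dimension via the Cram\'er--Wold device, prove the scalar CLT by the characteristic-function expansion $\phi_{Y_\theta}(s/\sqrt n)^n\to e^{-\theta^T\Sigma\theta\,s^2/2}$ together with L\'evy's continuity theorem, and identify the limit as $N(0,\Sigma)$ -- and it is correct, including the correct handling of the remainder term. Your remark about the degenerate case $\det\Sigma=0$ is a genuine and relevant caveat: the density formula \eqref{eq:density-normal} as stated presupposes invertibility of $\Sigma$, whereas convergence in distribution to the (possibly singular) Gaussian law holds regardless, which is in fact the situation in some of the paper's applications where the covariance matrix has zero diagonal entries.
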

\noindent 

\paragraph{Vectorization} In order to change matrices into real vectors we will use the so-called {\it vectorization} technique. 
\begin{itemize}
    \item For any real symmetric $n\times n$ matrix $A$
    \begin{gather}
    \label{eq:vectorization-symmetric}
    \mathrm{vec}_O(A)=\left(A_{11},\ldots,A_{1n},A_{22},\ldots A_{2n},\ldots,A_{nn}\right)^T
\end{gather}
\item For any complex hermitian $n\times n$ matrix $A$
\begin{gather}
    \mathrm{vec}_U(A)=\left(\mathrm{Re}A_{11},\mathrm{Im}A_{11},\ldots,\mathrm{Re}A_{1n},\mathrm{Im}A_{1n},\ldots,\mathrm{Re}A_{nn},\mathrm{Im}A_{nn}\right)^T
\end{gather}
\item For any real $n\times n$ matrix $A$
\begin{gather}
    \mathrm{vec}_R(A)=\left(A_{11},\ldots,A_{1n},A_{21},\ldots A_{2n},\ldots,A_{n1},\ldots,A_{nn}\right)^T
\end{gather}
    \item For any complex $n\times n$ matrix $A$
\begin{gather}
   \mathrm{vec}_C(A)= \left(\mathrm{Re}A_{11},\mathrm{Im}A_{11},\ldots,\mathrm{Re}A_{1n},\mathrm{Im}A_{1n},\ldots,\mathrm{Re}A_{n1},\mathrm{Im}A_{n1},\ldots, \mathrm{Re}A_{nn},\mathrm{Im}A_{nn}\right)^T
\end{gather}   
\end{itemize}
\paragraph{Gaussian and Ginibre ensembles} Next we recall some basic definitions of random matrix ensembles. Let $H_N = (H_{i, j})_{i, j=1}^N$ be $N\times N$ matrix. We distinguish four ensembles:
\begin{itemize}
    \item Gaussian Orthogonal Ensemble (GOE$_N$) real symmetric matrices:
    
    \begin{itemize}
        \item $H_{k, l} = \frac{1}{\sqrt{N}} N(0, (1+\delta_{k, l}))$ for $k\leq l$,
        \item entries $H_{k, l}$  are independent for $k\leq l$,
        \item probability measure: $ d \prob(H_N) \sim \exp{(-\frac{N}{4} \mathrm{Tr}(H_N^2))} \prod_{i \leq j} dH_{i, j} $,
    \end{itemize}
    \item Gaussian Unitary Ensemble (GUE$_N$) hermitian matrices:
    \begin{itemize}
        \item $H_{k, l} = \frac{1}{\sqrt{N}} \left(X_{k,l}+iY_{k,l}\right)$ for $k\neq l$, where $X_{k,l}$ and $Y_{kl}$ are independent $N(0,\frac{1}{2})$ and $H_{k, k}$  is  $N(0,1)$, 
        \item entries $H_{k, l}$  are independent for $k\leq l$,
        \item probability measure: $ d \prob(H_N) \sim \exp{(-\frac{N}{2} \mathrm{Tr}(H_N^2))} \prod_{i \leq j} d X_{i, j}d Y_{i,j} $,
        \end{itemize}
    \item Real Ginibre ensemble RG$_N$:
    \begin{itemize}
        \item $H_{k, l} = \frac{1}{\sqrt{N}} N(0,1)$  
        \item entries $H_{k, l}$  are independent 
        \item probability measure: $ d \prob(H_N) \sim \exp{(-\frac{N}{2} \mathrm{Tr}(H_N^tH_N))} \prod_{1\leq i, j\leq d_\lambda} dH_{i, j} $, 
    \end{itemize}  
    \item Complex Ginibre ensemble (CG$_N$)
    \begin{itemize}
        \item entries: $H_{k, l} = \frac{1}{\sqrt{N}}\left( X_{k,l}+iY_{k,l}\right)$, where $X_{k,l}$ and $Y_{kl}$ are independent $ N(0,\frac{1}{2})$ 
        \item entries $H_{k, l}$  are independent 
        \item probability measure: $ d \prob(H_N) \sim \exp{(-N \mathrm{Tr}(H_N^\dagger H_N))} \prod_{1\leq i,j\leq d_\lambda} dX_{i, j} dY_{i, j}$, 
    \end{itemize}  
\end{itemize}

\paragraph{Bounds on the norm} Using vectorization one has the following correspondence:

\begin{itemize}
    \item $H_N$ form GOE$_N$ -- $\mathrm{vec}_O(H_N)$ is a real Gaussian vector in $\mathbb{R}^{N_O}$, where $N_O=N(N+1)/2$, distributed according to $N(0,\Sigma_O)$, where $\Sigma_O$ is diagonal $N_O\times N_O$ matrix whose diagonal elements are either $\frac{1}{N}$ or $\frac{2}{N}$.
    \item $H_N$ form GUE$_N$ -- $\mathrm{vec}_U(H_N)$ is a real Real Gaussian vector in $\mathbb{R}^{N_U}$, where $N_U=N^2$, distributed according to $N(0,\Sigma_U)$, where $\Sigma_U$ is diagonal $N_U\times N_U$ matrix whose diagonal elements are either $\frac{1}{2N}$ or $\frac{1}{N}$.
    \item $H_N$ form RG$_N$ -- $\mathrm{vec}_R(H_N)$ is a real Gaussian vector in $\mathbb{R}^{N_R}$, where $N_R=N^2$, distributed according to $N(0,\Sigma_R)$, where $\Sigma_R$ is $N_R\times N_R$ matrix given by $\Sigma_R=\frac{1}{N}\bbone_{N_R}$.
    \item $H_N$ form CG$_N$ -- $\mathrm{vec}_C(H_N)$ is a  Real Gaussian vector in $\mathbb{R}^{N_C}$, where $N_C=2N^2$, distributed according to $N(0,\Sigma_C)$, where $\Sigma_C$ is $N_C\times N_C$ matrix given by $\Sigma_C=\frac{1}{2N}\bbone_{N_C}$.
\end{itemize} 
In this paragraph we present short derivation of the bounds on the probability  $\mathbb{P}\left(\|H_N\|>2+\epsilon\right)$ originally presented in \cite{szarek2001Chapter8,SzarekBook}. Recall that the standard real Gaussian vector $X_N$ in $\mathbb{R}^{N}$ is distributed according $N(0,\bbone_N)$. Let $\mathbb{P}_N$ be the probability measure of $N(0,\bbone_N)$. We say that $f:\mathbb{R}^N\rightarrow \mathbb{R}$ is $L$-Lipschitz iff:
\begin{gather}
    |f(X)-f(Y)|\leq L \|X-Y\|_2\,\mathrm{for}\,\mathrm{all}\,X,Y\in\mathbb{R}^N,
\end{gather}
where $\|X\|_2^2=\sum_{i=1}^N X_i^2$. We have the following two useful facts \cite{szarek2001Chapter8,SzarekBook}:
\begin{fact}\label{fact:gauss_concentration}
Let $f:\mathbb{R}^N\rightarrow \mathbb{R}$ be $L$-Lipschitz and $X$ be the standard real Gaussian vector in $\mathbb{R}^{N}$, i.e. $X\simeq N(0,\bbone_N)$. Then for $\epsilon>0$
\begin{gather}
    \mathbb{P}_N\left(f(X_N)>M_{f(X_N)}+\epsilon\right ) \err{\leq \frac{1}{2}\left( 1 - \mathrm{erf}\left(\frac{\epsilon}{\sqrt{2}L} \right) \right)} \leq \frac{1}{2}e^{-\frac{\epsilon^2}{2L^2}},
\end{gather}
where $M_{f(X_N)}$ is the median of the random variable $f(X_N)$ \err{and erf is the error function:
\begin{equation}
    \mathrm{erf}(x) = \frac{2}{\sqrt{\pi}} \int_0^x e^{-t^2}dt.
\end{equation}}
\end{fact}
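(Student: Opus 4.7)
The plan is to derive this Gaussian concentration inequality from Borell's Gaussian isoperimetric inequality combined with the $L$-Lipschitz hypothesis, following the classical route of \cite{szarek2001Chapter8,SzarekBook}. The key observation is that for a Lipschitz function, the Euclidean $r$-enlargement of any sublevel set is contained in a larger sublevel set, which converts an isoperimetric estimate for Gaussian measures of enlargements into a tail bound for $f$.

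First, I would set $A := \{x \in \mathbb{R}^N : f(x) \leq M_{f(X_N)}\}$, which satisfies $\mathbb{P}_N(A) \geq 1/2$ by the definition of the median. Denoting the Euclidean $r$-enlargement by $A_r := \{x \in \mathbb{R}^N : \inf_{y \in A} \|x - y\|_2 \leq r\}$, Borell's Gaussian isoperimetric inequality asserts that among all Borel sets $B$ with $\mathbb{P}_N(B) \geq 1/2$, half-spaces through the origin minimise $\mathbb{P}_N(B_r)$; since the half-space $\{x_1 \leq 0\}$ has enlargement $\{x_1 \leq r\}$ of measure $\Phi(r)$ (where $\Phi$ is the standard normal CDF), we obtain $\mathbb{P}_N(A_r) \geq \Phi(r)$ for every $r > 0$.

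Second, the Lipschitz hypothesis yields the inclusion $A_{\epsilon/L} \subseteq \{f \leq M_{f(X_N)} + \epsilon\}$: indeed, if $x \in A_{\epsilon/L}$ then there exists $y \in A$ with $\|x-y\|_2 \leq \epsilon/L$, whence $f(x) \leq f(y) + L\cdot(\epsilon/L) \leq M_{f(X_N)} + \epsilon$. Combining this inclusion with the isoperimetric bound and using the identity $\Phi(r) = \tfrac{1}{2}\bigl(1 + \mathrm{erf}(r/\sqrt{2})\bigr)$ gives
\begin{equation*}
    \mathbb{P}_N\bigl(f(X_N) > M_{f(X_N)} + \epsilon\bigr) \leq 1 - \Phi(\epsilon/L) = \tfrac{1}{2}\bigl(1 - \mathrm{erf}(\epsilon/(\sqrt{2}L))\bigr),
\end{equation*}
which is the first inequality in the statement. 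The second (purely exponential) inequality then follows from the elementary tail bound $1 - \mathrm{erf}(t) \leq e^{-t^2}$ valid for all $t \geq 0$, which can be checked by showing that $g(t) := e^{-t^2} - 1 + \mathrm{erf}(t)$ vanishes at $t=0$, vanishes at infinity, and has a single interior critical point at $t = 1/\sqrt{\pi}$, or alternatively by substituting $u = s - t$ in the defining integral for $1-\mathrm{erf}(t)$ and bounding the resulting Gaussian factor.

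The only genuinely deep ingredient here is the Gaussian isoperimetric inequality itself, which I would cite as a black box from Borell or Sudakov--Tsirelson rather than reprove. An alternative route via Herbst's argument, in which one applies the Gaussian logarithmic Sobolev inequality to $e^{\lambda f}$ to bound the moment generating function and then optimises through a Chernoff estimate, avoids isoperimetry but produces concentration around the mean $\mathbb{E} f(X_N)$ rather than the median $M_{f(X_N)}$; one would then additionally have to estimate $|\mathbb{E} f(X_N) - M_{f(X_N)}|$ using the same Lipschitz hypothesis, and the resulting constants are typically worse. For that reason I would prefer the isoperimetric approach, which delivers the sharp median-based constants directly and matches the form appearing in the statement.
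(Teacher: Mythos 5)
This statement is labeled a Fact, which by the paper's own convention means it is quoted from the literature (\cite{szarek2001Chapter8,SzarekBook}) without proof, so there is no in-paper argument to compare against. Your proof is correct and is precisely the standard isoperimetric argument given in those cited sources: the median sublevel set has Gaussian measure at least $\tfrac12$, Borell--Sudakov--Tsirelson gives $\mathbb{P}_N(A_r)\ge\Phi(r)$, the Lipschitz inclusion $A_{\epsilon/L}\subseteq\{f\le M+\epsilon\}$ converts this into the stated erf bound, and the elementary estimate $1-\mathrm{erf}(t)\le e^{-t^2}$ yields the exponential form.
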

\begin{fact}
\label{fact:med_exp}
Let $f:\mathbb{R}^N\rightarrow \mathbb{R}$ be a convex function and $X_N$ be the standard real Gaussian vector in $\mathbb{R}^{N}$. Then $M_{f(X_N)}\leq \mathbb{E}\left(f(X_N)\right)$
\end{fact}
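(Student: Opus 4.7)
The statement asserts that the distribution of $f(X_N)$ is right-skewed in the sense that its median does not exceed its mean. My strategy is to rearrange the CDF of $Y := f(X_N)$ through the inverse Gaussian CDF, show that the resulting function is concave, and finish with a single application of Jensen's inequality.

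Set $F_Y(t) := \mathbb{P}(Y \le t)$ and define $h(t) := \Phi^{-1}(F_Y(t))$, where $\Phi$ is the standard normal CDF. The crucial first step is to show that $h$ is concave on the interior of its domain. Because $f$ is convex, its sub-level sets $A_t := \{x \in \mathbb{R}^N : f(x) \le t\}$ are convex, and for any $s, t \in \mathbb{R}$ and any $\lambda \in [0,1]$ convexity of $f$ immediately gives the inclusion $\lambda A_s + (1-\lambda) A_t \subseteq A_{\lambda s + (1-\lambda) t}$. Applying Ehrhard's inequality for the standard Gaussian measure $\gamma_N$, which asserts that $\Phi^{-1}\bigl(\gamma_N(\lambda A + (1-\lambda) B)\bigr) \ge \lambda \Phi^{-1}(\gamma_N(A)) + (1-\lambda) \Phi^{-1}(\gamma_N(B))$, together with the monotonicity of $\Phi^{-1}$, yields $h(\lambda s + (1-\lambda) t) \ge \lambda h(s) + (1-\lambda) h(t)$.

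The remaining steps are short. The function $h$ is non-decreasing (as both $F_Y$ and $\Phi^{-1}$ are), and by the probability integral transform $h(Y) = \Phi^{-1}(F_Y(Y)) \sim N(0,1)$ whenever $F_Y$ is continuous; atoms in the distribution of $Y$, if any, are removed by replacing $f$ with its convolution $f \ast \rho_\varepsilon$ against a narrow Gaussian mollifier $\rho_\varepsilon$ and passing to the limit (the mollified function remains convex). In particular $\mathbb{E}[h(Y)] = 0$. Jensen's inequality applied to the concave $h$ gives $h(\mathbb{E}[Y]) \ge \mathbb{E}[h(Y)] = 0$, while by definition of the median $h(M_{f(X_N)}) = \Phi^{-1}(1/2) = 0$. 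Therefore $h(\mathbb{E}[f(X_N)]) \ge h(M_{f(X_N)})$, and the monotonicity of $h$ forces $\mathbb{E}[f(X_N)] \ge M_{f(X_N)}$. Integrability of the negative part of $f(X_N)$ follows from a supporting-hyperplane lower bound on $f$; in the case $\mathbb{E}[f(X_N)] = +\infty$ the claim is trivial.

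The main obstacle is the first step, namely the concavity of $h$, which relies on Ehrhard's inequality---a substantial result in Gaussian analysis, originally proved by Ehrhard for convex sets and extended by Borell to arbitrary Borel sets. Every other step is a short manipulation of definitions combined with a single use of Jensen, so this is the only place where heavy machinery is required.
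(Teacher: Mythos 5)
The paper offers no proof of this statement: it is labelled a Fact precisely because it is quoted from the cited references of Szarek et al., so there is nothing internal to compare against. Your argument is the standard literature proof of this classical inequality (Ehrhard's inequality $\Rightarrow$ concavity of $t\mapsto\Phi^{-1}(\mathbb{P}(f(X_N)\le t))$ $\Rightarrow$ median $\le$ mean via Jensen), and its core is correct. Two small points deserve tightening. First, the mollification detour is both risky and unnecessary: for a fast-growing convex $f$ (e.g.\ $f$ behaving like $e^{x_1^2/2-4\ln|x_1|}$) the Gaussian convolution $f\ast\rho_\varepsilon$ can have infinite Gaussian mean for every $\varepsilon>0$, so "passing to the limit" yields only the vacuous bound. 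Better: observe that a convex $f$ can produce an atom only at its minimum value $t_0$ (level sets $\{f=c\}$ with $c>\min f$ lie in boundaries of convex sets, hence are Lebesgue-null), and in that case $h(Y)\overset{d}{=}\max\bigl(Z,\Phi^{-1}(p_0)\bigr)$ with $Z\sim N(0,1)$ and $p_0=\gamma_N(\{f=t_0\})$, so $\mathbb{E}[h(Y)]\ge 0$ still holds and Jensen goes through unchanged. Second, your last step "$h(\mathbb{E}Y)\ge h(M)$ plus monotonicity forces $\mathbb{E}Y\ge M$" is not quite right for a merely nondecreasing $h$; the clean conclusion is that $h(\mathbb{E}Y)\ge 0$ means $F_Y(\mathbb{E}Y)\ge 1/2$, i.e.\ $\mathbb{E}Y$ dominates the (lower) median by definition, which is all the paper uses in Corollary~\ref{norm_bound}.
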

\noindent Combining the above two Facts we get 
\begin{corollary}\label{norm_bound}
Let $f:\mathbb{R}^N\rightarrow \mathbb{R}$ be convex and $L$-Lipschitz and $X_N$ be the standard real Gaussian vector in $\mathbb{R}^{N}$. Then
\begin{gather}
    \mathbb{P}_N\left(f(X_N)>\mathbb{E}\left(f(X_N)\right)+\epsilon\right )\leq \frac{1}{2}e^{-\frac{\epsilon^2}{2L^2}}
\end{gather}
\end{corollary}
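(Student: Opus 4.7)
The plan is to combine Fact \ref{fact:gauss_concentration} and Fact \ref{fact:med_exp} in the obvious way. Since $f$ is convex, Fact \ref{fact:med_exp} immediately gives $M_{f(X_N)} \leq \mathbb{E}(f(X_N))$, which means the threshold $\mathbb{E}(f(X_N))+\epsilon$ that appears in the statement dominates the corresponding threshold $M_{f(X_N)}+\epsilon$ that appears in Fact \ref{fact:gauss_concentration}.

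Concretely, I would first observe the set inclusion
\begin{equation*}
\{X_N \in \mathbb{R}^N : f(X_N) > \mathbb{E}(f(X_N))+\epsilon\} \subseteq \{X_N \in \mathbb{R}^N : f(X_N) > M_{f(X_N)}+\epsilon\},
\end{equation*}
which follows from $M_{f(X_N)} \leq \mathbb{E}(f(X_N))$ (a consequence of convexity via Fact \ref{fact:med_exp}). Monotonicity of $\mathbb{P}_N$ then gives
\begin{equation*}
\mathbb{P}_N\bigl(f(X_N) > \mathbb{E}(f(X_N)) + \epsilon\bigr) \leq \mathbb{P}_N\bigl(f(X_N) > M_{f(X_N)} + \epsilon\bigr).
\end{equation*}
Finally, since $f$ is $L$-Lipschitz, Fact \ref{fact:gauss_concentration} bounds the right-hand side by $\tfrac{1}{2}e^{-\epsilon^2/(2L^2)}$, which is exactly the claimed inequality.

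There is really no obstacle here: the result is a direct corollary, essentially a one-line composition of the two preceding facts. The only thing worth being careful about is to verify that Fact \ref{fact:med_exp} is indeed applicable, which requires only convexity of $f$ (hypothesized) and integrability of $f(X_N)$ (which is automatic from the $L$-Lipschitz assumption together with the fact that Gaussian vectors have all moments). No additional structure on $f$ beyond convexity plus Lipschitz continuity is invoked, so the corollary is stated in the sharpest form that the two facts allow.
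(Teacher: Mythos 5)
Your proposal is correct and matches the paper's own derivation: the paper obtains Corollary \ref{norm_bound} by exactly this combination of Fact \ref{fact:med_exp} (median bounded by expectation for convex $f$) with Fact \ref{fact:gauss_concentration} (Gaussian concentration around the median for $L$-Lipschitz $f$). The event-inclusion step you spell out is the whole content of the argument, so nothing is missing.
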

\noindent Our goal is to bound  $\mathbb{P}\left(\|H_N\|>\mathbb{E}\left(\|H_N\|\right)+\epsilon\right )$, where $H_N$ is from one of the ensembles listed in the previous paragraph. For this propose we consider the function $f_\ast:\mathbb{R}^{N_\ast}\rightarrow \mathrm{Mat}_N$ given by

\begin{gather}\label{eq:f-ast}
    f_\ast(X_{N_\ast}):=\|\mathrm{vec_\ast}^{-1}(\Sigma_\ast^{\frac{1}{2}}X_{N_\ast})\|,
\end{gather}
where $\ast\in\{O,U,R,C\}$. It is easy to see that \eqref{eq:f-ast} is equal to $\|H_N\|$, where $H_N$ form  GOE$_N$, GUE$_N$, RG$_N$ or CG$_N$. Moreover we easily see that
 \begin{gather}
   \left|\|\mathrm{vec}_\ast^{-1}(\Sigma_\ast^{\frac{1}{2}}X_{N_\ast})\|-\|\mathrm{vec}_\ast^{-1}(\Sigma_\ast^{\frac{1}{2}}Y_{N_\ast})\|\right|\leq\|\mathrm{vec}_\ast^{-1}(\Sigma_\ast^{\frac{1}{2}}(X_{N_\ast}-Y_{N_\ast}))\|\leq\\\leq \|\mathrm{vec}_\ast^{-1}(\Sigma_\ast^{\frac{1}{2}}(X_{N_\ast}-Y_{N_\ast}))\|_{\mathrm{HS}}\leq L_\ast\|X_{N_\ast}-Y_{N_\ast}\|_2,
\end{gather}
where $L_O=\sqrt{\frac{2}{N}}$, $L_U=L_R=\frac{1}{\sqrt{N}}$,  $L_C=\frac{1}{\sqrt{2N}}$ and $\|X\|_{\mathrm{HS}}^2=\sum_{i,j=1}^N|X_{ij}|^2$. One can also easily verify that $f_\ast$ is a convex function. Thus making use of Corollary \ref{norm_bound} we get
\begin{gather}
    \mathbb{P}\left(\|H_N\|>\mathbb{E}\left(\|H_N\|\right)+\epsilon\right )=\mathbb{P}_N\left(f_\ast(X_{N_\ast})>\mathbb{E}\left(f_\ast(X_{N_\ast})\right)+\epsilon\right )\leq\frac{1}{2}e^{-\frac{\epsilon^2}{2L_\ast^2}}
\end{gather}
For $H_N$ in GOE$_N$, GUE$_N$, RG$_N$ or CG$_N$ we know that $\mathbb{E}\left(\|H_N\|\right)\leq 2$ for every $N\geq 2$, hence:

\begin{fact}\label{fact:tail_bounds}
(\cite{szarek2001Chapter8,SzarekBook}) We have the following tail bounds:
\begin{itemize}
    \item $   \mathbb{P}\left(\|H_N\|>2+\epsilon\right )\leq\frac{1}{2}e^{-\frac{N\epsilon^2}{4}} =: F_O(N,\epsilon)$, when $H_N$ in GOE$_N$, 
    \item $ \mathbb{P}\left(\|H_N\|>2+\epsilon\right )\leq\frac{1}{2}e^{-\frac{N\epsilon^2}{2}} =: F_U(N,\epsilon)$, when $H_N$ in GUE$_N$, 
    \item $   \mathbb{P}\left(\|H_N\|>2+\epsilon\right )\leq\frac{1}{2}e^{-\frac{N\epsilon^2}{2}} =: F_R(N,\epsilon)$, when $H_N$ in RG$_N$, 
    \item  $   \mathbb{P}\left(\|H_N\|>2+\epsilon\right )\leq\frac{1}{2}e^{-N\epsilon^2} =: F_C(N,\epsilon)$, when $H_N$ in CG$_N$.
\end{itemize}
\end{fact}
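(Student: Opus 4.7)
The plan is to apply Corollary \ref{norm_bound} to the function $f_\ast$ defined in \eqref{eq:f-ast}, using exactly the vectorization setup established in the preceding paragraph. By the correspondence listed there, sampling $H_N$ from GOE$_N$, GUE$_N$, RG$_N$, or CG$_N$ has the same law as $\mathrm{vec}_\ast^{-1}(\Sigma_\ast^{1/2} X_{N_\ast})$, where $X_{N_\ast}$ is the standard real Gaussian vector in $\mathbb{R}^{N_\ast}$ and $\Sigma_\ast$ is the corresponding diagonal covariance. Consequently $\|H_N\| = f_\ast(X_{N_\ast})$ in distribution, so it suffices to control the tail of $f_\ast(X_{N_\ast})$.

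Next I would verify the hypotheses of Corollary \ref{norm_bound}. Convexity of $f_\ast$ is immediate since the operator norm is a convex function on matrices and $X \mapsto \mathrm{vec}_\ast^{-1}(\Sigma_\ast^{1/2} X)$ is linear, so their composition is convex. The Lipschitz estimate has already been carried out just above Fact \ref{fact:tail_bounds} through the triangle inequality and the bound $\|\cdot\| \leq \|\cdot\|_{\mathrm{HS}}$, yielding the constants $L_O = \sqrt{2/N}$, $L_U = L_R = 1/\sqrt{N}$, and $L_C = 1/\sqrt{2N}$. Plugging $f = f_\ast$ and $L = L_\ast$ into Corollary \ref{norm_bound} gives
\begin{equation*}
    \mathbb{P}\bigl(\|H_N\| > \mathbb{E}(\|H_N\|) + \epsilon\bigr) \leq \tfrac{1}{2} e^{-\epsilon^2/(2 L_\ast^2)}.
\end{equation*}

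The final step is to replace the expectation on the right-hand side by the constant $2$. Using $\mathbb{E}(\|H_N\|) \leq 2$ for $N \geq 2$ (the ingredient imported from \cite{szarek2001Chapter8,SzarekBook}) shrinks the event in the probability, so the same bound holds with the threshold $2 + \epsilon$. Substituting the four values of $L_\ast^2$ into the exponent recovers the four explicit formulas for $F_O, F_U, F_R, F_C$.

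The main obstacle is the a priori expectation bound $\mathbb{E}(\|H_N\|) \leq 2$, since the concentration step is otherwise a routine combination of Gaussian isoperimetry and convexity. I would approach this by the moment method in the Wigner/F\"uredi--Komlos style: estimate $\mathbb{E}\,\mathrm{Tr}(H_N^{2p})$ for a $p$ growing slowly with $N$, note that the leading genus-zero contribution reproduces the Catalan number $C_p$ so that $\mathbb{E}\|H_N\|^{2p} \le N \cdot C_p \cdot (1 + o(1))$, extract $\mathbb{E}\|H_N\| \le (\mathbb{E}\|H_N\|^{2p})^{1/(2p)}$, and let $p = p(N) \to \infty$ with $N^{1/(2p)} \to 1$ so that Stirling on $C_p$ yields the limit $2$. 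For the non-Hermitian Ginibre ensembles, the same strategy applied to $H_N H_N^\dagger$ together with the quarter-circle law gives the matching bound, and a symmetrization trick reduces GUE to GOE if desired. As this is exactly the content referenced in the cited literature, I would simply invoke those bounds rather than reproducing the computation.
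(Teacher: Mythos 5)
Your argument is correct and follows essentially the same route as the paper: the text immediately preceding Fact \ref{fact:tail_bounds} derives these bounds by applying Corollary \ref{norm_bound} to the convex, $L_\ast$-Lipschitz functions $f_\ast$ from \eqref{eq:f-ast} and then replacing $\mathbb{E}(\|H_N\|)$ by $2$ via the expectation bound imported from \cite{szarek2001Chapter8,SzarekBook}. The only caveat is that your moment-method sketch for $\mathbb{E}(\|H_N\|)\leq 2$ would in fact only yield $2+o(1)$ (the exact inequality in the cited sources comes from a Gaussian comparison argument), but since you, like the paper, ultimately invoke the literature for that ingredient, this does not affect the proof.
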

\paragraph{Finitely and infinitely often}
We will also need the notion of events that occur infinitely and finitely often. Let $(\Omega,\mathcal{F}, \mathbb{P})$ be the probability space, i.e. $\Omega$ is a space of elementary events aka a sample space, $\mathcal{F}$ is the sigma algebra of events and $\mathbb{P}:\mathcal{F}\rightarrow [0,1]$ is a probability measure. Let $\{A_n\}_{n=1}^\infty$ be a sequence of events. We say that 
\begin{itemize}
    \item Events in the sequence $\{A_n\}_{n=1}^\infty$ occur infinitely often iff $A_n$ occurs for infinite number of indices $n\in\mathbb{Z}_+$. We denote this by $\{A_n\,\,i.o.\}$. More precisely 
    
    $$\{A_n\,\,i.o.\}:=\{\omega\in\Omega:\omega\in A_n\,\mathrm{for\,infinite\, number\, of\, indices}\, n\}.$$
     \item Events in the sequence $\{A_n\}_{n=1}^\infty$ occur finitely often iff $A_n$ occurs for at most finite number of indices $n\in\mathbb{Z}_+$. We denote this by $\{A_n\,\,f.o.\}$. More precisely 
    $$\{A_n\,\,f.o.\}:=\{\omega\in\Omega:\omega\in A_n\,\mathrm{for\,at\,most\, finite\,number\, of\, indices}\, n\}.$$
\end{itemize}
Moreover, using formulas from \cite{feller1957introduction}
\begin{gather}
    \{A_n\,\,i.o.\}=\bigcap_{m=1}^\infty\bigcup_{n\geq m}A_n,\,\,\,\, \{A_n\,\,f.o.\}=\bigcup_{m=1}^\infty\bigcap_{n\geq m}A_n^c,
\end{gather}
it can be verified that both $\{A_n\,\,i.o.\}$ and $\{A_n\,\,f.o.\}$ are elements of $\mathcal{F}$ and $\{A_n\,\,i.o.\}^c=\{A_n\,\,f.o.\}$. We also have:  
\begin{fact}\label{fact:B-C}
(Borel-Cantelli lemmas) Let $\{A_n\}_{n=1}^\infty$ be a sequence of events. Then:
\begin{enumerate}
    \item If $\sum_n\mathbb{P}(A_n)<\infty$ then $\mathbb{P}(\{A_n\,\,f.o.\})=1$.
    \item If  $\sum_n\mathbb{P}(A_n)=\infty$ and events $\{A_n\}_{n=1}^\infty$ are independent then $\mathbb{P}(\{A_n\,\,f.o.\})=0$.  
\end{enumerate}
\end{fact}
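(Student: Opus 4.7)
The plan is to prove the two Borel--Cantelli lemmas separately, each by reducing $\{A_n\,\,i.o.\}$ or $\{A_n\,\,f.o.\}$ to a countable set operation and applying the continuity of the probability measure $\mathbb{P}$. These are classical results; since they are labelled as a Fact, I treat this as a presentation of the standard argument.

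For the first lemma, I would start from the identity already quoted in the excerpt,
\begin{equation*}
\{A_n\,\,i.o.\}\;=\;\bigcap_{m=1}^{\infty}\bigcup_{n\geq m}A_n,
\end{equation*}
and note that the sequence $B_m=\bigcup_{n\geq m}A_n$ is decreasing in $m$. By continuity of $\mathbb{P}$ from above,
\begin{equation*}
\mathbb{P}(\{A_n\,\,i.o.\})\;=\;\lim_{m\to\infty}\mathbb{P}(B_m)\;\leq\;\lim_{m\to\infty}\sum_{n\geq m}\mathbb{P}(A_n)\;=\;0,
\end{equation*}
where the inequality is the countable union bound and the last equality uses the assumption $\sum_n\mathbb{P}(A_n)<\infty$ (the tails of a convergent series vanish). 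Passing to complements gives $\mathbb{P}(\{A_n\,\,f.o.\})=1$.

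For the second lemma, I would dualize: $\{A_n\,\,f.o.\}=\bigcup_{m=1}^{\infty}\bigcap_{n\geq m}A_n^c$, so by countable subadditivity it suffices to show $\mathbb{P}\bigl(\bigcap_{n\geq m}A_n^c\bigr)=0$ for every $m$. Fix $m$ and for $M\geq m$ use independence of the $A_n$'s, hence of the $A_n^c$'s, together with the elementary inequality $1-x\leq e^{-x}$ for $x\in[0,1]$ to get
\begin{equation*}
\mathbb{P}\!\left(\bigcap_{n=m}^{M}A_n^c\right)\;=\;\prod_{n=m}^{M}\bigl(1-\mathbb{P}(A_n)\bigr)\;\leq\;\exp\!\left(-\sum_{n=m}^{M}\mathbb{P}(A_n)\right).
\end{equation*}
Continuity of $\mathbb{P}$ from above (the sets $\bigcap_{n=m}^{M}A_n^c$ decrease to $\bigcap_{n\geq m}A_n^c$ as $M\to\infty$) combined with the divergence $\sum_n\mathbb{P}(A_n)=\infty$ lets me send $M\to\infty$ and conclude $\mathbb{P}\bigl(\bigcap_{n\geq m}A_n^c\bigr)=0$. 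Summing over $m$ yields $\mathbb{P}(\{A_n\,\,f.o.\})=0$.

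Neither direction presents a real obstacle; the only subtle point is the second lemma, where one must remember that pairwise independence is insufficient in general, but full mutual independence of the family $\{A_n\}$ (which the statement assumes) is exactly what makes the product formula legitimate. The inequality $1-x\leq e^{-x}$ is the standard trick that turns a potentially awkward infinite product into an infinite sum, so that the divergence hypothesis directly produces a probability of zero.
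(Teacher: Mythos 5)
Your proof is correct and is the standard textbook argument for both Borel--Cantelli lemmas; the paper states this as a Fact imported from the literature (Feller) without giving its own proof, and your argument is exactly the classical one the cited source contains. The one point you rightly flag --- that the second lemma needs full mutual independence to justify the product formula $\mathbb{P}\bigl(\bigcap_{n=m}^{M}A_n^c\bigr)=\prod_{n=m}^{M}\bigl(1-\mathbb{P}(A_n)\bigr)$, not merely pairwise independence --- is handled correctly, and the use of $1-x\leq e^{-x}$ together with continuity from above closes the argument with no gaps.
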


\subsection{Haar random gate-sets}\label{sec:asym-proof}
In this section we consider a Haar random  gate-set $\mathcal{S}=\{U_1,\ldots,U_n\}$, where $U_k$'s are independent and Haar random unitaries from $U(d)$. Using tools from Section \ref{sec:tools} we look for the limiting properties of $\sqrt{\mc{S}}T_{\nu_{\mc{S}},\lambda}$ when $\mc{S}\rightarrow \infty$. In particular we show that $\sqrt{\mc{S}}T_{\nu_{\mc{S}},\lambda}$ converges in distribution to a random matrix from either the real Ginibre ensemble if $\lambda\in \Lambda_t^r$ or to a random matrix from the complex Ginibre ensemble if $\lambda\in \Lambda_t^c$.   

\subsubsection{Real representations}
For a Haar random $U$ and $\lambda\in \Lambda_t^r$ 
the vectorization $\mathrm{vec}_R\left(\pi_\lambda(U)\right)$ is a random vector in $\mathbb{R}^{d_\lambda^2}$ with $\E{\mu}\left([\pi_\lambda(U)]_{ij}\right)=0$. Using \eqref{eq:orthogonality-real} we can calculate the covariance matrix of $\mathrm{vec}_R\left(\pi_\lambda(U)\right)$:
\begin{gather}
    \Sigma_{(ij),(kl)}=\left[\mathrm{Cov}\left(\mathrm{vec}_R\left(\pi_\lambda(U)\right)\right)\right]_{(ij),(kl)}=\E{\mu}\left([\pi_\lambda(U)]_{ij}[\pi_\lambda(U)]_{kl}\right)=\frac{\delta_{ik}\delta_{jl}}{d_\lambda}.
\end{gather}
Hence the covariance matrix is $d_\lambda^2\times d_\lambda^2$ matrix given by $\Sigma=\frac{1}{d_\lambda}\bbone_{d_\lambda ^2}$. For a Haar random gate-set $\mc{S}$, using the Central Limit Theorem (see Fact \ref{thm:CLT}) the random operator
\begin{gather}
    \sqrt{\mc{S}}T_{\nu_{\mc{S}},\lambda}=\frac{1}{\sqrt{\mc{S}}}\sum_{U\in \mc{S}}\pi_\lambda(U),
\end{gather}
when $\mathcal{S}\rightarrow \infty$, converges in distribution to a random operator $T_{\lambda}$ whose entries are distributed according to $N(0,\frac{1}{d_\lambda}\bbone_{d_\lambda ^2})$. Thus using \eqref{eq:density-normal} the limiting density is given by the density of the real Ginibre ensemble RG${_{d{_\lambda}}}$, i.e. is proportional to
\begin{gather}
   e^{-\frac{d_\lambda}{2}\mathrm{Tr}(T_\lambda^T T_\lambda)}\prod_{1\leq i, j\leq d_\lambda} d[T_\lambda]_{i j}.
\end{gather}
\subsubsection{Complex representations}\label{subsection:complex1}
For a Haar random $U$ and $\lambda\in \Lambda_t^c$  we first decompose $\pi_\lambda(U)$ into a sum of two matrices $x(U)$ and $y(U)$ corresponding to the real and imaginary parts, i.e. $\pi_\lambda(U)=x(U)+iy(U)$. Obviously, $\E{\mu}\left([x(U)]_{ij}\right)=0=\E{\mu}\left([y(U)]_{ij}\right)$. One easily sees that $\mathrm{vec}_C\left(\pi_\lambda(U)\right)$ is a random vector in $\mathbb{R}^{2d_\lambda^2}$. From \eqref{eq:orthogonality-complex1} and \eqref{eq:orthogonality-complex2} we have:
\begin{gather}
\E{\mu}\left(\left[x(U)\right]_{ji} \left[x(U)\right]_{kl}+\left[y(U)\right]_{ji} \left[y(U)\right]_{kl}\right)= \frac{\delta_{jk}\delta_{il}}{d_\lambda},\\
\E{\mu}\left(\left[x(U)\right]_{ji} \left[y(U)\right]_{kl}-\left[y(U)\right]_{ji} \left[x(U)\right]_{kl}\right)=0,\\
\E{\mu}\left(\left[x(U)\right]_{ji} \left[x(U)\right]_{kl}-\left[y(U)\right]_{ji} \left[y(U)\right]_{kl}\right)=0,\\
\E{\mu}\left(\left[x(U)\right]_{ji} \left[y(U)\right]_{kl}+\left[y(U)\right]_{ji} \left[x(U)\right]_{kl}\right)=0.
\end{gather}
Using the above four relations we can calculate the following expectations:
\begin{gather}\label{eq:expected1}
\E{\mu}\left(\left[x(U)\right]_{ji} \left[x(U)\right]_{kl}\right)=\frac{\delta_{jk}\delta_{il}}{2d_\lambda},\\\nonumber
\E{\mu}\left(\left[y(U)\right]_{ji} \left[y(U)\right]_{kl}\right)=\frac{\delta_{jk}\delta_{il}}{2d_\lambda},\\\nonumber
\E{\mu}\left(\left[x(U)\right]_{ji} \left[y(U)\right]_{kl}\right)=0.
\end{gather}
The covariance matrix of $\mathrm{vec}_{C}\left(\pi_\lambda(U)\right)$ is determined by the above expected values and is $2d_\lambda^2\times 2d_\lambda^2$ matrix given by $\Sigma=\frac{1}{2d_\lambda}\bbone_{2d_\lambda ^2}$. Using the Central Limit Theorem (see Fact \ref{thm:CLT}) the random operator
\begin{gather}
    \sqrt{\mc{S}}T_{\nu_{\mc{S}},\lambda}=\frac{1}{\sqrt{\mc{S}}}\sum_{U\in \mc{S}}\pi_\lambda(U),
\end{gather}
when $\mathcal{S}\rightarrow \infty$, converges in distribution to a random operator $T_{\lambda}=X_\lambda+iY_\lambda$ whose real and imaginary parts of entries are distributed according to $N(0,\frac{1}{2d_\lambda}\bbone_{2d_\lambda ^2})$. Thus using \eqref{eq:density-normal} the limiting density is given by the density of the complex Ginibre ensemble CG${_{d{_\lambda}}}$, i.e. is proportional to
\begin{gather}
   e^{-d_\lambda\mathrm{Tr}(T_\lambda^\dagger T_\lambda)}\prod_{1\leq i, j\leq d_\lambda} d [X_\lambda]_{i j}\prod_{1\leq i, j\leq d_\lambda}d[Y_\lambda]_{i,j}.
\end{gather}
   
\subsection{Symmetric Haar random gate-sets}\label{sec:sym-proof}
In this section we consider a symmetric Haar random  gate-set  $\mathcal{S}=\{U_1,\ldots,U_n\}\cup\{U_1^{-1},\ldots,U_n^{-1}\}$, where $U_k$'s are independent and Haar random unitaries from $U(d)$. Using tools from Section \ref{sec:tools} we look for the limiting distribution of $\sqrt{\mc{S}}T_{\nu_{\mc{S}},\lambda}$ when $\mc{S}\rightarrow \infty$.  In particular we show that $\sqrt{\mc{S}}T_{\nu_{\mc{S}},\lambda}$ converges in distribution to a random matrix from the Gaussian orthogonal ensemble if $\lambda\in \Lambda_t^r$ or to a random matrix from the Gaussian unitary ensemble if $\lambda\in \Lambda_t^c$.  
\subsubsection{Real representations}
For a Haar random $U$ and $\lambda\in \Lambda_t^r$ the matrix $\pi_\lambda(U)+\pi_\lambda(U^{-1})$ is a symmetric matrix. Therefore, using \eqref{eq:vectorization-symmetric} the vectorization $\mathrm{vec}_O\left(\pi_\lambda(U)+\pi_\lambda(U^{-1})\right)$ is a random vector in $\mathbb{R}^{d_\lambda(d_\lambda+1)/2}$ with $\E{\mu}\left([\pi_\lambda(U)+\pi_\lambda(U^{-1})]_{ij}\right)=0$. Using \eqref{eq:orthogonality-real} we can calculate the covariance matrix of $\frac{1}{\sqrt{2}}\mathrm{vec}_O\left(\pi_\lambda(U)+\pi_\lambda(U^{-1})\right)$:
\begin{gather}
    \Sigma_{(ij),(kl)}=\frac{1}{2}\left[\mathrm{Cov}\left(\mathrm{vec}_O\left(\pi_\lambda(U)+\pi_\lambda(U^{-1})\right)\right)\right]_{(ij),(kl)}=\\=\frac{1}{2}\E{\mu}\left(\left([\pi_\lambda(U)]_{ij}+[\pi_\lambda(U)]_{ji}\right)\left([\pi_\lambda(U)]_{kl}+[\pi_\lambda(U)]_{lk}\right)\right)=\\
    =\frac{\delta_{ik}\delta_{jl}+\delta_{il}\delta_{jk}}{d_\lambda}.
\end{gather}
Taking into account that $i\leq j$ and $k\leq l$ we see that the covariance matrix is a diagonal matrix. For a Haar random symmetric gate-set  $\mathcal{S}=\{U_1,\ldots,U_n\}\cup\{U_1^{-1},\ldots,U_n^{-1}\}$, using the Central Limit Theorem (see fact \ref{thm:CLT}) the random operator
\begin{gather}
    \sqrt{\mc{S}}T_{\nu_{\mc{S}},\lambda}=\frac{\sqrt{2}}{\sqrt{\mc{S}}}\sum_{i=1}^{n}\frac{\left(\pi_\lambda(U_i)+\pi_\lambda(U_i^{-1})\right )}{\sqrt{2}},
\end{gather}
when $\mathcal{S}\rightarrow \infty$, converges in distribution to a random operator $T_{\lambda}$ whose entries are distributed according to $N(0,\Sigma )$. Thus using \eqref{eq:density-normal} the limiting density is given by the density of the Gaussian orthogonal ensemble GOE${_{d{_\lambda}}}$, i.e. is proportional to
\begin{gather}
   e^{-\frac{d_\lambda}{4}\mathrm{Tr}(T_\lambda^2)}\prod_{1\leq i\leq j\leq d_\lambda} d[T_\lambda]_{i j}.
\end{gather}
\subsubsection{Complex representations}
For a Haar random $U$ and $\lambda\in \Lambda_t^c$  let us first decompose $\pi_\lambda(U)$  ($\pi_\lambda(U^{-1})$) into a sum of two matrices $x(U)$ ($x(U^{-1})$) and $y(U)$ ($y(U^{-1})$) corresponding to the real and imaginary parts, i.e. $\pi_\lambda(U)=x(U)+iy(U)$, $\pi_\lambda(U^{-1})=x(U^{-1})+iy(U^{-1})$. The expected values of $x(U)$, $x(U^{-1})$, $y(U)$, $y(U^{-1})$ are all zeros. We also have
\begin{gather}
    [x(U^{-1})]_{ij}=[x(U)]_{ji},\\\nonumber
    [y(U^{-1})]_{ij}=-[y(U)]_{ji}.
\end{gather}
The covariance matrix of 
\begin{gather}\label{eq:vec2}
    \mathrm{vec}_U\left(\frac{\pi_\lambda(U)+\pi_\lambda(U^{-1})}{\sqrt{2}}\right),
\end{gather}
is determined by 
\begin{gather}
\frac{1}{2}\E{\mu}\left((\left[x(U)\right]_{ij}+\left[x(U)\right]_{ji}) \left(\left[x(U)\right]_{kl}+\left[x(U)\right]_{lk}\right)\right)=\frac{\delta_{ik}\delta_{jl}+\delta_{il}\delta_{jk}}{2d_\lambda},\\\nonumber
\frac{1}{2}\E{\mu}\left((\left[y(U)\right]_{ij}-\left[y(U)\right]_{ji})\left(\left[y(U)\right]_{kl}-\left[y(U)\right]_{lk}\right)\right)=\frac{\delta_{ik}\delta_{jl}-\delta_{il}\delta_{jk}}{2d_\lambda},\\\nonumber
\frac{1}{2}\E{\mu}\left((\left[x(U)\right]_{ij}+\left[x(U)\right]_{ji})\left(\left[y(U)\right]_{kl}-\left[y(U)\right]_{lk}\right)\right)=0,
\end{gather}
where we used relations \eqref{eq:expected1}. Taking into account that $i\leq j$ and $k\leq l$ we see that the  covariance matrix is a diagonal matrix. For a Haar random symmetric gate-set  $\mathcal{S}=\{U_1,\ldots,U_n\}\cup\{U_1^{-1},\ldots,U_n^{-1}\}$, using the Central Limit Theorem (see Fact \ref{thm:CLT}) the random operator
\begin{gather}
    \sqrt{\mc{S}}T_{\nu_{\mc{S}},\lambda}=\frac{\sqrt{2}}{\sqrt{\mc{S}}}\sum_{i=1}^{n}\frac{\left(\pi_\lambda(U_i)+\pi_\lambda(U_i^{-1})\right )}{\sqrt{2}},
\end{gather}
when $\mathcal{S}\rightarrow \infty$, converges in distribution to a random operator $T_{\lambda}=X_\lambda+iY_\lambda$ whose real and imaginary parts of entries are distributed according to $N(0,\Sigma )$. Thus using \eqref{eq:density-normal} the limiting density is given by the density of the Gaussian unitary ensemble GUE${_{d{_\lambda}}}$, i.e. is proportional to
\begin{gather}
   e^{-\frac{d_\lambda}{2}\mathrm{Tr}(T_\lambda^2)} \prod_{1\leq i \leq j\leq d_\lambda} d [X_\lambda]_{i j}\prod_{1\leq i \leq j\leq d_\lambda}d[Y_\lambda]_{i,j}.
\end{gather}

\section{The random matrix model and its properties}\label{sec:model}
As we have seen in the previous sections a matrix $\sqrt{\mc{S}}T_{\nu_\mc{S},\lambda}$ converges, when $\mc{S}\rightarrow \infty$, in distribution to a random matrix $T_\lambda$ from: (1) real Ginibre ensemble when $\mc{S}$ is Haar random and $\lambda\in \Lambda^r_t$, (2) Gaussian orthogonal ensemble when  $\mc{S}$ is symmetric Haar random and $\lambda\in \Lambda^r_t$, (3) complex  Ginibre ensemble when $\mc{S}$ is Haar random and $\lambda\in \Lambda^c_t$, (4) Gaussian unitary ensemble when  $\mc{S}$ is symmetric Haar random and $\lambda\in \Lambda^c_t$. Therefore when $\mc{S}\rightarrow \infty$  the matrix $T_{\nu_{\mathcal{S}},t}$ given by \eqref{eq:delta_t} converges in distribution to a block diagonal matrix $\bbone^{m_0}\oplus\bigoplus_{\lambda\in\Lambda_t}\left (T_{\lambda}\right)^{\oplus m_\lambda}$, with blocks $T_\lambda$ belonging to the ensembles listed above. By similar arguments as in Section \ref{sec:moments}, the norm $\|\bbone^{m_0}\oplus\bigoplus_{\lambda\in\Lambda_t}\left (T_{\lambda}\right)^{\oplus m_\lambda}-\bbone^{m_0}\|$ is the same as the the norm of $T_t=\bigoplus_{\lambda\in\Lambda_t^{\mathrm{ess}}}T_{\lambda}$. It is also easy to see that for any $\lambda\neq \eta$, where $\lambda,\eta \in \Lambda_t^{\mathrm{ess}}$ blocks $T_\lambda$ and $T_\eta$ are independent. This follows directly from the Central Limit Theorem (Fact \ref{thm:CLT}) and the orthogonality relations given in Section \ref{sec:tools}.  We next introduce 
\begin{gather} \delta(\lambda):=\|T_\lambda\|,\,\,\delta(t)=\sup_{\lambda\in\Lambda_t^{\mathrm{ess}}}\delta(\lambda),\,\,\delta=\sup_t\delta(t). 
\end{gather}
Summing up we proved that:
\begin{theorem}\label{thm1}
When $|\mc{S}|\rightarrow\infty$:
\begin{enumerate}
    \item $\sqrt{|\mc{S}|}T_{\nu_\mc{S},\lambda}$ converges in distribution to a random matrix $T_\lambda$ from: (1) the real (complex) Ginibre ensemble RG$_{d_\lambda}$ (CG$_{d_\lambda}$) when $\mc{S}$ is Haar random gate-set and $\pi_\lambda$ is a real (complex) representation, (2) the  Gaussian orthogonal (unitary) ensemble GOE$_{d_\lambda}$ (GUE$_{d_\lambda}$), when  $\mc{S}$ is symmetric Haar random gate-set and $\pi_\lambda$ is a real (complex) representation.
    \item Operator ${\color{black}\sqrt{\S}}\bigoplus_{\lambda\in\Lambda_t^{\mathrm{ess}}}T_{\nu_{\mathcal{S}},\lambda}$  converges in distribution to a block diagonal matrix 
\begin{gather}\label{model1} T_t=\bigoplus_{\lambda\in\Lambda_t^{\mathrm{ess}}}T_{\lambda},
\end{gather}
with blocks $T_\lambda$ belonging to the ensembles listed above. Moreover $T_\lambda$'s with distinct $\lambda$'s are independent. 
\end{enumerate}
\end{theorem}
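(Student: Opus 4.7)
The plan is to assemble Theorem~\ref{thm1} from the four case-by-case derivations already carried out in Sections~\ref{sec:asym-proof} and \ref{sec:sym-proof}, then upgrade marginal convergence to joint convergence with independence across blocks. For part~(1), in each of the four situations (Haar random vs.\ symmetric Haar random, $\lambda\in\Lambda_t^r$ vs.\ $\lambda\in\Lambda_t^c$) the matrix $\sqrt{|\mc{S}|}T_{\nu_\mc{S},\lambda}$ is, up to a constant rescaling, a normalized sum of i.i.d.\ zero-mean random matrices (or of symmetrized pairs $\pi_\lambda(U_i)+\pi_\lambda(U_i^{-1})$ in the symmetric setting). Passing to the corresponding real vectorization $\mathrm{vec}_\ast$, the orthogonality relations \eqref{eq:orthogonality-real}, \eqref{eq:orthogonality-complex1}, \eqref{eq:orthogonality-complex2} compute the covariance matrix to be exactly the $\Sigma_\ast$ of the target ensemble, and Fact~\ref{thm:CLT} yields convergence in distribution to the corresponding Gaussian vector. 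Reading the resulting density back via \eqref{eq:density-normal} identifies the limit as RG$_{d_\lambda}$, CG$_{d_\lambda}$, GOE$_{d_\lambda}$, or GUE$_{d_\lambda}$ respectively, proving part~(1).

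For part~(2), I would stack the vectorizations of the individual blocks into one long real random vector
\begin{equation*}
V(\mc{S}) \;=\; \bigl(\mathrm{vec}_\ast(\sqrt{|\mc{S}|}\,T_{\nu_\mc{S},\lambda})\bigr)_{\lambda\in\Lambda_t^{\mathrm{ess}}},
\end{equation*}
where the choice of $\ast\in\{O,U,R,C\}$ depends on $\lambda$ and on the symmetry of $\mc{S}$ as in part~(1). The summands of $V(\mc{S})$ are still i.i.d.\ and zero mean (indexed by the Haar random unitaries $U_i$ in the non-symmetric case and by the pairs $\{U_i,U_i^{-1}\}$ in the symmetric case), so the multivariate CLT applies to the composite vector. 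It remains to compute the covariance matrix of $V(\mc{S})$ and recognize it as block diagonal: the diagonal blocks in $\lambda$ reproduce exactly the covariances obtained in part~(1), so the only new step is showing that the cross-block covariances between entries coming from $\pi_\lambda$ and $\pi_\eta$ with $\lambda\neq\eta$ vanish.

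The cross-block vanishing is the crux. Running the Schur-lemma argument that produced \eqref{eq:orthogonality-real} and \eqref{eq:orthogonality-complex1} for the pair $(\pi_\lambda,\pi_\eta)$ with $\lambda\neq\eta$ gives
\begin{equation*}
\E{\mu}\bigl([\pi_\eta(U^{-1})]_{ij}[\pi_\lambda(U)]_{kl}\bigr)=0,
\end{equation*}
because $\pi_\lambda$ and $\pi_\eta$ are inequivalent irreducible representations. In the complex case one must additionally rule out the accidental equivalence $\pi_\eta\simeq\overline{\pi_\lambda}=\pi_{\lambda^*}$; this is precisely why the definition of $\Lambda_t^{\mathrm{ess}}$ in \eqref{eq:def_lam_ess} keeps only one of $\lambda,\lambda^*$ for each complex pair. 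Combined with \eqref{eq:orthogonality-complex2} to handle the relations involving $\overline{\pi_\lambda}$, every cross-covariance between an entry of $T_\lambda$ and an entry of $T_\eta$ (including all four real/imaginary combinations) vanishes. Because the joint limit is Gaussian, vanishing covariance upgrades to independence, which yields the block-diagonal structure of $T_t$ and independence of the $T_\lambda$'s.

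The main obstacle, more bookkeeping than conceptual, is carrying the real/complex and non-symmetric/symmetric distinction through the vectorization uniformly: one must keep track of the factors of $\sqrt{2}$ introduced when symmetrizing $U\mapsto\pi_\lambda(U)+\pi_\lambda(U^{-1})$, of the doubling of real dimension when passing from complex entries to $(\mathrm{Re},\mathrm{Im})$ pairs, and of the index restrictions $i\le j$ that turn covariance matrices into diagonal ones in the Gaussian (but not the Ginibre) cases. Provided these are handled consistently as in Sections~\ref{sec:asym-proof}--\ref{sec:sym-proof}, the theorem follows by a single application of Fact~\ref{thm:CLT} to $V(\mc{S})$.
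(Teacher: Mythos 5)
Your proposal is correct and follows essentially the same route as the paper: case-by-case vectorization, covariance computation via the Schur orthogonality relations \eqref{eq:orthogonality-real}--\eqref{eq:orthogonality-complex2}, identification of the limiting Gaussian density with the RG/CG/GOE/GUE densities, and independence of blocks from vanishing cross-covariances between inequivalent irreducibles (with $\Lambda_t^{\mathrm{ess}}$ removing the $\lambda,\lambda^*$ redundancy). If anything, you spell out the joint-CLT and cross-block step more explicitly than the paper, which only asserts that independence "follows directly" from the CLT and the orthogonality relations.
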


\noindent Next we charactarize properties of our random matrix model, i.e. the probability distributions of : $\delta(\lambda)$, $\delta(t)$ and $\delta$. The tail bounds for the norm of any block $T_\lambda$ are given in Fact \ref{fact:tail_bounds}, where $N$ should be replaced by $d_\lambda$. Moreover, since blocks are independent we can formulate the following theorem.
\begin{theorem}\label{thm:bounds_T_t}
In the symmetric Haar random setting we have the following tail bounds for $\delta(t)$: 
\begin{gather}
        \label{eq:bounds_T_t}
      \mathbb{P}\left(\delta(t)>2+\epsilon\right )\leq\sum_{\lambda\in\Lambda_t^{\mathrm{r}}}F_{O}(d_\lambda,\epsilon)+\frac{1}{2}\sum_{\lambda\in \Lambda_t^c}F_{U}(d_\lambda,\epsilon)\\
      \label{eq:bounds_T_t_2}
      \mathbb{P}\left(\delta(t)>2+\epsilon\right )\leq 1-\exp\left(-\sum_{\lambda\in\Lambda_t^{\mathrm{r}}}F_O(d_\lambda,\epsilon)-\frac{1}{2}\sum_{\lambda\in\Lambda_t^{\mathrm{c}}}F_U(d_\lambda,\epsilon)\right),
\end{gather}
where functions $F_\ast$ are defined in Fact \ref{fact:tail_bounds}. The bounds for Haar random setting are the same albeit we have to exchange $O\leftrightarrow R$ and $U\leftrightarrow C$.
\end{theorem}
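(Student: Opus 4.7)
The plan is to exploit the block-diagonal structure. Since $T_t = \bigoplus_{\lambda \in \Lambda_t^{\mathrm{ess}}} T_\lambda$, we have $\delta(t) = \|T_t\| = \sup_{\lambda \in \Lambda_t^{\mathrm{ess}}} \|T_\lambda\| = \sup_\lambda \delta(\lambda)$, so the event $\{\delta(t) > 2+\epsilon\}$ coincides with the union $\bigcup_{\lambda \in \Lambda_t^{\mathrm{ess}}} \{\delta(\lambda) > 2+\epsilon\}$. Theorem \ref{thm1} identifies each block $T_\lambda$ with a matrix drawn from GOE$_{d_\lambda}$ or GUE$_{d_\lambda}$ in the symmetric Haar random setting (depending on whether $\lambda \in \Lambda_t^r$ or $\lambda \in \tilde\Lambda_t^c$), and from RG$_{d_\lambda}$ or CG$_{d_\lambda}$ in the non-symmetric setting. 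Fact \ref{fact:tail_bounds}, specialized to $N = d_\lambda$, then immediately gives the single-block tail estimate
\begin{equation*}
\mathbb{P}\bigl(\delta(\lambda) > 2+\epsilon\bigr) \leq F_\ast(d_\lambda, \epsilon),
\end{equation*}
with the matching symbol $\ast \in \{O, U, R, C\}$.

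To obtain \eqref{eq:bounds_T_t}, I would apply the union bound to the decomposition above and split the resulting sum along $\Lambda_t^{\mathrm{ess}} = \Lambda_t^r \cup \tilde\Lambda_t^c$. Because $\tilde\Lambda_t^c$ keeps exactly one element from each conjugate pair $\{\lambda,\lambda^*\}$ and $d_\lambda = d_{\lambda^*}$ (together with the fact that $F_U$ depends on $\lambda$ only through $d_\lambda$), the sum over $\tilde\Lambda_t^c$ rewrites as $\tfrac{1}{2}\sum_{\lambda \in \Lambda_t^c} F_U(d_\lambda,\epsilon)$. This is the source of the factor $\tfrac{1}{2}$ in the statement; the non-symmetric version is produced by the same argument with $F_O,F_U$ replaced by $F_R,F_C$.

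For \eqref{eq:bounds_T_t_2} I would additionally invoke the independence of the $T_\lambda$'s (Theorem \ref{thm1}, item 2), which makes the events $\{\delta(\lambda) \leq 2+\epsilon\}$ mutually independent. Independence gives
\begin{equation*}
\mathbb{P}\bigl(\delta(t) \leq 2+\epsilon\bigr) = \prod_{\lambda \in \Lambda_t^{\mathrm{ess}}} \mathbb{P}\bigl(\delta(\lambda) \leq 2+\epsilon\bigr) \geq \prod_\lambda \bigl(1 - F_\ast(d_\lambda, \epsilon)\bigr),
\end{equation*}
and one then passes from this product to the exponential form, applying the same real/complex splitting as before to recover the $\tfrac{1}{2}$ prefactor on the complex contribution.

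The argument is essentially a packaging of two previously established inputs: the identification of blocks with standard ensembles (Theorem \ref{thm1}) and the scalar tail estimate for each ensemble (Fact \ref{fact:tail_bounds}). Consequently, the main obstacle is purely bookkeeping: keeping track of the correspondence between $\lambda \in \Lambda_t^{\mathrm{ess}}$ and the correct ensemble label $\ast \in \{O,U,R,C\}$, and converting sums indexed by $\tilde\Lambda_t^c$ into the sums over $\Lambda_t^c$ that appear in the statement. Once the reduction to independent per-block tail bounds is in place, both \eqref{eq:bounds_T_t} and \eqref{eq:bounds_T_t_2} follow by the elementary union-bound / independence steps outlined above.
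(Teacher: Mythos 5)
Your plan is the paper's proof almost verbatim: for \eqref{eq:bounds_T_t} the paper also writes $\{\delta(t)>2+\epsilon\}=\bigcup_{\lambda\in\Lambda_t^{\mathrm{ess}}}\{\delta(\lambda)>2+\epsilon\}$, applies the union bound together with Fact \ref{fact:tail_bounds} for each block, and the factor $\tfrac12$ arises exactly as you say, from replacing the sum over $\tilde\Lambda_t^c$ by half the sum over $\Lambda_t^c$ using $d_\lambda=d_{\lambda^*}$. For \eqref{eq:bounds_T_t_2} the paper likewise uses independence of the blocks to write $\mathbb{P}(\delta(t)>2+\epsilon)=1-\prod_{\lambda}(1-p_\lambda)$ with $p_\lambda=\mathbb{P}(\delta(\lambda)>2+\epsilon)$. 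So there is no difference in approach.

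The one place you gloss over, ``one then passes from this product to the exponential form,'' is exactly the step that does not go through as an elementary manipulation: to reach $1-\exp\left(-\sum_\lambda F_*(d_\lambda,\epsilon)\right)$ from $1-\prod_\lambda(1-p_\lambda)$ one needs $\prod_\lambda(1-p_\lambda)\ge \exp\left(-\sum_\lambda p_\lambda\right)$, but the standard inequality is the reverse one, $1-x\le e^{-x}$, so in fact $1-\prod_\lambda(1-p_\lambda)\ge 1-\exp\left(-\sum_\lambda p_\lambda\right)$. The paper's own proof makes exactly this move, so you have not introduced a new error, but if you wrote the step out you would hit this obstruction. A clean fix is either to state the bound in product form, $\mathbb{P}(\delta(t)>2+\epsilon)\le 1-\prod_\lambda\bigl(1-F_*(d_\lambda,\epsilon)\bigr)$, or to use $1-x\ge e^{-x/(1-x)}$ together with $F_*\le\tfrac12$, which yields $\mathbb{P}(\delta(t)>2+\epsilon)\le 1-\exp\left(-2\sum_\lambda F_*(d_\lambda,\epsilon)\right)$, i.e.\ the stated bound with an extra factor of $2$ in the exponent.
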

\begin{proof}
The first bound is the union bound, i.e using \eqref{model1} we have
\begin{gather}
\mathbb{P}\left(\|T_t\|>2+\epsilon\right )=\mathbb{P}\left(\bigcup_{\lambda\in\Lambda_t^{\mathrm{ess}}}\left(\|T_\lambda\|>2+\epsilon\right)\right)=\sum_{\lambda\in\Lambda_t^{\mathrm{ess}}}\mathbb{P}\left(\|T_\lambda\|>2+\epsilon\right),
\end{gather}
which combined with Fact \ref{fact:tail_bounds} gives \eqref{eq:bounds_T_t}. 
For the second bound let $p_\lambda=\mathbb{P}\left(\|T_\lambda\|> 2+\epsilon\right)$. Then 
\begin{gather}
\mathbb{P}\left(\|T_t\|>2+\epsilon\right )=\mathbb{P}\left(\bigcup_{\lambda\in\Lambda_t^{\mathrm{ess}}}\left(\|T_\lambda\|>2+\epsilon\right)\right )=1-\mathbb{P}\left(\bigcap_{\lambda\in\Lambda_t^{\mathrm{ess}}}\left(\|T_\lambda\|\leq 2+\epsilon\right)\right )=\\\nonumber =1-\prod_{\lambda\in\Lambda_t^{\mathrm{ess}}} \mathbb{P}\left(\|T_\lambda\|\leq 2+\epsilon\right)=1-\prod_{\lambda\in\Lambda_t^{\mathrm{ess}}} \left(1-p_\lambda \right)\leq 1-\exp\left(-\sum_{\lambda\in\Lambda_t^{\mathrm{ess}}}p_\lambda\right)\leq\\\nonumber\leq 1-\exp\left(-\sum_{\lambda\in\Lambda_t^{\mathrm{r}}}F_O(d_\lambda,\epsilon)-\frac{1}{2}\sum_{\lambda\in\Lambda_t^{\mathrm{c}}}F_U(d_\lambda,\epsilon)\right),
\end{gather}
where in the third equality we used the independence of $T_\lambda$'s. We also note that both bounds coincide when the RHS of \eqref{eq:bounds_T_t} is close to zero.
\end{proof}

\begin{example}\label{ex1} 
Before we proceed further we consider an example of a qubit approximate $t$-design in the symmetric Haar random setting, i.e. when $d=2$. In this case $\Lambda_t=\Lambda_t^r=\{(k,-k):k\in \{1, ..., t\}\}$ and the dimension $d_\lambda=d_{(k,-k)}=2k+1$. Thus
\begin{gather}\label{eq:bound_qubit}
\mathbb{P}\left(\|T_t\|>2+\epsilon\right)\leq\sum_{\lambda\in\Lambda_t^{\mathrm{r}}}F_{O}(d_\lambda,\epsilon)=\frac{1}{2}\sum_{k=1}^te^{-\frac{(2k+1)\epsilon^2}{4}}=\frac{e^{-\epsilon^2/4}(1-e^{-t\epsilon^2/2})}{2(e^{\epsilon^2/2}-1)}. 
\end{gather}
Thus by Theorem \ref{thm:bounds_T_t} we get the following bounds:
\begin{gather}
  \mathbb{P}\left(\delta(t)>2+\epsilon\right )\leq\frac{e^{-\epsilon^2/4}(1-e^{-t\epsilon^2/2})}{2(e^{\epsilon^2/2}-1)},\\
    \mathbb{P}\left(\delta>2+\epsilon\right )\leq \frac{e^{-\epsilon^2/4}}{2(e^{\epsilon^2/2}-1)}.
    \end{gather}
For a fixed $\epsilon>0$ let us consider a series of events:
\begin{gather}
    A_k=\{\delta\left(\left(k,-k\right)\right)>2+\epsilon\}.
\end{gather}
We note that the sum 
\begin{gather}
    \sum_{k=1}^\infty\mathbb{P}(A_k)<\frac{e^{-\epsilon^2/4}}{2(e^{\epsilon^2/2}-1)}<\infty.
\end{gather}
Therefore using the Borel-Cantelli lemmas (see Fact \ref{fact:B-C})  we can deduce  that for any given $\epsilon>0$ we have 
\begin{gather}
    \mathbb{P}\left(\{A_k\,\, f.o.\}\right)=1.
\end{gather}
In other words, for any $\epsilon>0$ with the probability $1$ there is  $\lambda \in \Lambda_t$ such that for all $\lambda^\prime \in \Lambda_{t^\prime}$, where $t^\prime>t$ we have $\delta(\lambda^\prime)\leq 2+\epsilon$. Thus in order to determine the value of $\delta$ it is enough to determine $\delta(\nu_\mc{S},t)$ for some finite $t$. Next, we prove similar result for any $d\geq 2$. 
\end{example}
\begin{theorem}\label{thm:spectral-gap-conjecture}
   For both Haar random and symmetric Haar random settings in any dimension $d\geq2$ with the probability $1$ there exists finite $t\in \mathbb{Z}_{+}$ such that  $\delta=\delta(t)$. 
\end{theorem}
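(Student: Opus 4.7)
The plan is to extend the Borel-Cantelli strategy from Example \ref{ex1} to arbitrary $d\ge 2$. Fix $\epsilon>0$ and consider the countable family of events $A_\lambda^\epsilon:=\{\delta(\lambda)>2+\epsilon\}$ indexed by $\lambda\in\Lambda^{\mathrm{ess}}=\bigcup_{t}\Lambda_t^{\mathrm{ess}}$. By Theorem \ref{thm1} the blocks $T_\lambda$ are independent and each lies in one of the four ensembles of Fact \ref{fact:tail_bounds}, so $\mathbb{P}(A_\lambda^\epsilon)\le\tfrac{1}{2}\exp(-c_0\,d_\lambda\epsilon^2)$ for a universal constant $c_0>0$ (the worst case $c_0=1/4$ coming from GOE). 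The task then reduces to proving $\sum_{\lambda\in\Lambda^{\mathrm{ess}}}\mathbb{P}(A_\lambda^\epsilon)<\infty$ for every $\epsilon>0$ and invoking Fact \ref{fact:B-C}.

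To control the sum I would group the $\lambda$'s by $\|\lambda\|_1=2k$. Lemma \ref{lemma:dim-bound-new} gives the lower bound $d_\lambda\ge 2k$, Fact \ref{fact:partitions1} bounds the number of such $\lambda$ by $p(k)^2$, and Fact \ref{fact:partitions-properties} yields $p(k)\le(5.44/k)\,e^{\pi\sqrt{2k/3}}$. Assembling these estimates gives
\begin{gather*}
\sum_{\lambda\in\Lambda^{\mathrm{ess}}}\mathbb{P}(A_\lambda^\epsilon)\;\le\;\sum_{k=1}^{\infty}\frac{C}{k^2}\exp\!\Bigl(2\pi\sqrt{2k/3}-c\,k\,\epsilon^2\Bigr),
\end{gather*}
which converges for every $\epsilon>0$ because the linear-in-$k$ exponential decay eventually dominates the $\exp(\mathrm{const}\cdot\sqrt{k})$ growth. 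The first Borel-Cantelli lemma then gives $\mathbb{P}(\{A_\lambda^{1/n}\ \mathrm{f.o.}\})=1$ for every $n\in\mathbb{Z}_+$; intersecting these countably many probability-one events produces a single probability-one event $\Omega_0$ on which, for every $n$, only finitely many $\lambda\in\Lambda^{\mathrm{ess}}$ satisfy $\delta(\lambda)>2+1/n$.

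On the subevent of $\Omega_0$ where in addition $\delta>2$ strictly, I conclude as follows: choose $n$ with $2+1/n<\delta$; then $F:=\{\lambda\in\Lambda^{\mathrm{ess}}:\delta(\lambda)>2+1/n\}$ is finite and every $\lambda\notin F$ satisfies $\delta(\lambda)\le 2+1/n<\delta$, so $\delta=\max_{\lambda\in F}\delta(\lambda)$ is attained at some $\lambda^\star\in F$. Setting $t^\star:=\Sigma(\lambda^\star_+)$ gives a finite $t^\star\in\mathbb{Z}_+$ with $\lambda^\star\in\Lambda_{t^\star}^{\mathrm{ess}}$, hence $\delta(t^\star)=\delta$.

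The hard part will be ruling out the residual event $\{\delta\le 2\}$: each $\delta(\lambda)$ has a continuous distribution, so on this event $\delta(t)\ne 2$ for every finite $t$ almost surely, and then $\delta(t)<\delta$ strictly for all $t$, so the argument above collapses. To handle this I would apply the second (independence) half of Fact \ref{fact:B-C} to the events $B_\lambda:=\{\delta(\lambda)>2\}$, which are independent by Theorem \ref{thm1}. Standard edge (Tracy-Widom) asymptotics for the top eigen/singular value of the four ensembles imply $p_\lambda:=\mathbb{P}(B_\lambda)$ is bounded below by a universal positive constant once $d_\lambda$ is large, so $\sum_\lambda p_\lambda=\infty$ and the second Borel-Cantelli lemma yields $\delta(\lambda)>2$ infinitely often---in particular $\delta>2$---almost surely. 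Producing this uniform lower bound on $p_\lambda$ directly (without appealing to Tracy-Widom, e.g.\ via a short second-moment argument on the largest eigenvalue of each ensemble) is the technical step most in need of care.
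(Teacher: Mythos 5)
Your proposal is correct and its core is the same as the paper's: bound $\mathbb{P}(\delta(\lambda)>2+\epsilon)$ by the ensemble tail bounds of Fact \ref{fact:tail_bounds}, control the number of $\lambda$ with $\|\lambda\|_1=2k$ via Facts \ref{fact:partitions1} and \ref{fact:partitions-properties}, use $d_\lambda\geq 2k$ from Lemma \ref{lemma:dim-bound-new}, and conclude summability so that the first Borel--Cantelli lemma gives $\mathbb{P}(\{A_\lambda\ \mathrm{f.o.}\})=1$. (You use the cruder count $\alpha_{2k}\le p(k)^2=O(k^{-2}e^{2\pi\sqrt{2k/3}})$ where the paper uses the polynomial bound $c_d k^{d-2}$; both are dominated by the $e^{-ck\epsilon^2}$ factor, so this is immaterial.) Where you genuinely go beyond the paper is the endgame. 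The paper stops at ``finitely often'' and treats the attainment of the supremum as immediate, but as you correctly observe, the intersection $\Omega_0=\bigcap_n\{A_\lambda^{1/n}\ \mathrm{f.o.}\}$ only yields a maximizing $\lambda^\star$ on the event $\{\delta>2\}$; on $\{\delta\le 2\}$ the supremum could a priori be approached without being attained. Your fix --- the second Borel--Cantelli lemma applied to the independent events $B_\lambda=\{\delta(\lambda)>2\}$, with $\inf_\lambda\mathbb{P}(B_\lambda)>0$ for large $d_\lambda$ supplied by edge (Tracy--Widom) asymptotics --- is sound and closes a step the paper leaves implicit; note that a single $\lambda$ with $\delta(\lambda)>2$ already forces $\delta>2$, so you do not even need ``infinitely often'' here. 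The only part of your argument not fully executed is the uniform positive lower bound on $\mathbb{P}(\|T_\lambda\|>2)$, which rests on standard but external random-matrix facts rather than anything proved in the paper; as you say, a self-contained second-moment or small-ball argument would be the careful way to finish.
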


\begin{proof}
    Let $\Lambda^{\mathrm{ess}}:=\bigcup_t\Lambda_t^{\mathrm{ess}}$. We equip $\Lambda^{\mathrm{ess}}$ with the order $\prec$ such that form any $\lambda\in\Lambda_t^{\mathrm{ess}}$ and $\lambda^\prime\in \Lambda_{t^\prime}^{\mathrm{ess}}$ we have $\lambda\prec\lambda^\prime$ when $t<t^\prime$ and when $t=t^\prime$ the order is the lexicographic order. For a fixed $\epsilon>0$ let us consider a sequence of events $\{A_\lambda\}_{\lambda\in\Lambda^{\mathrm{ess}}}$, where
    \begin{gather}
        A_\lambda=\{\delta(\lambda)>2+\epsilon\}.
    \end{gather}
Following the discussion presented for $d=2$ we need to show that
    \begin{gather}
   \mathbb{P}\left(\{A_\lambda\,\, f.o.\}\right)=1.
\end{gather}
Using the Borel-Cantelli lemmas (see Fact \ref{fact:B-C}) and Theorem \ref{thm:bounds_T_t}, in the symmetric Haar random setting, it is enough to show that 
\begin{gather}\label{eq:condition}
   \sum_{\lambda\in\Lambda^{\mathrm{ess}}}\mathbb{P}(A_\lambda)\leq\lim_{t\rightarrow \infty}\left(\sum_{\lambda\in\Lambda_t^{\mathrm{r}}}F_{O}(d_\lambda,\epsilon)+\frac{1}{2}\sum_{\lambda\in \Lambda_t^c}F_{U}(d_\lambda,\epsilon)\right)<\infty.
\end{gather}
For the Haar random setting we have to exchange $O\leftrightarrow R$ and $U\leftrightarrow C$. First we note that the number of distinct irreducible representations $\pi_\lambda$ with $\|\lambda\|_1 =2k$ is given by the $\alpha_{2k}$ from Fact \ref{fact:partitions1}. By Fact \ref{fact:partitions-properties} for a fixed $n$ the function $p_n(k)$ grows like $O(k^{n-1})$. Therefore $\alpha_{2k}$ is bounded by $c_dk^{d-2}$, where $c_d$ is a positive constant that can depend on $d$. Next, using Fact \ref{lemma:dim-bound-new} we can bound $d_\lambda\geq 2k$. Thus \eqref{eq:condition} is bounded from above by
\begin{gather}\label{eq:spectral-gap-bound}
\sum_{\lambda\in\Lambda^{\mathrm{ess}}}\mathbb{P}(A_\lambda)\leq\sum_{k=1}^\infty c_dk^{d-2}e^{-\frac{1}{2}dk\epsilon^2}, 
\end{gather}
We note that for $d>2$
\begin{gather}\label{eq:int-bound}
    \int_{0}^\infty dk\left(c_d k^{d-2}e^{-\frac{1}{2}k\epsilon^2}\right)=\frac{2^{d-1}c_d(d-2)!}{\epsilon^{2(d-1)}}.
\end{gather}
Hence \eqref{eq:spectral-gap-bound} is is evidently finite for any $d$. 
\end{proof}

\begin{theorem}
In the (symmetric) Haar random setting in any dimension $d>2$ and for any $\epsilon> c \left(\frac{4\pi^2}{3d(d-1)^2}\right)^{1/4} $ we have 
\begin{gather*}
    \mathbb{P}\left(\delta>2+\epsilon\right ) \leq
    e^{-\frac{d(d+1)}{4c^2} \epsilon^2} \left( e^{2\pi\sqrt{\frac{d+2}{3}}} \frac{60+100\pi\sqrt{3d+6}}{d+2} - b \right) + \\
    +\frac{60}{d^2} \left( 2 + \frac{\sqrt{2\pi}c}{\epsilon} \right) e^{-\frac{3d(d-1)}{4c^2}\epsilon^2 + 2\pi\sqrt{\frac{d}{3}}},
\end{gather*}
where 
\begin{equation*}
    b = 10 \left[8\pi^2 \Ei{2\sqrt{\frac{2}{3}}\pi} - e^{2\sqrt{\frac{2}{3}}\pi}(3+2\sqrt{6}\pi) \right] \approx 3855.93,
\end{equation*}
and $c=1$ in the Haar random setting and $c=\sqrt{2}$ in the symmetric Haar random setting.
\end{theorem}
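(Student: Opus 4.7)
The strategy is to reduce the claim to the individual tail bounds of Fact~\ref{fact:tail_bounds} applied to each block $T_\lambda$ via a union bound over the countable family $\Lambda^{\mathrm{ess}}$, and then to estimate the resulting series using the counting and dimension bounds of Facts~\ref{fact:partitions1}, \ref{fact:partitions-properties} and Lemma~\ref{lemma:dim-bound-new}.

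First, just as in the proof of Theorem~\ref{thm:bounds_T_t}, I would start from
\begin{equation*}
\mathbb{P}(\delta > 2+\epsilon) \;\le\; \sum_{\lambda \in \Lambda^{\mathrm{ess}}} \mathbb{P}\bigl(\delta(\lambda) > 2+\epsilon\bigr) \;\le\; \sum_{\lambda \in \Lambda^{\mathrm{ess}}} \tfrac{1}{2}\exp\!\Bigl(-\tfrac{d_\lambda\,\epsilon^2}{2c^2}\Bigr),
\end{equation*}
where in the last inequality I invoke Fact~\ref{fact:tail_bounds} and use that the weaker of the two possible decay exponents dominates both: with $c=\sqrt{2}$ the common bound $\tfrac{1}{2}e^{-d_\lambda\epsilon^2/(2c^2)}$ covers both $F_O$ and $F_U$ in the symmetric case, and with $c=1$ it covers both $F_R$ and $F_C$ in the Haar case.

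Next, I would group the sum by $\|\lambda\|_1 = 2k$. Fact~\ref{fact:partitions1} provides at most $\alpha_{2k}\le p(k)^2$ terms for each $k\ge 1$, and Lemma~\ref{lemma:dim-bound-new} supplies the piecewise lower bound $d_\lambda \ge \binom{d+1}{2}$ when $2k\le d$ and $d_\lambda \ge \binom{d}{2}+2(d-1)k$ when $2k\ge d$. Splitting the series at $k=\lfloor d/2\rfloor$ produces two pieces. In the small-$k$ piece, the $k$-independent exponential $\exp\!\bigl(-\tfrac{d(d+1)}{4c^2}\epsilon^2\bigr)$ factors out and leaves a finite partial sum of $p(k)^2$, which I would dominate by the integral of the upper bound $p(k)^2 \le \tfrac{29.59}{k^2}\exp(2\pi\sqrt{2k/3})$ from Fact~\ref{fact:partitions-properties}, taken from $k=1$ to $k=\lfloor d/2\rfloor+1$. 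After the substitution $u=\sqrt{2k/3}$ the integral takes the shape $\int u^{-3}e^{2\pi u}\,du$, which can be computed in closed form by two integrations by parts, producing elementary terms plus $2\pi^2\,\mathrm{Ei}(2\pi u)$. Evaluating at the upper endpoint $u=\sqrt{(d+2)/3}$ yields the prefactor $e^{2\pi\sqrt{(d+2)/3}}\tfrac{60+100\pi\sqrt{3d+6}}{d+2}$, while the boundary contribution at the lower endpoint $u=\sqrt{2/3}$ is exactly the universal constant $b$ subtracted in the theorem.

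For the large-$k$ piece, factoring out $\exp\!\bigl(-\tfrac{d(d-1)}{4c^2}\epsilon^2\bigr)$ leaves a tail of the form
\begin{equation*}
\sum_{k>\lfloor d/2\rfloor}\tfrac{14.8}{k^2}\exp\!\Bigl(2\pi\sqrt{\tfrac{2k}{3}} - \tfrac{(d-1)\epsilon^2}{c^2}\,k\Bigr),
\end{equation*}
which I would again bound by the corresponding integral. This is where the \emph{main obstacle} lies: the integrand $\exp(2\pi\sqrt{2k/3}-\alpha k)$ with $\alpha=(d-1)\epsilon^2/c^2$ is unimodal with maximum at $k^\star = 2\pi^2/(3\alpha^2)$, and to control the tail by its endpoint value multiplied by a Gaussian correction one needs $k^\star\le d/2$, i.e.\ $\alpha \ge 2\pi/\sqrt{3d}$, which is precisely the hypothesis $\epsilon > c\bigl(\tfrac{4\pi^2}{3d(d-1)^2}\bigr)^{1/4}$. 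Under that hypothesis, completing the square in the exponent (writing $2\pi\sqrt{2k/3}-\alpha k$ as $-\alpha(\sqrt{k}-\pi\sqrt{2/(3\alpha^2)})^2$ plus constants) and carrying out a standard Gaussian estimate on $[d/2,\infty)$ produces both the exponential factor $\exp(-\tfrac{3d(d-1)}{4c^2}\epsilon^2+2\pi\sqrt{d/3})$ and the prefactor $\tfrac{60}{d^2}\bigl(2+\tfrac{\sqrt{2\pi}c}{\epsilon}\bigr)$, at which point the two pieces combine into the claimed bound. The delicate bookkeeping of constants in the integration-by-parts computation and in the Laplace-type bound is what occupies the bulk of the technical work.
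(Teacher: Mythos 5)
Your plan reproduces the paper's proof essentially step for step: the union bound over $\Lambda^{\mathrm{ess}}$ combined with Fact~\ref{fact:tail_bounds}, grouping by $\|\lambda\|_1=2k$ with $\alpha_{2k}\le p(k)^2$ and the dimension bounds of Lemma~\ref{lemma:dim-bound-new}, the split at $k\approx d/2$, the integral comparison producing the $\mathrm{Ei}$ terms and the constant $b$, and the observation that the hypothesis $\epsilon>c\bigl(\tfrac{4\pi^2}{3d(d-1)^2}\bigr)^{1/4}$ is exactly what makes the large-$k$ integrand monotone so that completing the square and a Gaussian tail estimate close the argument. The only detail you elide (beyond constants) is that the $\mathrm{Ei}$ term at the upper endpoint must itself be bounded by an elementary expression, which the paper does via $\mathrm{Ei}(x)<\tfrac{3e^x}{2x}$ for $x>3$; this is routine and fits under the bookkeeping you already flag.
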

\begin{proof}
We will prove our thesis for the Haar random setting. The proof for the symmetric Haar random case is analogous.
Recall that  
\begin{gather}\label{eq:proof1}
\mathbb{P}\left(\delta>2+\epsilon\right )\leq  \sum_{\lambda\in\Lambda^{\mathrm{ess}}}\mathbb{P}(\delta(\lambda)>2+\epsilon).
\end{gather}
Let $\alpha_{2k}$ be as in the Fact \ref{fact:partitions1}. Using Fact \ref{fact:partitions-properties} we can write 
\begin{gather}
\alpha_{2k}\leq \frac{30}{k^2}e^{2\pi\sqrt{\frac{2k}{3}}}.
\end{gather}
Using the bound given in Lemma \ref{lemma:dim-bound-new} we can bound  \eqref{eq:proof1} from above by 
\begin{gather}\label{eq:bound2}
30 \left[ \sum_{k=1}^{\floor{d/2}} \frac{\exp\left(-\frac{d(d+1)}{4} \epsilon^2 + a\sqrt{k}\right)}{k^2} + \sum_{k=\ceil{d/2}}^\infty \frac{\exp \left(-(d-1)\left(k + \frac{d}{4}\right) \epsilon^2 + a\sqrt{k}\right)}{k^2} \right],
\end{gather}
where $a=2\pi\sqrt{\frac{2}{3}}$. First, let us consider the first summation. We note that the function $f(x) = \exp(a\sqrt{x})/x^2$ is growing for all $x$ greater than $\frac{16}{a^2} < 1$. Thus:
\begin{gather} \label{eq:bound_thm_ine1}
    30 e^{-\frac{d(d+1)}{4} \epsilon^2} \sum_{k=1}^{\floor{d/2}} \frac{e^{a\sqrt{k}}}{k^2} \le 30 e^{-\frac{d(d+1)}{4} \epsilon^2} \int_{1}^{(d+2)/2} \frac{e^{a\sqrt{x}}}{x^2} dx.
\end{gather}
The integral above is equal to
\begin{equation*} 
    \frac{1}{3} \left[ e^{a} (3 + 2\sqrt{6}\pi) - e^{2\sqrt{\frac{d+2}{3}}\pi} \frac{6+4\sqrt{3d+6}\pi}{d+2} - 8\pi^2 \Ei{a} + 8\pi^2 \Ei{2\sqrt{\frac{d+2}{3}}\pi} \right],
\end{equation*}
where
\begin{equation*}
    \Ei{x} = \int_{-\infty}^x \frac{e^t}{t}dt,
\end{equation*}
is the exponential integral function. We observe that for $g(x) = p\frac{\exp(x)}{x} - \Ei{x}$ and some $p > 1$ we have that:
\begin{equation*}
    g'(x) = \left[ (p-1)x-p \right] \frac{e^x}{x^2}
\end{equation*}
is $0$ at $x = x_p =\frac{p}{p-1}$ and positive for $x \ge x_p$. Moreover for $p=\frac{3}{2}$ we have that $g\left( x_p \right) > 0$ and $x_p = 3 \le 2\sqrt{\frac{d+2}{3}}\pi $ thus for $x>3$ it holds $\frac{3\exp(x)}{2x} > \Ei{x}$ and \eqref{eq:bound_thm_ine1} is bounded by:
\begin{gather*}
      e^{-\frac{d(d+1)}{4} \epsilon^2} \left[ e^{2\sqrt{\frac{d+2}{3}}\pi} \frac{60+100\pi\sqrt{3d+6}}{d+2} - b \right],\\
    b = 10 \left[8\pi^2 \Ei{2\sqrt{\frac{2}{3}}\pi} - e^{2\sqrt{\frac{2}{3}}\pi}(3+2\sqrt{6}\pi) \right] \approx 3855.93.
\end{gather*}
Next, we will deal with the second summation in \eqref{eq:bound2}. In order to change summation to integration we first note that the function $h(x)=\exp \left(-(d-1)\left(x + \frac{d}{4}\right) \epsilon^2 + a\sqrt{x}\right)$ attains maximum at $x=\frac{2\pi^2}{3(d-1)^2\epsilon^4}$. Thus if we assume $\epsilon > \epsilon_d := \left(\frac{4\pi^2}{3d(d-1)^2}\right)^{1/4}$ then the function $h$ is decreasing for $x \ge \frac{d}{2}$. Therefore when $\epsilon > \epsilon_d$ the function $h(x)/x^2$ is also decreasing for $x > \frac{d}{2}$. Assume that $\epsilon>\epsilon_d$. Then the second sum in \eqref{eq:bound2} is bounded by 
\begin{gather}
    \nonumber
    \frac{120}{d^2} h\left(\frac{d}{2}\right) + 60 e^{-\frac{d(d-1)}{4} \epsilon^2} \int_{\sqrt{d/2}}^\infty \frac{e^{-(d-1)\epsilon^2x^2+a x}}{x^3} dx \leq\\
    \nonumber
    \leq \frac{120}{d^2} h\left(\frac{d}{2}\right) + 60  \left( \frac{2}{d} \right)^\frac{3}{2} e^{-\frac{d(d-1)}{4} \epsilon^2} \int_{\sqrt{d/2}}^\infty e^{-(d-1)\epsilon^2x^2+a x} dx= \\
    \nonumber
    = \frac{120}{d^2} h\left(\frac{d}{2}\right) + 60 \left( \frac{2}{d} \right)^\frac{3}{2} e^{-\frac{d(d-1)}{4} \epsilon^2} \int_{\sqrt{d/2}}^\infty \exp{\left(-(d-1)\epsilon^2 \left( x - \frac{a}{2(d-1)\epsilon^2} \right)^2 + \frac{a^2}{4(d-1)\epsilon^2} \right)} dx =\\
    \label{eq:bound_thm_ine2}
    = \frac{120}{d^2} h\left(\frac{d}{2}\right) + 60 \left( \frac{2}{d} \right)^\frac{3}{2} e^{-\frac{d(d-1)}{4} \epsilon^2 + \frac{a^2}{4(d-1)\epsilon^2}} \int_{\sqrt{d/2}}^\infty \exp{\left(-\frac{\left(x-\mu\right)^2}{2\sigma^2}\right)}dx,
\end{gather}
where $\mu=\frac{a}{2(d-1)\epsilon^2}$ and $\sigma^2=\frac{1}{2(d-1)\epsilon^2}$. Next, for a Gaussian random variable $X\sim N(\mu,\sigma^2)$ we use Fact \ref{fact:gauss_concentration} and obtain
\begin{gather}\label{eq:gauss_concent}
    \mathbb{P}(X\geq \mu+\beta)\leq \frac{1}{2}e^{-\frac{\beta^2}{2\sigma^2}}.
\end{gather}
Thus using \eqref{eq:gauss_concent} we get
\begin{gather*}
    \int_{\sqrt{d/2}}^\infty \exp{\left(-\frac{\left(x-\mu\right)^2}{2\sigma^2}\right)}dx= \sqrt{2\pi}\sigma \mathbb{P} \left( X \geq \sqrt{\frac{d}{2}} \right) \leq \\
    \leq \frac{\sqrt{\pi}}{2\sqrt{d-1}\epsilon} \exp{\left( - (d-1) \epsilon^2 \left( \sqrt{\frac{d}{2}} - \frac{a}{2(d-1)\epsilon^2} \right)^2 \right)},
\end{gather*}
and \eqref{eq:bound_thm_ine2} is bounded by
\begin{equation*}
    \frac{120}{d^2} e^{-\frac{3}{4}d(d-1)\epsilon^2 + a\sqrt{\frac{d}{2}}} + \frac{30\sqrt{\pi}}{\sqrt{d-1}\epsilon} \left( \frac{2}{d} \right)^\frac{3}{2} e^{-\frac{3}{4}d(d-1)\epsilon^2 + a\sqrt{\frac{d}{2}}} 
    \le \frac{60}{d^2} \left( 2 + \frac{\sqrt{2\pi}}{\epsilon} \right) e^{-\frac{3}{4}d(d-1)\epsilon^2 + a\sqrt{\frac{d}{2}}}
\end{equation*}
Finally, we have:
\begin{gather*}
    \mathbb{P}\left(\delta>2+\epsilon\right ) \leq
    e^{-\frac{d(d+1)}{4} \epsilon^2} \left( e^{2\pi\sqrt{\frac{d+2}{3}}} \frac{60+100\pi\sqrt{3d+6}}{d+2} - b \right) + \\
    +\frac{60}{d^2} \left( 2 + \frac{\sqrt{2\pi}}{\epsilon} \right) e^{-\frac{3}{4}d(d-1)\epsilon^2 + 2\pi\sqrt{\frac{d}{3}}}.
\end{gather*}

\end{proof}
Combining Theorem \ref{thm:spectral-gap-conjecture} with the Conjecture \ref{conjecture} we can say that for any $\epsilon>0$ and any (symmetric) Haar random $\mc{S}$ there exists, with the probability $1$ a weight $\lambda\in\Lambda_t$ such that for all $\lambda^\prime \in\Lambda_{t^\prime}$, where $t^\prime>t$ we have $\delta(\nu_\mc{S},\lambda^\prime)\leq \frac{2+\epsilon}{\sqrt{\mc{S}}}$. Thus in order to determine the value of $\delta(\nu_\mc{S})$ it is enough to determine $\delta(\nu_\mc{S},t)$ for some finite $t$. We note, however, that the required value $t$ can depend on $\mc{S}$ and using our methods we cannot decide if $\mathrm{sup}_{\mc{S}}t(\mc{S})$ is finite.
\begin{corollary}
Assume Conjeture \ref{conjecture} holds. Let $\mc{S}$ be a (symmetric) Haar random gate-set. Then with the probability $1$ there exists finite $t\in \mathbb{Z}_{+}$ such that  $\delta(\nu_{\S})=\delta(\nu_{\mc{S}},t)$. Moreover, for any $d \ge 2$ we have
\begin{gather*}
    \mathbb{P}\left(\delta(\nu_\mc{S})>\frac{2+\epsilon}{\sqrt{\mc{S}}}\right )\leq \frac{e^{-\epsilon^2/(2c^2)}}{2(e^{\epsilon^2/c^2}-1)},\;\mathrm{if}\;d=2\;\mathrm{and}\;\epsilon>0,\\
    \mathbb{P}\left( \delta(\nu_\mc{S}) > \frac{2 + \epsilon}{\sqrt{\mc{S}}} \right) \leq
    e^{-\frac{d(d+1)}{4c^2} \epsilon^2} \left( e^{2\pi\sqrt{\frac{d+2}{3}}} \frac{60+100\pi\sqrt{3d+6}}{d+2} - b \right) + \\
    +\frac{60}{d^2} \left( 2 + \frac{\sqrt{2\pi}c}{\epsilon} \right) e^{-\frac{3d(d-1)}{4c^2}\epsilon^2 + 2\pi\sqrt{\frac{d}{3}}}, \; \mathrm{if} \; d \ge 2 \; \mathrm{and} \; \epsilon>c \left(\frac{4\pi^2}{3d(d-1)^2}\right)^{1/4},
\end{gather*}
where 
\begin{equation*}
    b = 10 \left[8\pi^2 \Ei{2\sqrt{\frac{2}{3}}\pi} - e^{2\sqrt{\frac{2}{3}}\pi}(3+2\sqrt{6}\pi) \right] \approx 3855.93,
\end{equation*}
and $c=1$ in the Haar random setting and $c=\sqrt{2}$ in the symmetric Haar random setting.
\end{corollary}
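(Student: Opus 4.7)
The strategy is to transfer the random-matrix tail estimates of Theorem \ref{thm:spectral-gap-conjecture} and of the preceding theorem directly to the finite-$\mc{S}$ setting by invoking Conjecture \ref{conjecture}. The conjecture supplies, for each weight $\lambda$, a bound on the tail of $\sqrt{|\mc{S}|}\,\delta(\nu_\mc{S},\lambda)$ of exactly the same functional form $F_\ast(d_\lambda,\epsilon)$ that Fact \ref{fact:tail_bounds} provides for $\delta(\lambda)=\|T_\lambda\|$. Once this substitution is in place, every step of the earlier proofs can be rerun almost verbatim.

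For the first assertion, almost-sure existence of a finite optimal $t$, I would follow the scheme of Theorem \ref{thm:spectral-gap-conjecture}. Fix $\epsilon>0$, enumerate $\Lambda^{\mathrm{ess}}$ as in that proof, and introduce the events $A_\lambda=\{\delta(\nu_\mc{S},\lambda)>(2+\epsilon)/\sqrt{|\mc{S}|}\}$. Conjecture \ref{conjecture} yields $\mathbb{P}(A_\lambda)\le F_\ast(d_\lambda,\epsilon)$; combining the partition counts $\alpha_{2k}\le c_d k^{d-2}$ coming from Facts \ref{fact:partitions1} and \ref{fact:partitions-properties} with the dimension bound $d_\lambda\ge 2k$ of Lemma \ref{lemma:dim-bound-new}, the total $\sum_{\lambda\in\Lambda^{\mathrm{ess}}}\mathbb{P}(A_\lambda)$ is dominated by a series of the form $\sum_k c_d k^{d-2}e^{-dk\epsilon^2/(2c^2)}$, which converges. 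The first Borel--Cantelli lemma (Fact \ref{fact:B-C}) then gives $\mathbb{P}(\{A_\lambda\,\,f.o.\})=1$. Intersecting over a countable sequence $\epsilon_n\downarrow 0$ and using monotonicity of $\delta(\nu_\mc{S},t)$ in $t$ promotes this to the statement that almost surely $\delta(\nu_\mc{S})$ is attained at some finite $t$.

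For the quantitative bounds I would apply the union bound
\[
\mathbb{P}\left(\delta(\nu_\mc{S})>\frac{2+\epsilon}{\sqrt{|\mc{S}|}}\right)\le\sum_{\lambda\in\Lambda^{\mathrm{ess}}}\mathbb{P}\left(\delta(\nu_\mc{S},\lambda)>\frac{2+\epsilon}{\sqrt{|\mc{S}|}}\right)\le\sum_{\lambda\in\Lambda^{\mathrm{ess}}}F_\ast(d_\lambda,\epsilon),
\]
and then recycle the evaluations of this sum already carried out in Example \ref{ex1} and in the proof of the preceding theorem. For $d=2$ the sum over $\lambda=(k,-k)$ is geometric and collapses to $e^{-\epsilon^2/(2c^2)}/[2(e^{\epsilon^2/c^2}-1)]$. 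For $d\ge 2$, I would split the sum over $\|\lambda\|_1=2k$ at $k=\lfloor d/2\rfloor$, estimate the first piece by an integral involving the exponential integral $\mathrm{Ei}$ (using monotonicity of $e^{a\sqrt{x}}/x^2$ on the relevant range), and control the second piece by completing the square in the exponent and invoking the Gaussian tail bound of Fact \ref{fact:gauss_concentration}. The parameter $c\in\{1,\sqrt{2}\}$ is the only bookkeeping device distinguishing the Haar setting (with $F_R,F_C$) from the symmetric Haar setting (with $F_O,F_U$).

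The main obstacle is structural rather than technical: the conjecture is engineered precisely so that the finite-$\mc{S}$ and the limiting random-matrix tails coincide, so the real work is careful accounting, namely matching each of $F_O,F_U,F_R,F_C$ to the appropriate combination of gate-set symmetry and representation type, keeping the factor $c$ consistent through the integral estimates, and verifying that the qualitative Borel--Cantelli conclusion translates into actual attainment of the supremum. This last point uses monotonicity of $\delta(\nu_\mc{S},t)$ together with the Kesten--McKay lower bound from Section \ref{sec:spectral_properties} to rule out the degenerate scenario in which the supremum is approached only from below.
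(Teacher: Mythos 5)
Your proposal is correct and follows essentially the same route the paper intends: the corollary is stated as a direct consequence of substituting, via Conjecture \ref{conjecture}, the bound $F_\ast(d_\lambda,\epsilon)$ on $\mathbb{P}\bigl(\delta(\nu_\mc{S},\lambda)>(2+\epsilon)/\sqrt{|\mc{S}|}\bigr)$ for the corresponding random-matrix bound, and then rerunning verbatim the Borel--Cantelli argument of Theorem \ref{thm:spectral-gap-conjecture} and the union-bound/summation estimates (geometric series for $d=2$; the split at $k=\lfloor d/2\rfloor$ with the $\mathrm{Ei}$ integral and Gaussian tail for general $d$). The only point where you go beyond the paper is the closing remark on attainment of the supremum, which the paper glosses over in exactly the same way, so there is no discrepancy of substance.
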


Using above inequalities we can make some conclusions about the benefits of adding inverses to our sets of gates. The inequalities are of the form:
\begin{gather*}
    \mathbb{P}\left(\delta(\nu_\mc{S})>\frac{2+\epsilon}{\sqrt{\mc{S}}}\right ) \leq F \left( \frac{\epsilon}{c} \right),
\end{gather*}
where $F$ is an appropriate function. We can make the dependence on $\mc{S}$ more explicit by a substitution $\epsilon \mapsto \frac{\epsilon}{\sqrt{\mc{S}}}$:
\begin{gather*}
    \mathbb{P}\left(\delta(\nu_\mc{S})>\frac{2}{\sqrt{\mc{S}}} + \epsilon \right ) \leq F \left( \frac{\sqrt{S}}{c} \epsilon \right).
\end{gather*}
Note that the constant $c$ is such that, in both settings i.e. for a Haar random set $\mc{S} = n$ and for a symmetric Haar random set $\mc{S} = 2 \times n$, we get the same result:
\begin{gather*}
    \mathbb{P}\left(\delta(\nu_\mc{S})>\frac{2}{\sqrt{\mc{S}}} + \epsilon \right ) \leq F \left( \sqrt{n} \epsilon \right).
\end{gather*}
Thus we can conclude that the concentration around $\frac{2}{\sqrt{\mc{S}}}$ is the same in both scenarios. The advantage of the symmetric set lays in the fact that this concentration is around the value that is $\sqrt{2}$ times smaller than in the Haar random setting.




\section{Numerical computations}\label{sec:numerics}

In order to justify Conjecture \ref{conjecture} we numerically sampled $\|T_{\nu_\mc{S},\lambda}\|$ and $\|T_{\nu_\mc{S},t}\|$ in the following scenarios:

\begin{itemize}
    \item $d=2$, all $\lambda$ from $\Lambda_{500}^\mathrm{ess}$, $t=500$ {\color{black} which means irreducible representations of $U(2)$ with the dimension $3 \leq d_\lambda \leq 1001$}:
    \begin{enumerate}
        \item Haar random sets with sizes from $3$ to $13$,
        \item symmetric Haar random sets with sizes from $2\times2$ to $2\times13$,
    \end{enumerate}
    \item $d=4$, all $\lambda$ from $\Lambda_{6}^\mathrm{ess}$, $t=6$ {\color{black} which means irreducible representations of $U(4)$ with the dimension $15 \leq d_\lambda \leq 5200$}:
    \begin{enumerate}
        \item Haar random sets with sizes from $3$ to $10$,
        \item symmetric Haar random sets with sizes from $2\times2$ to $2\times10$,
    \end{enumerate}
\end{itemize}
with sample sizes $2500-5000$. In the subsequent section we explain how we performed those computations and present their results.


\subsection{Algorithm}
A sketch of the algorithm we used to perform our computations is shown in Listing \ref{code} in a form of a Python code.

\begin{lstlisting}[language=Python, caption=Caption, label=code]
def t_design_norms(d, t, sample_size, set_size, is_symmetric):
    weights = t_design_weights(d, t)
    reps = [URepresentation(w) for w in weights]
    all_norms = []
    
    for i in range(sample_size):
        norms = []
        gates = [get_random_U(d) for j in range(set_size)]
        
        for pi in reps:
            T = sum(pi(U) for U in gates) / set_size
            
            if is_symmetric:
                T = (T + T.transpose().conjugate()) / 2
                
            norms.append(norm(T))
        
        all_norms.append(norms)
        
    return weights, all_norms
\end{lstlisting}
The main function \textbf{t\_design\_norms} computes \textbf{sample\_size} many times norms of $T_{\nu_\mathcal{S}, \lambda}$ for a random $\mathcal{S}$ and for all $\lambda \in \Lambda^\mathrm{ess}_t$. Then it returns two lists. One is a list of weights - \textbf{weights} used in a computation and the other is a two-dimensional list of norms with \textbf{sample\_size} many rows and $|\tilde{\Lambda}_t|$ many columns - \textbf{all\_norms}. Each row in \textbf{all\_norms} represents a different $\mathcal{S}$ and contains norms $\delta(\nu_S, \lambda^1)$, $\delta(\nu_S, \lambda^2)$, $\delta(\nu_S, \lambda^3)$, ... where the order of $\lambda^i$ is the same as in the list \textbf{weights}.\\
More precisely, \textbf{t\_design\_norms} function takes arguments:
\begin{itemize}
    \item \textbf{d} - group dimension, an integer bigger than $1$,
    \item \textbf{t} - scale of a $t$-design that defines set of weights $\tilde{\Lambda}_t$, an integer bigger than $0$,
    \item \textbf{sample\_size} - the number of different random $\mathcal{S}$ for which the computation is performed, an integer bigger than $0$,
    \item \textbf{set\_size} - the size of $\mathcal{S}$ if \textbf{is\_symmetric} is False and a half of the size of $\mathcal{S}$ otherwise, an integer bigger than $0$,
    \item \textbf{is\_symmetric} - an answer to a question: \textit{Is the set $\mathcal{S}$ symmetric?}, a boolean,
\end{itemize}
and returns:\\

\begin{tabular}{ccccccccc}
    weights & \multicolumn{2}{c}{=} & [  & $\lambda^1$, & $\lambda^2$, & $\lambda^3$, & ... & ] \\
    all\_norms & = & [ & \multicolumn{6}{c}{} \\
    &&& {[} & $\delta(\nu_{S_1}, \lambda^1)$, & $\delta(\nu_{S_1}, \lambda^2)$, & $\delta(\nu_{S_1}, \lambda^3)$ , & ... &  {]}, \\
    &&& {[} & $\delta(\nu_{S_2}, \lambda^1)$, & $\delta(\nu_{S_2}, \lambda^2)$, & $\delta(\nu_{S_2}, \lambda^3)$ , & ... &  {]}, \\
    &&&&& \vdots &&& \\
    &&& {[} & $\delta(\nu_{S_N}, \lambda^1)$, & $\delta(\nu_{S_N}, \lambda^2)$, & $\delta(\nu_{S_N}, \lambda^3)$ , & ... &  {]} \\
    && {]} &&&&&& \\
    \multicolumn{9}{c}{}
\end{tabular}
\\
where $N$ = \textbf{sample\_size}.
Then when we wanted to compute $\delta(\nu_{S}, t)$ we used formula \ref{eq:delta_t_sup}.
\\
To use the code from Listing \ref{code} one needs to implement also:
\begin{itemize}
    \item \textbf{t\_design\_weights} - a function that for a given \textbf{d} and  \textbf{t} returns a list of weights from $\Lambda_t^\mathrm{ess}$. Its implementation can be easily derived from the definition of $\Lambda_t$ \eqref{eq:big-lambda} and $\Lambda_t^\mathrm{ess}$ \eqref{eq:def_lam_ess}.
    \item \textbf{URepresentation} - a class (or a function) that takes as argument a weight $\lambda$ and computes $\pi_\lambda$. It was implemented using the fact that $\pi_\lambda= \exp \circ \, \rho_\lambda \circ \log $ where $\rho_\lambda$ is a $\mathfrak{u}(d)$ algebra representation with the highest weight $\lambda$. The representation $\rho_\lambda$ was implemented using the Gelfand-Tsetlin construction from \cite{barut}.
    \item \textbf{get\_random\_U} - a function that for a given \textbf{d} computes a Haar-random matrix from $U(d)$. The implementation of such a function was described in \cite{mezzadri}.
    \item \textbf{norm} - a function that computes an operator norm. We used the \textbf{linalg.norm} from the scipy package.
\end{itemize}

\subsection{Results}\label{sec:results}


In Figure \ref{fig:conj} one can see the relations between probabilities from Conjecture \ref{conjecture}:
\begin{enumerate}
    \item[A] $\mathbb{P}\left(  \delta\left( \nu_\mc{S},\lambda \right) > \frac{2 + \epsilon}{\sqrt{\mc{S}}} \right)$,
    \item[B] $\mathbb{P}\left( \delta\left( \lambda \right) > 2 + \epsilon \right)$,
    \item[C] tail bounds $F_*(d_\lambda,\epsilon)$ from Fact \ref{fact:tail_bounds}.
\end{enumerate}
Our data clearly indicates that the inequalities from our conjecture A $\le$ B $\le$ C hold. Moreover, the data from Figures \ref{fig:averages} and \ref{fig:renorm} shows that for $d_\lambda \rightarrow \infty$ we have

\begin{gather*}
    \mathbb{E} \left( \sqrt{\mathcal{S}} \delta\left( \nu_\mc{S},\lambda \right) \right) \rightarrow 2\sqrt{\frac{\mathcal{S} - 1}{\mathcal{S}}}, \\
    \mathbb{E} \left( \delta\left( \lambda \right) \right) \rightarrow 2.
\end{gather*}
Although both limits are the same when $\mathcal{S} \rightarrow \infty$ the first one is always strictly smaller than the second one for any finite $\mc{S}$. On the other hand, standard deviations - "widths" - of those distributions in the same limit go to $0$. Combining those two facts we get that with growing $d_\lambda$ distributions of $\sqrt{\mathcal{S}} \delta\left( \nu_\mc{S},\lambda \right) $ and $ \delta\left( \lambda \right) $ concentrate around $2\sqrt{\frac{\mathcal{S} - 1}{\mathcal{S}}}$ and $2$ respectively.
This leads us to the conclusion that our conjecture is true even for representations with much higher dimensions than those we considered.
{\color{black} Note that the argument here is not concentrated solely on the fact that $ \sqrt{\mathcal{S}} \delta\left( \nu_\mc{S},\lambda \right) \xrightarrow[\S \rightarrow \infty]{} \delta\left( \lambda \right)$ but rather on the observation that this limit is approached from below. Thus the slower rate of convergence for large $d_\lambda$ actually works in favor of our conjecture because it sharpens the inequality A $\leq$ B.}

\err{
While the above reasoning gives arguments for our conjecture for $\epsilon \approx 0$ it is not clear that those inequalities still hold if $\epsilon$ is large. First, we observe that the median (see Figure \ref{fig:medians}):
\begin{equation*}
     M_{ \sqrt{\mathcal{S}} \delta\left( \nu_\mc{S},\lambda \right) } \le 2,
\end{equation*}
 similarly to the random matrix ensembles (Fact \ref{fact:med_exp}). In Figure \ref{fig:conj} we can see that value of A is not only smaller but also that it converges to $0$ with growing $\epsilon$ faster than B and C. To put a number on this observation we fitted function of the form:
\begin{equation}
    G_L(\epsilon) = \frac{1}{2}\left( 1 - \mathrm{erf}\left(\frac{\epsilon}{\sqrt{2}L} \right) \right),
\end{equation}
to A and B shifted such that they represent concentration around the median instead of $2$:
\begin{enumerate}
    \item[A'] $\mathbb{P}\left( \sqrt{\mathcal{S}} \delta\left( \nu_\mc{S},\lambda \right) > M_{ \sqrt{\mathcal{S}} \delta\left( \nu_\mc{S},\lambda \right) } + \epsilon \right)$,
    \item[B'] $\mathbb{P}\left( \delta\left( \lambda \right) > M_{ \delta\left( \lambda \right) } + \epsilon \right)$,
\end{enumerate}
where $L$ was treated as a fitting parameter. From Facts \ref{fact:gauss_concentration} and \ref{fact:tail_bounds} we know that for an appropriate value of $L$ denoted - $L_*$ - there holds:
\begin{equation}
    \mathrm{B'} \le G_{L_*}(\epsilon) \le \mathrm{C}.
\end{equation}
Moreover, we found out that this function can be very well fitted to our numerical data. The results of this computation (see Figure \ref{fig:Ls}) show that indeed A converges to $0$ faster than C and that this difference grows with $d_\lambda$.
}


In Figures \ref{fig:conj_t} and \ref{fig:conj_t_d2} we show similar data as in \ref{fig:conj} but this time for various $t$. In those Figures we compare:
\begin{itemize}
    \item[A] $\mathbb{P}\left(  \delta\left( \nu_\mc{S},t \right) > \frac{2 + \epsilon}{\sqrt{\mc{S}}} \right)$,
    \item[B] $\mathbb{P}\left(  \delta\left( t \right) > 2 + \epsilon \right)$,
    \item[C] bounds \ref{eq:bounds_T_t} and \ref{eq:bounds_T_t_2} from Theorem \ref{thm:bounds_T_t}
\end{itemize}
for various values of $t$ and $\mathcal{S}$ and again we observe that A $\le$ B $\le$ C. We also note that the bound $\ref{eq:bounds_T_t_2}$ is much better for small $\epsilon$ than the bound $\ref{eq:bounds_T_t}$.

\subsection{Rescaling}

In Section $\ref{sec:spectral_properties}$ we showed that when $t\rightarrow \infty$ the distribution of singular values of $T_{\nu_\mc{S},t}$ converges to \eqref{eq:kesten-final}:
\begin{equation*}
    \frac{|\S| \sqrt{\delta_{\mathrm{opt}}^2(\S) - x^2}}{\pi (1-x^2)}, \quad x \in \left[0, \delta_{\mathrm{opt}}(\mathcal{S})\right].
\end{equation*}
where $\delta_\mathrm{opt}(\mc{S})=\frac{2\sqrt{\mathcal{S}-1}}{\mathcal{S}}$. We note that, by standard dimensionality arguments, for (symmetric) Haar random gate-set $\mc{S}$ the group generated by $\mc{S}\cup \S^{-1}$ is free with the probability $1$. Then in Section \ref{sec:asymp_beh} we obtained that in the same limit the distribution of singular values of $\sqrt{\mathcal{S}}T_{\nu_\mc{S},t}$ converges to:
\begin{equation}\label{eq:ST_sv_dist}
   \frac{1}{\pi(1-\frac{y^2}{\mc{S}})}\sqrt{\frac{4\left(\mc{S}-1\right)}{\mc{S}}-y^2}\,\,, \quad y \in \left[ 0, 2\sqrt{\frac{\mathcal{S} - 1}{\mathcal{S}}} \right],
\end{equation}
which in a limit $\mathcal{S} \rightarrow \infty$ converges to the quarter-circle distribution:
\begin{gather*}
   \sigma_{QC}(y)=\frac{1}{\pi}\sqrt{4-y^2}, \quad y \in [0, 2].
\end{gather*}
From Figure \ref{fig:averages} we know that for $d_\lambda \rightarrow \infty$ expected values $\mathbb{E}\left( \sqrt{\mathcal{S}} \| T_{\nu_\mathcal{S}, \lambda} \| \right)$ and $\mathbb{E}\left( \| T_{\lambda} \| \right)$ concentrate around the right endpoints of supports of \eqref{eq:ST_sv_dist} and $\sigma_{QC}$ respectively. The fact that those endpoints are different for \eqref{eq:ST_sv_dist} and $\sigma_{QC}$ is causing the divergence of the distributions $\sqrt{\mathcal{S}} \delta\left( \nu_\mc{S},\lambda \right)$ and $\delta\left( \lambda \right)$ for large $d_\lambda$ what we observed in Figures \ref{fig:averages} and \ref{fig:renorm} and discussed in previous subsection.\\
If we want our random matrix model to match $T_{\nu_\mathcal{S}, \lambda}$ also in a regime of big $d_\lambda$ we have to multiply our operator by $\frac{2}{\delta_\mathrm{opt}(\mathcal{S})}$ instead of $\sqrt{\mathcal{S}}$. We can easily compute that the distribution of singular values of $\frac{2}{\delta_\mathrm{opt}(\mathcal{S})} T_{\nu_\mathcal{S}, t}$ is:
\begin{gather}
    \frac{\mc{S} - 1}{\mc{S} (1 - \frac{\mc{S} - 1}{\mc{S}^2}y^2)}\frac{1}{\pi}\sqrt{4 - y^2}, \quad y \in [0, 2],
\end{gather}
so we kept the asymptotic behaviour for $\mathcal{S} \rightarrow \infty$ and the endpoints are now the same. In order to show that $\frac{2}{\delta_\mathrm{opt}(\mathcal{S})} T_{\nu_\mathcal{S}, t}$ converges to the same random matrix ensemble model as $\sqrt{\mathcal{S}} T_{\nu_\mathcal{S}, t}$ we can either repeat the steps from Section \ref{sec:asymp_beh} or note that $\frac{2}{\delta_\mathrm{opt}(\mathcal{S})} = \frac{\mathcal{S}}{\sqrt{\mathcal{S} - 1}} = \Theta(\sqrt{\mathcal{S}})$ so asymptotically those two operators are the same.\\
Figure \ref{fig:renorm} shows how $\sqrt{\mathcal{S}}$ and $\frac{2}{\delta_\mathrm{opt}(\mathcal{S})}$ scalings change with $d_\lambda$. Clearly, the second one is much more similar to the random matrix ensemble model. On the other hand we can also see that the distribution of $\frac{2}{\delta_\mathrm{opt}(\mathcal{S})}  \delta\left( \nu_\mc{S},\lambda \right)$ is shifted relative to the distribution of $\delta\left( \lambda \right)$ towards bigger values. In other words, our data indicates that:
\begin{equation*}
    \mathbb{P}\left( \sqrt{\mc{S}} \delta\left( \nu_\mc{S},\lambda \right) > 2 + \epsilon \right) \le \mathbb{P}\left( \delta\left( \lambda \right) > 2 + \epsilon \right) \le \mathbb{P}\left( \frac{2}{\delta_\mathrm{opt}(\mathcal{S})} \delta\left( \nu_\mc{S},\lambda \right) > 2 + \epsilon \right).
\end{equation*}
In summary, the $\frac{2}{\delta_\mathrm{opt}(\mathcal{S})}$ scaling has better asymptotic behaviour for $d_\lambda \rightarrow \infty$ but it is less useful for theoretical purposes where we need an upper bound on a tail probability.

\begin{figure}[hp]
    \centering
    \includegraphics{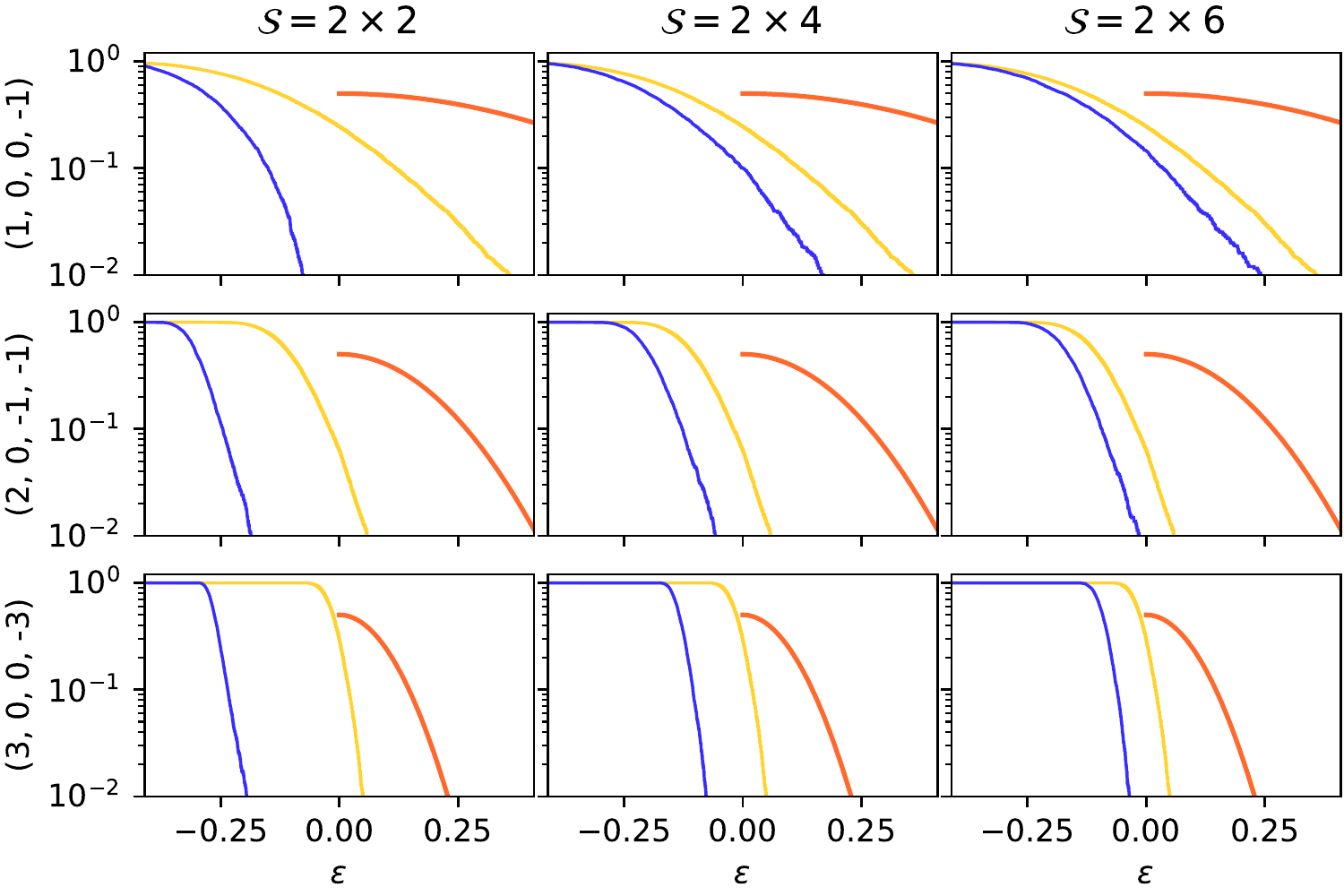}
    \caption{Comparison of $ \mathbb{P}\left(\sqrt{\mc{S}}\delta\left( \nu_\mc{S},\lambda \right)>2+\epsilon\right)$ - \textbf{blue}, $\mathbb{P}\left(\delta\left( \lambda \right)>2+\epsilon\right )$ - \textbf{yellow} and tail bounds $F_\ast(d_\lambda,\epsilon)$ from Fact \ref{fact:tail_bounds} - \textbf{orange} for Haar random gate-sets of dimension $d=4$, sizes: $\mathcal{S}=2\times2$ - \textbf{left}, $\mathcal{S}=2\times4$ - \textbf{middle}, $\mathcal{S}=2\times6$ - \textbf{right} and for weights: $\lambda = (1, 0, 0, -1)$ {\color{black}of dimension $d_\lambda=15$} - \textbf{top}, $\lambda=(2, 0, -1, -1)$ {\color{black}of dimension $d_\lambda=45$} - \textbf{middle}, $\lambda=(3, 1, -1, -3)$ {\color{black}of dimension $d_\lambda=729$} - \textbf{bottom}.}
    \label{fig:conj}
\end{figure}

\begin{figure}[hp]
    \centering
    \includegraphics{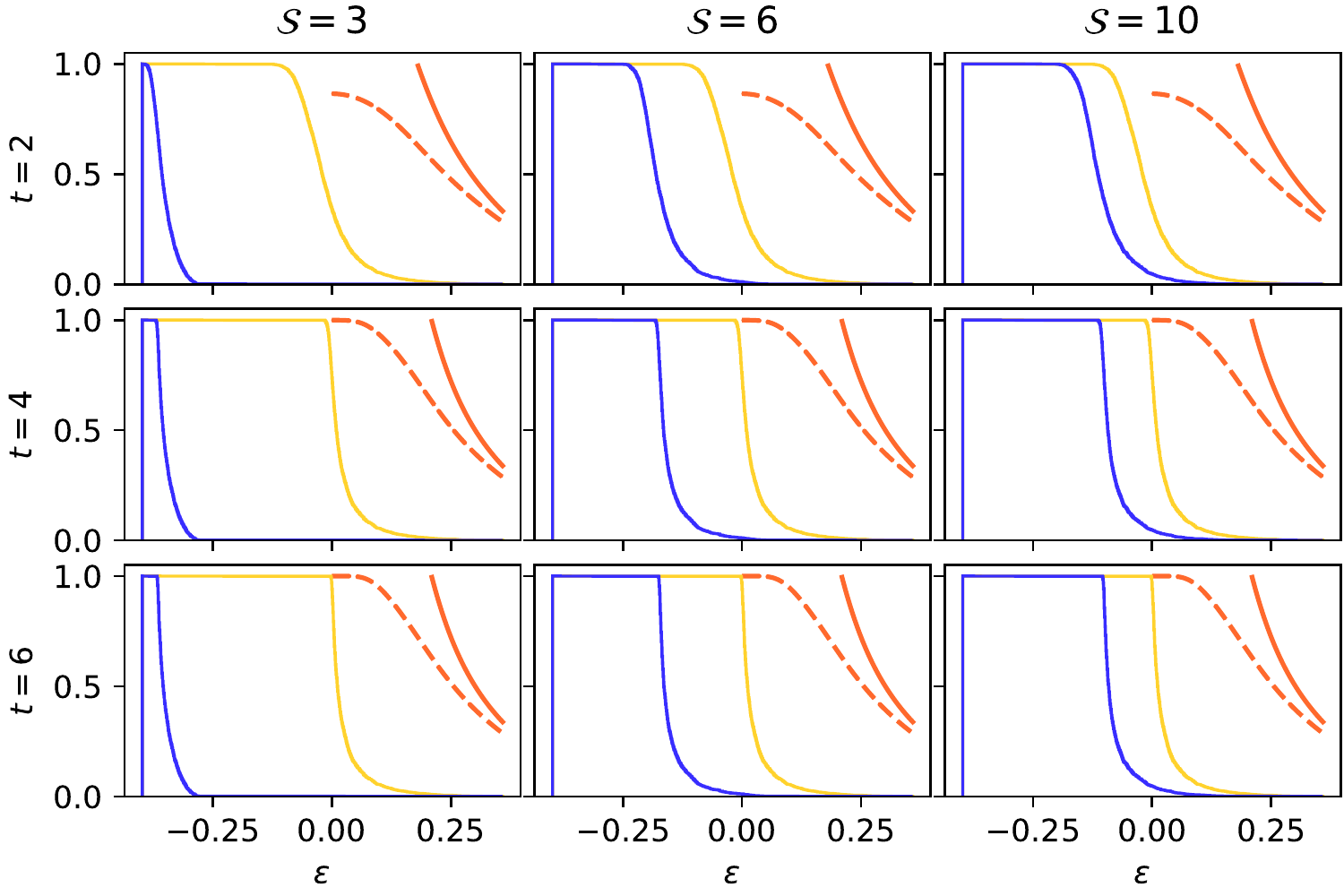}
    \caption{Comparison of $ \mathbb{P}\left(\sqrt{\mc{S}}\delta\left( \nu_\mc{S},t \right) >2+\epsilon\right)$ - \textbf{blue}, $\mathbb{P}\left(\delta\left( t \right)>2+\epsilon\right )$ - \textbf{yellow}, bound \eqref{eq:bounds_T_t} - \textbf{solid orange} and bound \eqref{eq:bounds_T_t_2} - \textbf{dashed orange} for symmetric Haar random gate-sets of dimension $d=4$, sizes: $\mathcal{S}=3$ - left, $\mathcal{S}=6$ - \textbf{middle}, $\mathcal{S}=10$ - \textbf{right} and for values of $t$: $t=2$ - \textbf{top}, $t=4$ - \textbf{middle},  $t=6$ - \textbf{bottom}.}
    \label{fig:conj_t}
\end{figure}

\begin{figure}[hp]
    \centering
    \includegraphics{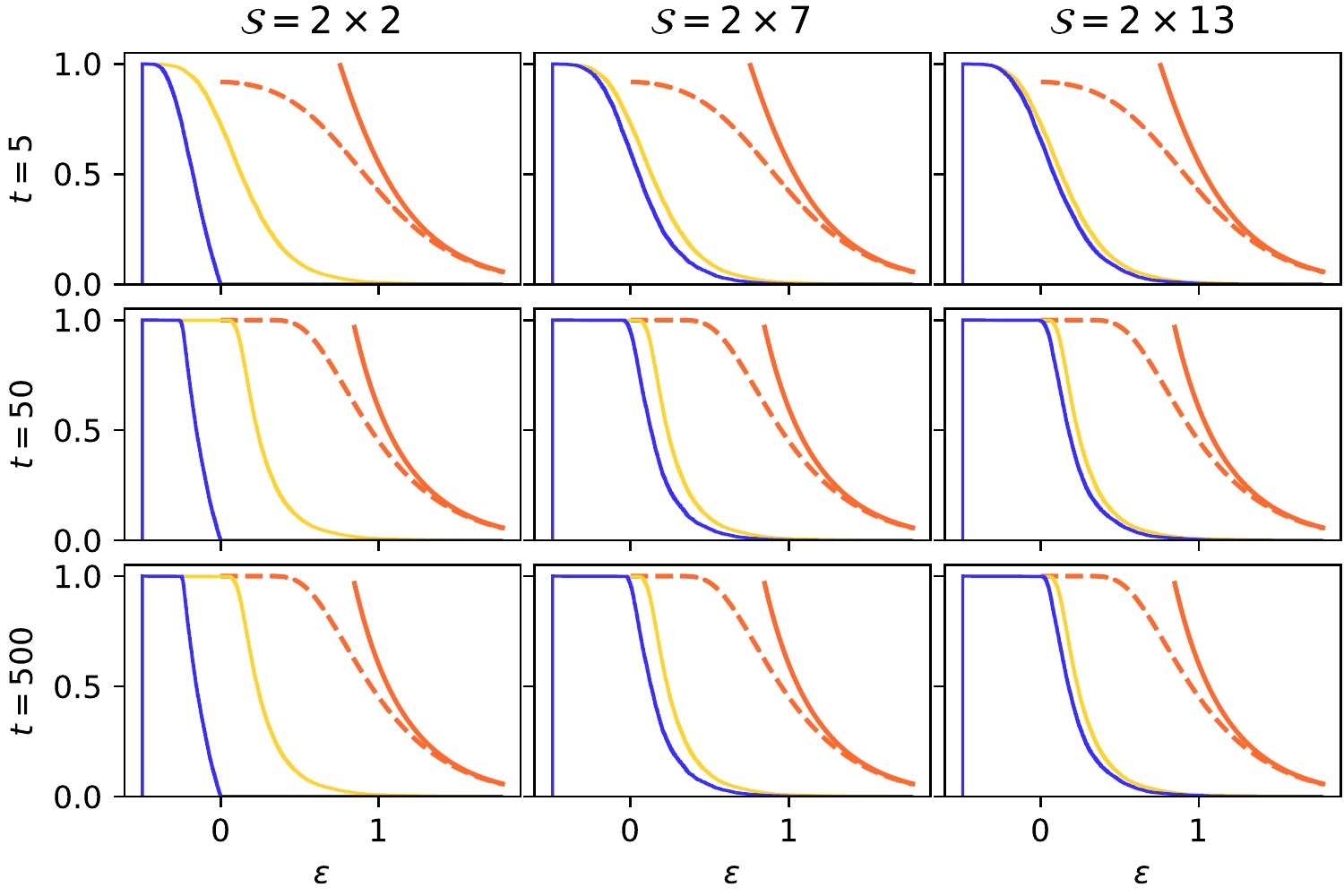}
    \caption{Comparison of $ \mathbb{P}\left(\sqrt{\mc{S}}\delta\left( \nu_\mc{S},t \right) >2+\epsilon\right)$ - \textbf{blue}, $\mathbb{P}\left(\delta\left( t \right)>2+\epsilon\right )$ - \textbf{yellow}, bound \eqref{eq:bounds_T_t} - \textbf{solid orange} and bound \eqref{eq:bounds_T_t_2} - \textbf{dashed orange} for symmetric Haar random gate-sets of dimension $d=2$, sizes: $\mathcal{S}=2\times2$ - \textbf{left}, $\mathcal{S}=2\times7$ - \textbf{middle}, $\mathcal{S}=2\times13$ - \textbf{right} and for values of $t$: $t=5$ - \textbf{top}, $t=50$ - \textbf{middle},  $t=500$ - \textbf{bottom}.}
    \label{fig:conj_t_d2}
\end{figure}

\begin{figure}[hp]
    \centering
    \includegraphics{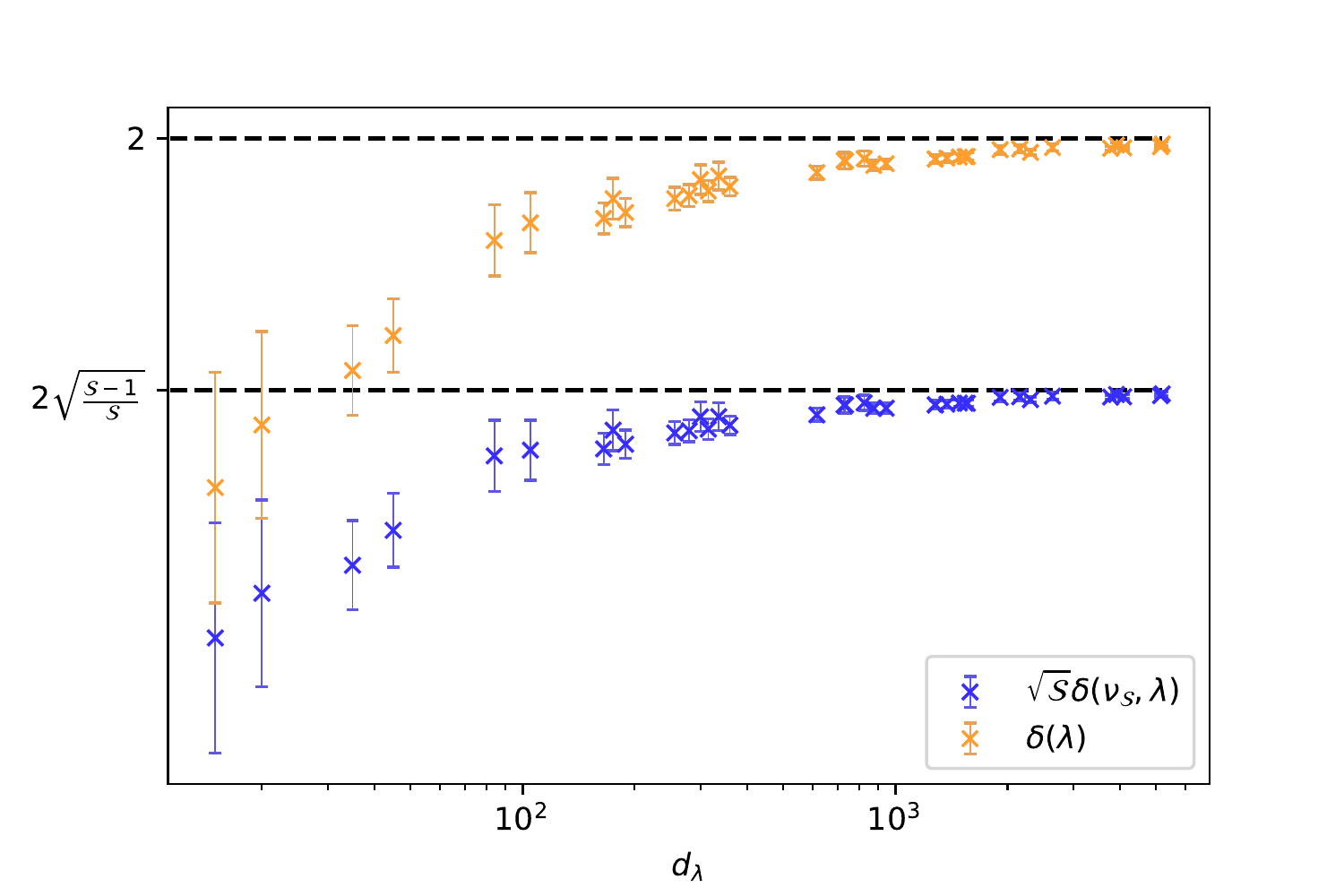}
    \includegraphics{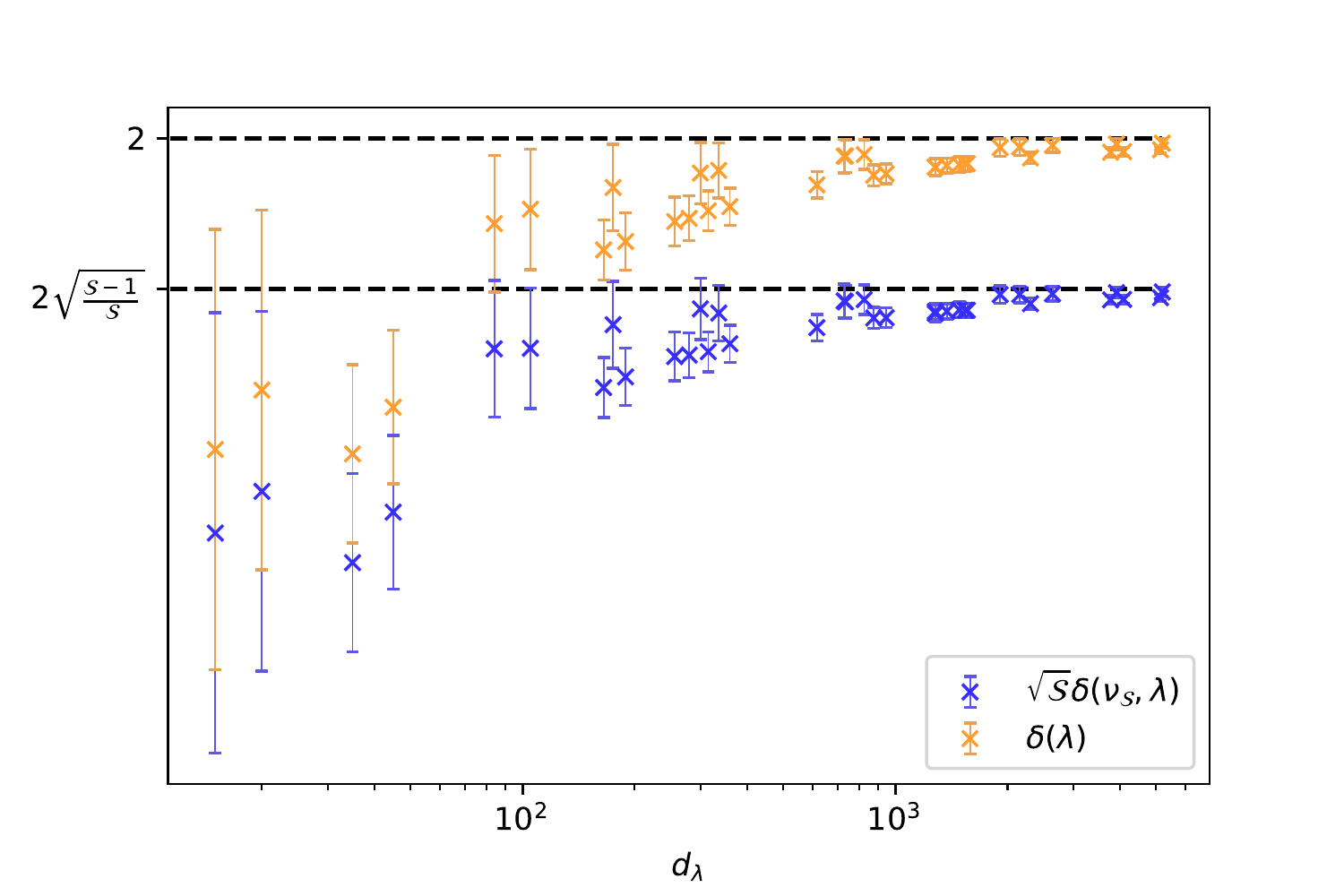}
    \caption{Averages and standard deviations of the distributions of $\sqrt{\mc{S}}\delta\left( \nu_\mc{S}, \lambda \right)$ - \textbf{blue} and $\delta\left( \lambda \right)$ - \textbf{orange} for $\lambda \in \Lambda^\mathrm{ess}_6$, $d = 4$, Haar random sets with $\mathcal{S}=9$ - \textbf{top} and symmetric Haar random sets with $\mathcal{S}=2\times9$ - \textbf{bottom}.}
    \label{fig:averages}
\end{figure}

\begin{figure}[hp]
    \centering
    \includegraphics{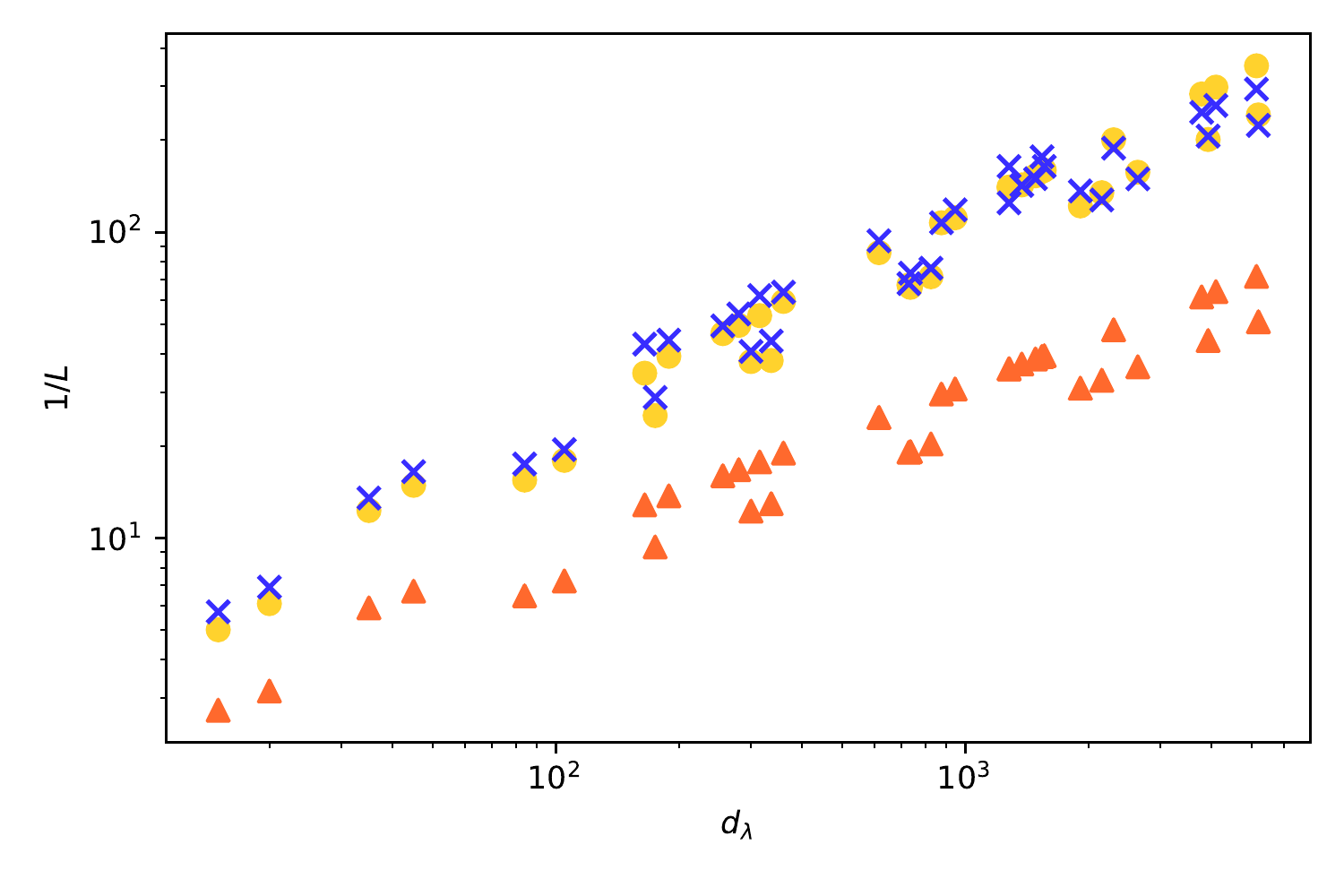}
    \caption{The speed of convergence to $0$ (defined in Section \ref{sec:results}) of tail distributions of $\sqrt{\mc{S}}\delta\left( \nu_\mc{S}, \lambda \right)$ - \textbf{blue crosses}, $\delta\left( \lambda \right)$ - \textbf{yellow dots} and tail bounds from Fact \ref{fact:tail_bounds} - \textbf{orange triangles} for $\lambda \in \Lambda_6^\mathrm{ess}$, $d=4$ and symmetric Haar random sets with $\mathcal{S}=2 \times 7$.}
    \label{fig:Ls}
\end{figure}

\begin{figure}[hp]
    \centering
    \includegraphics{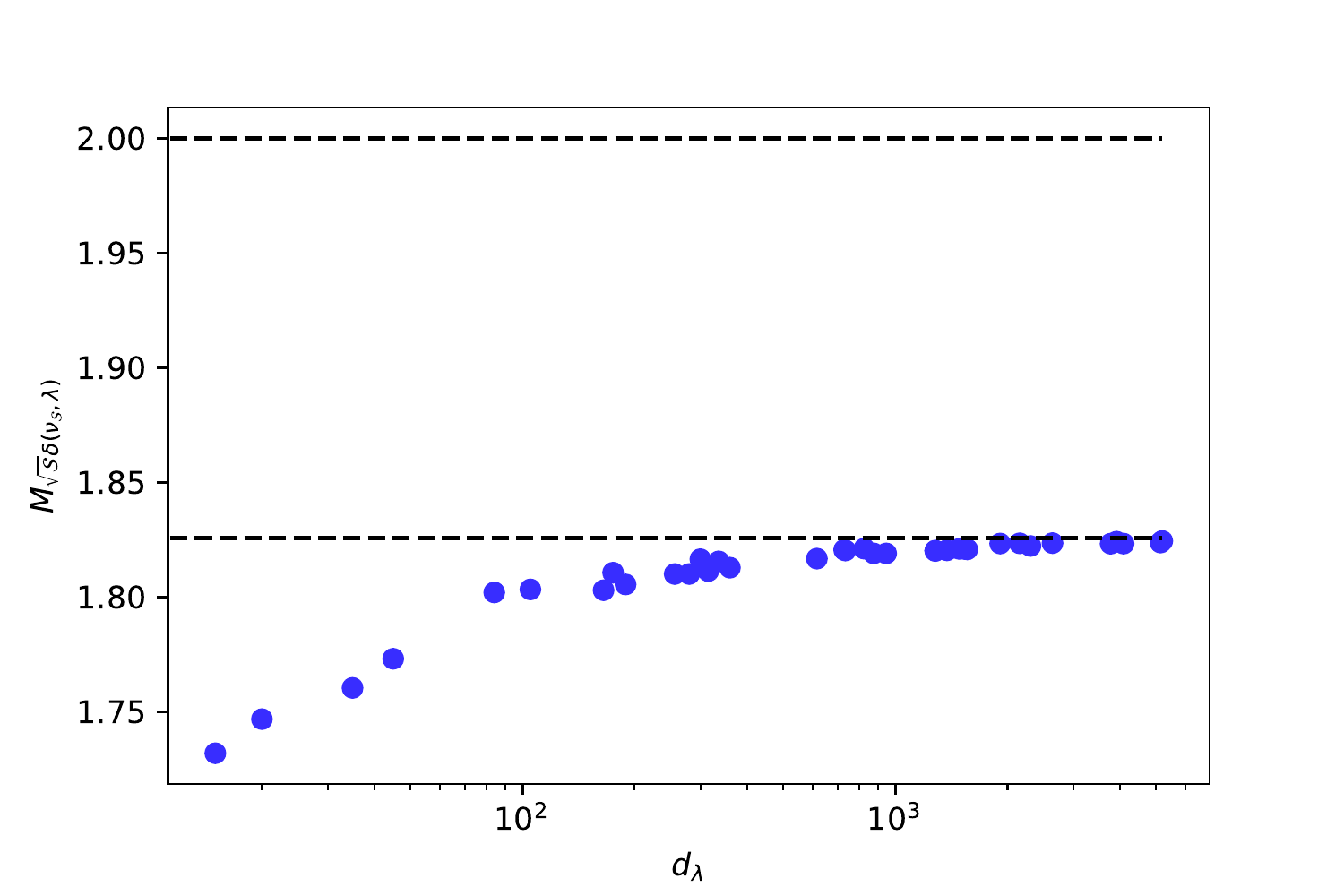}
    \includegraphics{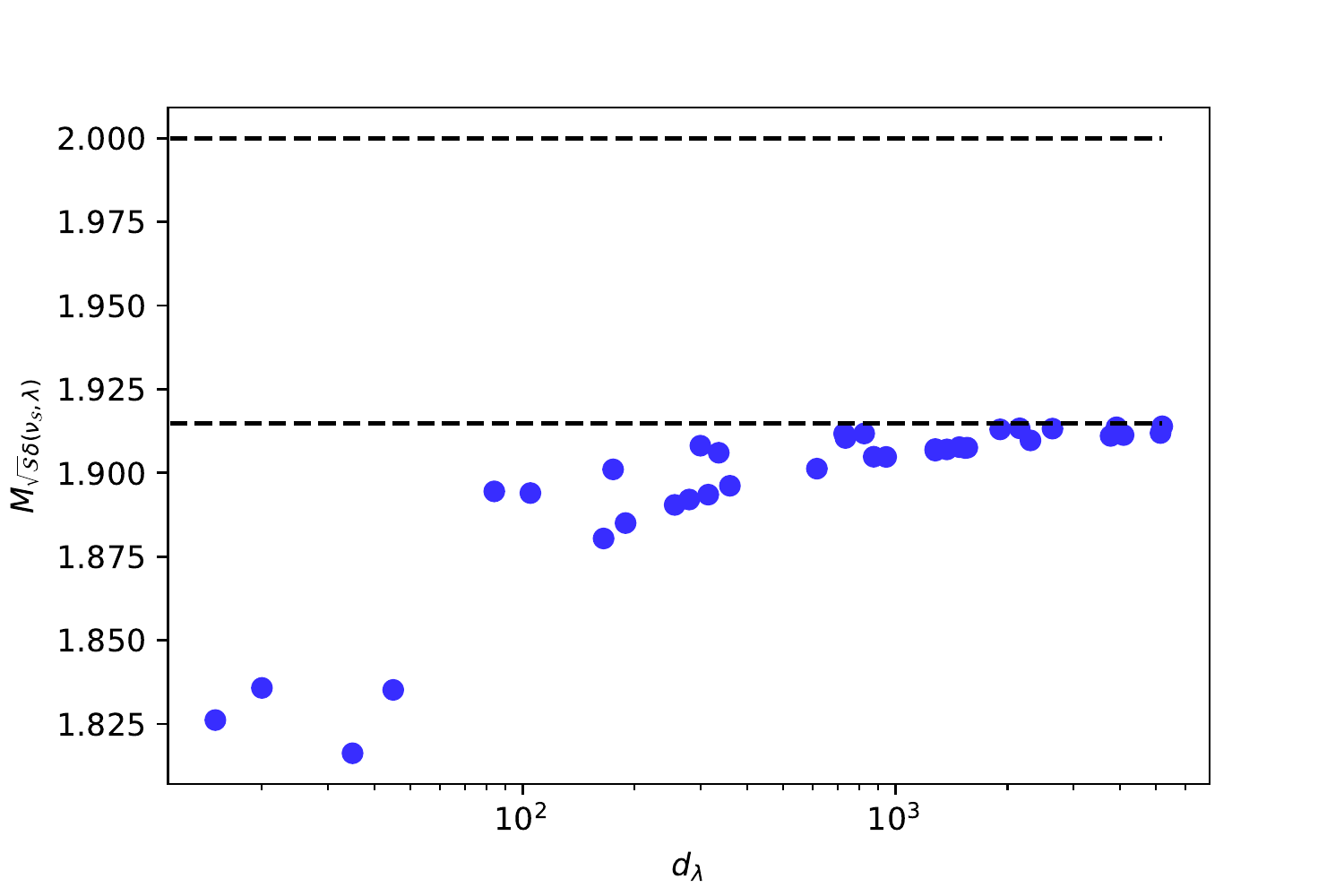}
    \caption{Medians $M_{\sqrt{\mc{S}}\delta\left( \nu_\mc{S},\lambda \right)}$ for $\lambda \in \Lambda^\mathrm{ess}_6$, $d=4$, Haar random sets with $\mathcal{S}=6$ - \textbf{top} and symmetric Haar random sets with $\mathcal{S}=2\times6$ - \textbf{bottom}. With dashed lines we denoted $2 \sqrt{\frac{\mathcal{S} - 1}{\mathcal{S}}}$ and $2$.}
    \label{fig:medians}
\end{figure}

\begin{figure}[hp]
    \centering
    \includegraphics{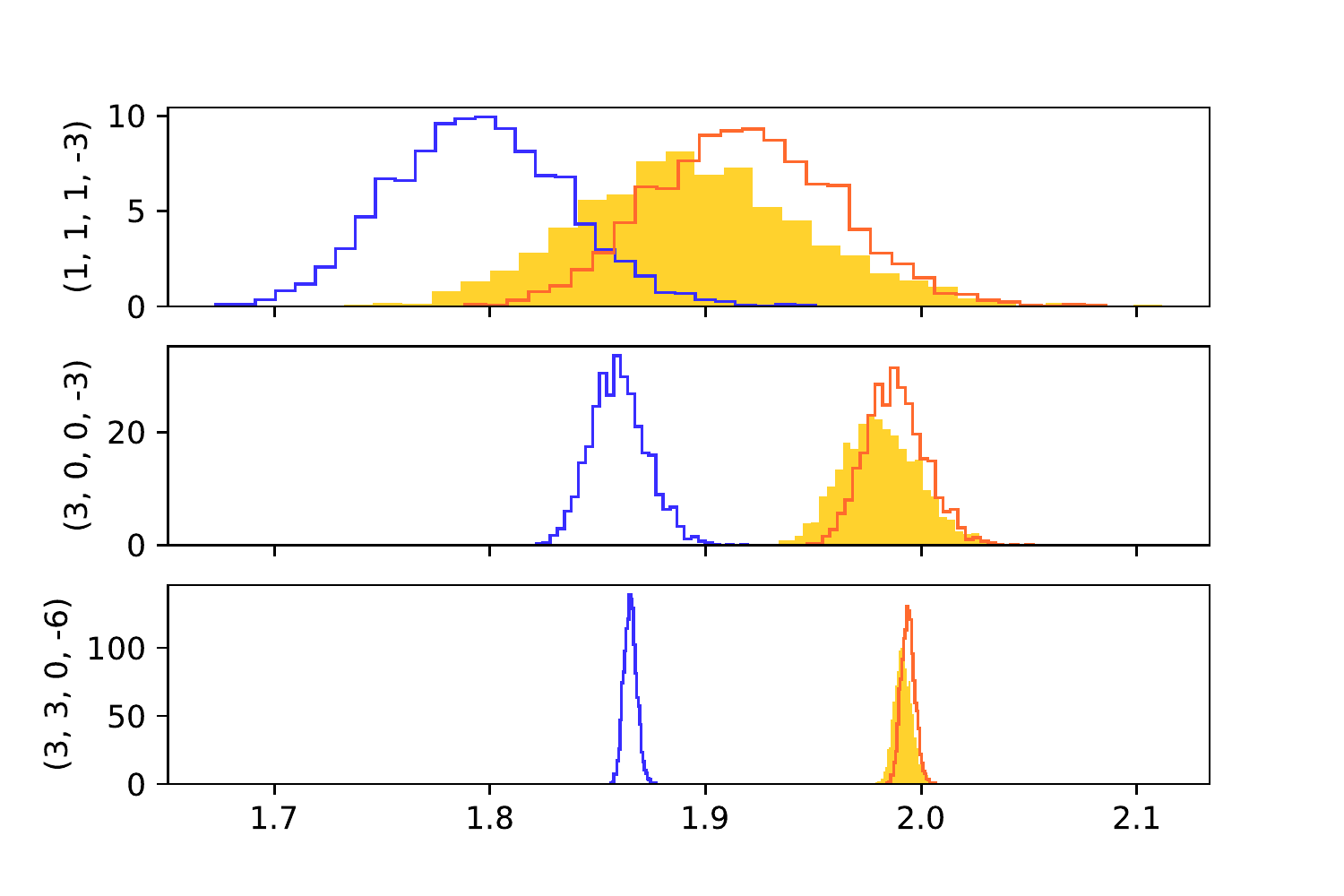}
    \includegraphics{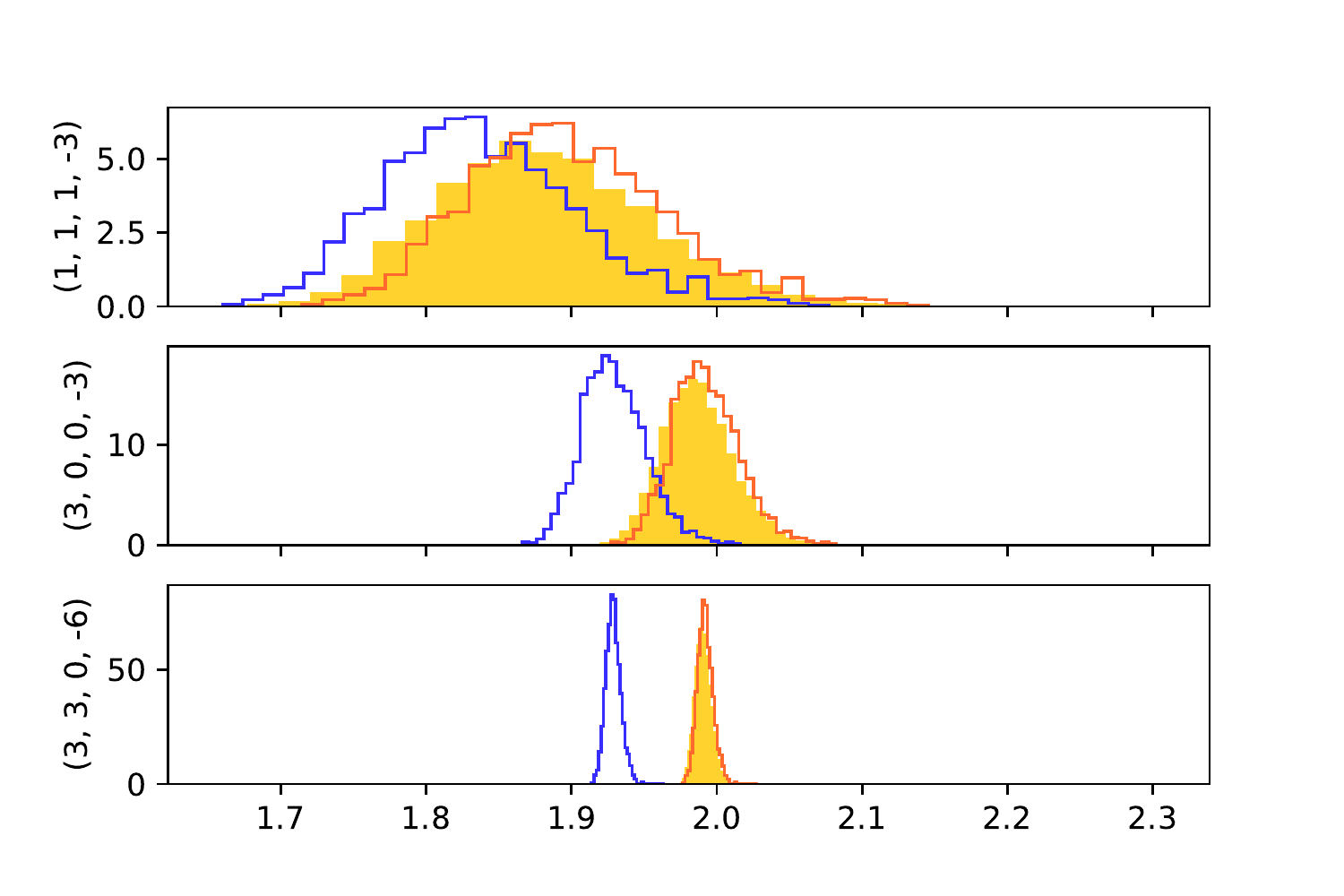}
    \caption{Histograms of $\sqrt{\mc{S}}\delta\left( \nu_\mc{S}, \lambda \right)$ - \textbf{blue}, $\frac{2}{\delta_{opt}(\mathcal{S})} \delta\left( \nu_\mc{S}, \lambda \right)$ - \textbf{orange} and $\delta\left( \lambda \right)$ - \textbf{yellow} for dimension $d=4$, Haar random gate-set with size $\mathcal{S}=8$ - \textbf{top}, symmetric Haar random gate-set with size $\mathcal{S}=2\times8$ - \textbf{bottom} and for  weights: $\lambda = (1, 1, 1, -3)$ {\color{black}of dimension $d_\lambda=35$}, $\lambda=(3, 0, 0, -3)$ {\color{black}of dimension $d_\lambda=300$}, $\lambda=(3, 3, 0, -6)$ {\color{black}of dimension $d_\lambda=1540$}.}
    \label{fig:renorm}
\end{figure}

\section*{Acknowledgments}
This research was funded by the National Science Centre, Poland under the grant OPUS: UMO-2020/37/B/ST2/02478 and supported in part by PLGrid Infrastructure. AS would like to thank Yossi Avron for inspiring discussions.

\printbibliography

@book{tropp2015,
    author = "J. A. Tropp",
    title = "An Introduction to Matrix Concentration Inequalities",
    year = "2015",
    publisher = "Now Publishers Inc",
    series = "Foundations and Trends in Machine Learning",
    isbn = "978-1601988386",
    url = "https://arxiv.org/abs/1501.01571",
    archivePrefix = {arXiv},
    eprint = {1501.01571}
}

@article{Bourgain2011,
  title={A spectral gap theorem in SU(d)},
  author={J. Bourgain and A. Gamburd},
  journal={J. Eur. Math. Soc.},
  volume = {14},
  number = {5},
  year={2012},
  pages={1455–1511},
  DOI={10.4171/JEMS/337},
  archivePrefix = {arXiv},
    eprint = {1108.6264},
}

@article{lubotzky,
   title={Hecke operators and distributing points on S2. II},
   journal={Communications on Pure and Applied Mathematics},
   volume={40},
   author={A. Lubotzky and R. Phillips and P. Sarnak},
   year={1987},
   pages={401-420},
   publisher={Wiley Subscription Services, Inc., A Wiley Company}
}

@book{Dieck,
    title     = {Representations of Compact Lie Groups},
    author    = {Br{\"o}cker, T. and Dieck, T.},
    isbn      = {9783540136781},
    lccn      = {95227046},
    series    = {Graduate Texts in Mathematics},
    url       = {https://books.google.pl/books?id=AfBzWL5bIIQC},
    year      = {2003},
    publisher = {Springer Berlin Heidelberg}
}

@article{Stroomer,
    author = "G. Benkart and M. Chakrabarti and T. Halverson and R. Leduc and C. Lee and J. Stroomer",
    title = "Tensor product representations of general linear groups and their connections with Brauer algebras",
    year = "1994",
    journal = "J. Algebra",
    volume = "166",
    pages = "529–567"
}

@misc{https://doi.org/10.48550/arxiv.2201.11774,
  doi = {10.48550/ARXIV.2201.11774},
  
  url = {https://arxiv.org/abs/2201.11774},
  
  author = {Słowik, O. and Sawicki, A.},
  
  keywords = {Quantum Physics (quant-ph), Mathematical Physics (math-ph), FOS: Physical sciences, FOS: Physical sciences},
  
  title = {Calculable lower bounds on the efficiency of universal sets of quantum gates},
  
  publisher = {arXiv},
  
  year = {2022},

    journal = {arXiv e-prints},
    archivePrefix = {arXiv},
    eprint = {2201.11774},
  
  copyright = {Creative Commons Attribution 4.0 International}
}

@book{feller1957introduction,
  title={An Introduction to Probability Theory and Its Applications},
  author={Feller, W. and Feller, V.},
  number={t. 1-2},
  isbn={9780471257097},
  lccn={57010805},
  % series={An Introduction to Probability Theory and Its Applications},
  url={https://books.google.pl/books?id=BsSwAAAAIAAJ},
  year={1957},
  publisher={Wiley}
}

@article{kesten,
 ISSN = {00029947},
 URL = {http://www.jstor.org/stable/1993160},
 author = {H. Kesten},
 journal = {Transactions of the American Mathematical Society},
 number = {2},
 pages = {336--354},
 publisher = {American Mathematical Society},
 title = {Symmetric Random Walks on Groups},
 % urldate = {2022-09-23},
 volume = {92},
 year = {1959}
}

@article{concentration,
    author = "P. Dulian and A. Sawicki",
    title = "Matrix concentration inequalities and efficiency of
    random universal sets of quantum gates",
    year = "2022",
    journal = {arXiv e-prints},
    archivePrefix = {arXiv},
    eprint = {2202.05371},
    % TODO
    % jornal = "",
    % volume = "",
    % pages = ""
}

@article{karol1,
	author = {{\.Z}yczkowski,K. and Penson,K. A. and Nechita,I. and Collins,B.},
	journal = {Journal of Mathematical Physics},
	number = {6},
	pages = {062201},
	title = {Generating random density matrices},
	volume = {52},
	year = {2011},
        archivePrefix = {arXiv},
        eprint = {1010.3570}
 }

@book{barut,
    author = "A. O. Barut and R. Rączka",
    title = "Theory of group representations and applications",
    year = "1986",
    publisher = "World Scientific Publishing Co Pte Ltd.",
    isbn = "978-9971502164"
}

@article{mezzadri,
    author = "F. Mezzadri",
    title = "How to generate random matrices from the classical compact groups",
    year = "2007",
    month = {10},
    journal = {Notices of the American Mathematical Society},
    volume = "54",
    pages = "592-604",
    archivePrefix = {arXiv},
    eprint = {math-ph/0609050}
}

@book{fulton1991representation,
  title={Representation Theory: A First Course},
  author={Fulton, W. and Harris, J.},
  isbn={9780387974958},
  lccn={96165313},
  series={Graduate Texts in Mathematics},
  url={https://books.google.pl/books?id=TuQZAQAAIAAJ},
  year={1991},
  publisher={Springer New York}
}

@article{EffLearning2020,
    author = {H. Huang and R. Kueng and J. Preskill},
    title = {Predicting Many Properties of a Quantum System from Very Few Measurements},
    
    doi = {10.1038/s41567-020-0932-7},
	url = {https://doi.org/10.1038\%2Fs41567-020-0932-7},
	year = {2020},
	month = {6},
	publisher = {Springer Science and Business Media {LLC}},
	volume = {16},
	number = {10},
	pages = {1050--1057},
	journal = {Nature Physics},
    
    % keywords = {Quantum Physics, Computer Science - Information Theory, Computer Science - Machine Learning},
    % eid = {arXiv:2002.08953},
    % pages = {arXiv:2002.08953},
    archivePrefix = {arXiv},
    eprint = {2002.08953},
    % primaryClass = {quant-ph},
    % adsurl = {https://ui.adsabs.harvard.edu/abs/2020arXiv200208953H},
    % adsnote = {Provided by the SAO/NASA Astrophysics Data System}
}

@Article{BHH2016,
  author        = {{Brand{\~a}o}, F. G.~S.~L. and {Harrow}, A. W. and {Horodecki}, M.},
  title         = {{Local Random Quantum Circuits are Approximate Polynomial-Designs}},
  journal       = {Communications in Mathematical Physics},
  year          = {2016},
  volume        = {346},
  number        = {2},
  pages         = {397-434},
  month         = {9},
  adsnote       = {Provided by the SAO/NASA Astrophysics Data System},
  adsurl        = {https://ui.adsabs.harvard.edu/abs/2016CMaPh.346..397B},
  archiveprefix = {arXiv},
  doi           = {10.1007/s00220-016-2706-8},
  eprint        = {1208.0692},
  keywords      = {Quantum Physics},
  primaryclass  = {quant-ph},
}

@article{Gambetta2014,
    author = {J. M. Epstein and A. W. Cross and E. Magesan and J. M. Gambetta},
    
    title = {Investigating the limits of randomized benchmarking protocols},
    
    doi = {10.1103/physreva.89.062321},
	url = {https://doi.org/10.1103\%2Fphysreva.89.062321},
	year = {2014},
	month = {6},
	publisher = {American Physical Society ({APS})},
	volume = {89},
	number = {6},
 	journal = {Physical Review A},

    % keywords = {03.67.Ac, 03.67.Pp, 03.65.Wj, 03.67.Lx, Quantum algorithms protocols and simulations, Quantum error correction and other methods for protection against decoherence, State reconstruction quantum tomography, Quantum computation, Quantum Physics},
    
    % year = "2014",
    % month = "Jun",
    % volume = {89},
    % number = {6},
    % eid = {062321},
    % pages = {062321},
    % doi = {10.1103/PhysRevA.89.062321},
    archivePrefix = {arXiv},
    eprint = {1308.2928},
    % primaryClass = {quant-ph},
    % adsurl = {https://ui.adsabs.harvard.edu/abs/2014PhRvA..89f2321E},
    % adsnote = {Provided by the SAO/NASA Astrophysics Data System}
}

@ARTICLE{Decoupl2013,
       author = {O. Szehr and F. Dupuis and M. Tomamichel and R. Renner},
        title = "{Decoupling with unitary approximate two-designs}",
      journal = {New Journal of Physics},
     keywords = {Quantum Physics, Mathematical Physics},
         year = "2013",
        month = "5",
       volume = {15},
       number = {5},
          eid = {053022},
        pages = {053022},
          doi = {10.1088/1367-2630/15/5/053022},
archivePrefix = {arXiv},
       eprint = {1109.4348},
 primaryClass = {quant-ph},
       adsurl = {https://ui.adsabs.harvard.edu/abs/2013NJPh...15e3022S},
      adsnote = {Provided by the SAO/NASA Astrophysics Data System}
}

@ARTICLE{StateDiscrimination2005,
       author = {Radhakrishnan, J. and Rötteler, M. and Sen, P.},
        title = "{Random measurement bases, quantum state distinction and applications to the hidden subgroup problem}",
      journal = {Algorithmica},
      volume = {55},
     keywords = {Quantum Physics},
         year = "2009",
        month = "10",
          % eid = {quant-ph/0512085},
        pages = {490–516},
archivePrefix = {arXiv},
       % eprint = {quant-ph/0512085},
 primaryClass = {quant-ph},
       adsurl = {https://ui.adsabs.harvard.edu/abs/2005quant.ph.12085S},
      adsnote = {Provided by the SAO/NASA Astrophysics Data System}
}

@ARTICLE{ChaosDesign2017,
       author = {{Roberts}, Daniel A. and {Yoshida}, Beni},
        title = "{Chaos and complexity by design}",
      journal = {Journal of High Energy Physics},
     keywords = {AdS-CFT Correspondence, Gauge-gravity correspondence, Random Systems, Holography and condensed matter physics (AdS/CMT), Quantum Physics, High Energy Physics - Theory},
         year = "2017",
        month = "Apr",
       volume = {2017},
       number = {4},
          eid = {121},
        pages = {121},
          doi = {10.1007/JHEP04(2017)121},
archivePrefix = {arXiv},
       eprint = {1610.04903},
 primaryClass = {quant-ph},
       adsurl = {https://ui.adsabs.harvard.edu/abs/2017JHEP...04..121R},
      adsnote = {Provided by the SAO/NASA Astrophysics Data System}
}

@article{HArrowHasstings2008,
author = {Hastings, M. B. and Harrow, A. W.},
title = {Classical and Quantum Tensor Product Expanders},
year = {2009},
issue_date = {March 2009},
publisher = {Rinton Press, Incorporated},
address = {Paramus, NJ},
volume = {9},
number = {3},
issn = {1533-7146},
journal = {Quantum Info. Comput.},
month = mar,
pages = {336–360},
numpages = {25},
keywords = {unitary transform, quantum computing, wavelet},
archivePrefix = {arXiv},
    eprint = {0804.0011}
}

@ARTICLE{Harrow2018,
       author = {{Harrow}, A. and {Mehraban}, S.},
        title = "{Approximate unitary $t$-designs by short random quantum circuits using nearest-neighbor and long-range gates}",
      journal = {arXiv e-prints},
     keywords = {Quantum Physics},
         year = "2018",
        month = "9",
          eid = {arXiv:1809.06957},
        % pages = {arXiv:1809.06957},
archivePrefix = {arXiv},
       eprint = {1809.06957},
 primaryClass = {quant-ph},
       adsurl = {https://ui.adsabs.harvard.edu/abs/2018arXiv180906957H},
      adsnote = {Provided by the SAO/NASA Astrophysics Data System}
}

@ARTICLE{Qhomeopathy2020,
       author = {{Haferkamp}, J. and {Montealegre-Mora}, F. and
         {Heinrich}, M. and {Eisert}, J. and {Gross}, D. and
         {Roth}, I.},
        title = "{Quantum homeopathy works: Efficient unitary designs with a system-size independent number of non-Clifford gates}",
      journal = {arXiv e-prints},
     keywords = {Quantum Physics, Mathematical Physics},
         year = 2020,
        month = feb,
          % eid = {arXiv:2002.09524},
        % pages = {arXiv:2002.09524},
archivePrefix = {arXiv},
       eprint = {2002.09524},
 primaryClass = {quant-ph},
       adsurl = {https://ui.adsabs.harvard.edu/abs/2020arXiv200209524H},
      adsnote = {Provided by the SAO/NASA Astrophysics Data System}
}

@ARTICLE{InfTransmission2009,
       author = {{Abeyesinghe}, A. and {Devetak}, I. and {Hayden}, P. and {Winter}, A.},
        title = "{The mother of all protocols: restructuring quantum information's family tree}",
      journal = {Proceedings of the Royal Society of London Series A},
     keywords = {Quantum Physics},
         year = "2009",
        month = "6",
       volume = {465},
       number = {2108},
        pages = {2537-2563},
          doi = {10.1098/rspa.2009.0202},
archivePrefix = {arXiv},
       eprint = {quant-ph/0606225},
 primaryClass = {quant-ph},
       adsurl = {https://ui.adsabs.harvard.edu/abs/2009RSPSA.465.2537A},
      adsnote = {Provided by the SAO/NASA Astrophysics Data System}
}

@article{Yoshifumi1,
  title = {Efficient Quantum Pseudorandomness with Nearly Time-Independent Hamiltonian Dynamics},
  author = {Nakata, Y. and Hirche, C. and Koashi, M. and Winter, A.},
  journal = {Phys. Rev. X},
  volume = {7},
  issue = {2},
  pages = {021006},
  numpages = {20},
  year = {2017},
  month = {Apr},
  publisher = {American Physical Society},
  doi = {10.1103/PhysRevX.7.021006},
  url = {https://link.aps.org/doi/10.1103/PhysRevX.7.021006},
  archivePrefix = {arXiv},
  eprint = {1609.07021}
}

@article{Yoshifumi2,
  title = {Quantum Circuits for Exact Unitary $t$-Designs and Applications to Higher-Order Randomized Benchmarking},
  author = {Nakata, Y. and Zhao, D. and Okuda, T. and Bannai, E. and Suzuki, Y. and Tamiya, S. and Heya, K. and Yan, Z. and Zuo, K. and Tamate, S. and Tabuchi, Y. and Nakamura, Y.},
  journal = {PRX Quantum},
  volume = {2},
  issue = {3},
  pages = {030339},
  numpages = {35},
  year = {2021},
  month = {9},
  publisher = {American Physical Society},
  doi = {10.1103/PRXQuantum.2.030339},
  url = {https://link.aps.org/doi/10.1103/PRXQuantum.2.030339},
  archivePrefix = {arXiv},
  eprint = {2102.12617}
}

@article{jonas2,
	Author = {Haferkamp, J. and Faist, P. and Kothakonda, N. B. T. and Eisert, J. and Yunger Halpern, N.},
	Journal = {Nature Physics},
	Number = {5},
	Pages = {528--532},
	Title = {Linear growth of quantum circuit complexity},
	Volume = {18},
	Year = {2022},
    archivePrefix = {arXiv},
    eprint = {2106.05305}
 }

@article{ORUC2016355,
title = {On number of partitions of an integer into a fixed number of positive integers},
journal = {Journal of Number Theory},
volume = {159},
pages = {355-369},
year = {2016},
issn = {0022-314X},
doi = {https://doi.org/10.1016/j.jnt.2015.06.023},
url = {https://www.sciencedirect.com/science/article/pii/S0022314X1500236X},
author = {A. Yavuz Oruc},
keywords = {Partition of an integer, Partition of an integer into a fixed number of parts, HardyâRamanujanâRademacher asymptotic formula, Upper bounds},
abstract = {Text
This paper focuses on the number of partitions of a positive integer n into k positive summands, where k is an integer between 1 and n. Recently some upper bounds were reported for this number in [Merca14]. Here, it is shown that these bounds are not as tight as an earlier upper bound proved in [Andrews76-1] for kâ€0.42n. A new upper bound for the number of partitions of n into k summands is given, and shown to be tighter than the upper bound in [Merca14] when k is between O(nlnâ¡n) and nâO(nlnâ¡n). It is further shown that the new upper bound is also tighter than two other upper bounds previously reported in [Andrews76-1] and [Colman82]. A generalization of this upper bound to number of partitions of n into at most k summands is also presented.
Video
For a video summary of this paper, please visit http://youtu.be/Pb6lKB3MnME.}
}

@book{Suskind2018,
        author = {{Susskind}, L.},
        title = "{Three Lectures on Complexity and Black Holes}",
      % journal = {arXiv e-prints},
        publisher = {Springer Cham},
        isbn = {978-3-030-45109-7},
     keywords = {High Energy Physics - Theory},
         year = "2020",
        % month = "Oct",
          % eid = {arXiv:1810.11563},
        % pages = {arXiv:1810.11563},
archivePrefix = {arXiv},
       eprint = {1810.11563},
 primaryClass = {hep-th},
       adsurl = {https://ui.adsabs.harvard.edu/abs/2018arXiv181011563S},
      adsnote = {Provided by the SAO/NASA Astrophysics Data System}
}

@article{MO1,
  doi = {10.48550/ARXIV.2205.09734},
  
  url = {https://arxiv.org/abs/2205.09734},
  
  author = {Oszmaniec, M. and Horodecki, M. and Hunter-Jones, N.},
  
  keywords = {Quantum Physics (quant-ph), Strongly Correlated Electrons (cond-mat.str-el), High Energy Physics - Theory (hep-th), Mathematical Physics (math-ph), FOS: Physical sciences, FOS: Physical sciences},
  
  title = {Saturation and recurrence of quantum complexity in random quantum circuits},
  
  publisher = {arXiv},

  journal = {arXiv e-prints},
  archivePrefix = {arXiv},
  eprint = {2205.09734},
  
  year = {2022},
  
  copyright = {Creative Commons Attribution 4.0 International}
}

@ARTICLE{nets,
  author={Oszmaniec, M. and Sawicki, A. and Horodecki, M.},
  journal={IEEE Transactions on Information Theory}, 
  title={Epsilon-Nets, Unitary Designs, and Random Quantum Circuits}, 
  year={2022},
  volume={68},
  number={2},
  pages={989-1015},
  doi={10.1109/TIT.2021.3128110},
    archivePrefix = {arXiv},
    eprint = {2007.10885}
}

@article{Leone2021quantumchaosis,
  doi = {10.22331/q-2021-05-04-453},
  url = {https://doi.org/10.22331/q-2021-05-04-453},
  title = {Quantum {C}haos is {Q}uantum},
  author = {Leone, L. and Oliviero, S. F. E. and Zhou, Y. and Hamma, A.},
  journal = {{Quantum}},
  issn = {2521-327X},
  publisher = {{Verein zur F{\"{o}}rderung des Open Access Publizierens in den Quantenwissenschaften}},
  volume = {5},
  pages = {453},
  month = may,
  year = {2021},
  archivePrefix = {arXiv},
  eprint = {2102.08406}
}

@article{OLIVIERO2021127721,
title = {Transitions in entanglement complexity in random quantum circuits by measurements},
journal = {Physics Letters A},
volume = {418},
pages = {127721},
year = {2021},
issn = {0375-9601},
doi = {https://doi.org/10.1016/j.physleta.2021.127721},
url = {https://www.sciencedirect.com/science/article/pii/S0375960121005855},
author = {S. F. E. Oliviero and L. Leone and A. Hamma},
keywords = {Quantum information, Doped quantum circuit, Measurements},
abstract = {Random Clifford circuits doped with non Clifford gates exhibit transitions to universal entanglement spectrum statistics [1] and quantum chaotic behavior. In [2] we proved that the injection of Î©(n) non Clifford gates into a n-qubit Clifford circuit drives the transition towards the universal value of the purity fluctuations. In this paper, we show that doping a Clifford circuit with Î©(n) single qubit non Clifford measurements is both necessary and sufficient to drive the transition to universal fluctuations of the purity.},
archivePrefix = {arXiv},
eprint = {2103.07481}
}

@article{Sawicki22,
  title = {Universality verification for a set of quantum gates},
  author = {Sawicki, A. and Mattioli, L. and Zimbor\'as, Z.},
  journal = {Phys. Rev. A},
  volume = {105},
  issue = {5},
  pages = {052602},
  numpages = {6},
  year = {2022},
  month = {5},
  publisher = {American Physical Society},
  doi = {10.1103/PhysRevA.105.052602},
  url = {https://link.aps.org/doi/10.1103/PhysRevA.105.052602},
  archivePrefix = {arXiv},
    eprint = {2111.03862},
}

@article{Haferkamp2022randomquantum,
  doi = {10.22331/q-2022-09-08-795},
  url = {https://doi.org/10.22331/q-2022-09-08-795},
  title = {Random quantum circuits are approximate unitary {$t$}-designs in depth {$O\left(nt^{5+o(1)}\right)$}},
  author = {Haferkamp, J.},
  journal = {{Quantum}},
  issn = {2521-327X},
  publisher = {{Verein zur F{\"{o}}rderung des Open Access Publizierens in den Quantenwissenschaften}},
  volume = {6},
  pages = {795},
  month = sep,
  year = {2022},
}

@article{PhysRevA.104.022417,
  title = {Improved spectral gaps for random quantum circuits: Large local dimensions and all-to-all interactions},
  author = {Haferkamp, J. and Hunter-Jones, N.},
  journal = {Phys. Rev. A},
  volume = {104},
  issue = {2},
  pages = {022417},
  numpages = {18},
  year = {2021},
  month = {8},
  publisher = {American Physical Society},
  doi = {10.1103/PhysRevA.104.022417},
  url = {https://link.aps.org/doi/10.1103/PhysRevA.104.022417},
    archivePrefix = {arXiv},
    eprint = {2012.05259},
}

@article{szarek2001Chapter8,
    title={Chapter 8 - Local Operator Theory, Random Matrices and Banach Spaces},
    author={K. R. Davidson and S. J. Szarek},
    year={2001},
    month = {3},
    volume = {1},
    isbn = {9780444828422},
    journal = {Handbook of the Geometry of Banach Spaces},
    doi = {10.1016/S1874-5849(01)80010-3}
}

@book{SzarekBook,
      author        = "Aubrun, G. and Szarek, S. J",
      title         = "{Alice and Bob meet Banach: the interface of asymptotic geometric analysis and quantum information theory}",
      publisher     = "American Mathematical Society",
      address       = "Providence, RI",
      series        = "Mathematical surveys and monographs",
      year          = "2017",
      url           = "https://cds.cern.ch/record/2296008",
      isbn          = "1470434687"
}

@book{HH09,
    author="Harrow, A. W. and Low, R. A.",
    editor="Dinur, Irit and Jansen, Klaus and Naor, Joseph and Rolim, Jos{\'e}",
    title="Efficient Quantum Tensor Product Expanders and k-Designs",
    booktitle="Approximation, Randomization, and Combinatorial Optimization. Algorithms and Techniques",
    year="2009",
    publisher="Springer Berlin Heidelberg",
    address="Berlin, Heidelberg",
    pages="548--561",
    abstract="Quantum expanders are a quantum analogue of expanders, and k-tensor product expanders are a generalisation to graphs that randomise k correlated walkers. Here we give an efficient construction of constant-degree, constant-gap quantum k-tensor product expanders. The key ingredients are an efficient classical tensor product expander and the quantum Fourier transform. Our construction works whenever k{\thinspace}={\thinspace}O(n/logn), where n is the number of qubits. An immediate corollary of this result is an efficient construction of an approximate unitary k-design, which is a quantum analogue of an approximate k-wise independent function, on n qubits for any k{\thinspace}={\thinspace}O(n/logn). Previously, no efficient constructions were known for k{\thinspace}>{\thinspace}2, while state designs, of which unitary designs are a generalisation, were constructed efficiently in [1].",
    isbn="978-3-642-03685-9",
    archivePrefix = {arXiv},
    eprint = {0811.2597},
}

@article{goldstein2016,
  doi = {10.48550/ARXIV.1603.03076},
  
  url = {https://arxiv.org/abs/1603.03076},
  
  author = {Goldstein, D. and Guralnick, R. and Stong, R.},
  
  keywords = {Representation Theory (math.RT), FOS: Mathematics, FOS: Mathematics},
  
  title = {A lower bound for the dimension of a highest weight module},
  
  %TODO
  journal = {Representation Theory of the American Mathematical Society 21(20)},
  
  publisher = {arXiv},
  
  year = {2016},
  
  copyright = {arXiv.org perpetual, non-exclusive license},

    archivePrefix = {arXiv},
    eprint = {1603.03076},

}

@misc{naud2016,
    title={Hecke operators and spectral gaps on compact lie groups},
    author={F. Naud},
    month={5},
    year={2016}
}
\end{document}